\def\theequation{\arabic{section}.\arabic{equation}}
\newcommand{\be}{\begin{equation}}
\newcommand{\en}{\end{equation}}
\newcommand{\bea}{\begin{eqnarray}}
\newcommand{\ena}{\end{eqnarray}}
\newcommand{\beano}{\begin{eqnarray*}}
\newcommand{\enano}{\end{eqnarray*}}
\newcommand{\bee}{\begin{enumerate}}
\newcommand{\ene}{\end{enumerate}}
\newcommand{\mc}{\mathcal}
\newcommand{\Sc}{{\cal S}}
\newcommand{\E}{{\cal E}}
\newcommand{\V}{{\cal V}}
\newcommand{\F}{{\cal F}}
\newcommand{\1}{1 \!\! 1}
\newcommand{\vno}{\varphi_n^{(0)}}
\newcommand{\pno}{\psi_n^{(0)}}
\newcommand{\pmo}{\psi_m^{(0)}}
\newcommand{\vnuno}{\varphi_n^{(1)}}
\newcommand{\pnuno}{\psi_n^{(1)}}
\newcommand{\pmuno}{\psi_m^{(1)}}
\def\scp<#1>{\left\langle #1 \right\rangle}
\def\scpp<#1>{\left\langle\!\left\langle #1\right\rangle\!\right\rangle}
\def\scppp<#1>{\left\langle\!\!\left\langle #1\right\rangle\!\!\right\rangle}
\newcommand{\supzero}{^{(0)}}
\newcommand{\supone}{^{(1)}}
\newcommand{\hop}{t_{h}}
\newcommand{\Hil}{\mc H}
\newtheorem{thm}{Theorem}
\newtheorem{cor}[thm]{Corollary}
\newtheorem{lemma}[thm]{Lemma}
\newtheorem{prop}[thm]{Proposition}
\newtheorem{defn}[thm]{Definition}
\newenvironment{proof}{\noindent {\bf Proof --}}{\hfill$\square$ \vspace{3mm}\endtrivlist}
\begin{document}

\thispagestyle{empty}

\vspace*{2cm}

\begin{center}
{\Large \bf A chain of solvable non-Hermitian Hamiltonians constructed by a series of metric operators}   \vspace{2cm}\\

{\large Fabio Bagarello}\\
  Dipartimento di  Ingegneria, \\
Universit\`a di Palermo,\\ I-90128  Palermo, Italy, e\\
I.N.F.N., Sezione di Napoli, Italy\\
e-mail: fabio.bagarello@unipa.it\\
home page: www1.unipa.it/fabio.bagarello

\vspace{2mm}


\vspace{2mm}

{\large Naomichi Hatano}\\
Institute of Industrial Science, The University of Tokyo\\ Kashiwanoha, Kashiwa, Chiba 277-8574, Japan\\
e-mail: hatano@iis.u-tokyo.ac.jp\\
home page: hatano-lab.iis.u-tokyo.ac.jp/hatano

\end{center}

\vspace*{1cm}

\begin{abstract}
\noindent  We show how, given a non-Hermitian Hamiltonian $H$, we can generate new non-Hermitian operators sequentially, producing a virtually infinite chain of non-Hermitian Hamiltonians which are isospectral to $H$ and $H^\dagger$ and whose eigenvectors we can easily deduce in an almost automatic way; no ingredients are necessary other than $H$ and its eigensystem. To set off the chain and keep it running, we use, for the first time in our knowledge, a series of maps all connected to different metric operators. We show how the procedure works in several physically relevant systems. In particular, we apply our method to various versions of the Hatano-Nelson model and to some PT-symmetric Hamiltonians.
\end{abstract}

\vspace{2cm}


\vfill


\newpage

\section{Introduction}\label{sectintr}

In quantum mechanics, quite often the first step in understanding a physical system $\Sc$ consists of deducing the eigenstates and the eigenvalues of the Hamiltonian that drives the time evolution of $\Sc$. This is because the eigenvalues give us the stationary states of the system, while the eigenstates usually produce a basis of the Hilbert space in which $\Sc$ is defined. However, finding eigenvalues and eigenvectors of some Hamiltonian $H$ may not be so simple. For this reason, along the years many techniques have been proposed to simplify this search. 

There exist several lines of research in which the main interest is to deduce new {\em solvable} Hamiltonians (i.e., Hamiltonians whose eigenvalues and eigenvectors can be found in some explicit way) out of some {\em seed} solvable Hamiltonian.
The role of intertwining operators and of ladder operators has proved to be useful~\cite{intop1,intop2,intop3,bagintop,fern} in finding a series of solvable Hamiltonians.
Supersymmetric quantum mechanics (SUSY-QM)~\cite{jun,coop,suku,bagI} is another line of research of this type.  
The factorizability of the Hamiltonian can be also useful in this perspective; see Ref.~\cite{dong} and references therein. 

In most of the approaches cited above, one generates new Hamiltonians with eigenvalues and eigenvectors that are easy to deduce, out of a {\em seed} Hamiltonian for which these objects are known. For instance, if $H_1$ is such a Hamiltonian, so that all the quantities in the eigenvalue equation $H_1 f_n^{(1)}=E_n f_n^{(1)}$ are known, then it may happen that a second Hamiltonian $H_2$ exists, together with a so-called intertwining operator obeying $UH_1=H_2U$. Under these conditions, if $f_n^{(1)}$ does not belong to the kernel of $U$, which is not required to be invertible, then $f_n^{(2)}=Uf_n^{(1)}$ is an eigenstate of $H_2$ with the eigenvalue $E_n$, because $H_2f_n^{(2)} = H_2 Uf_n^{(1)}=UH_1f_n^{(1)} = E_nUf_n^{(1)}= E_nf_n^{(2)}$. If this is possible for every  eigenfunction $f_n^{(1)}$, then $H_1$ and $H_2$ are isospectral to each other. If $U$ is invertible, $H_1$ and $H_2$ are also {\em similar} as in $H_2=UH_1U^{-1}$. If $U$ is invertible but not unitary, it can easily happen that $H_1$ is Hermitian while $H_2$ is not. SUSY-QM works in a slightly different way, relating the eigenvectors and the eigenvalues of two factorized Hamiltonians $H=A^\dagger A$ and $H_\mathrm{susy}=AA^\dagger$, where $A$ and $A^\dagger$ are not required to be ladder operators. The way in which SUSY is useful is that, knowing the details of $H$, one can find many properties of $H_\mathrm{susy}$.

In the present paper we propose a general strategy of producing a virtually infinite chain of not necessarily Hermitian Hamiltonians which are all isospectral  (note that in the present paper, the terms {\em self-adjoint} and {\em Hermitian} will be used as synonymous),  or two related (again, virtually infinite) chains of non-Hermitian Hamiltonians whose discrete spectra behave as follows: they coincide for all Hamiltonians in the same chain, while they are complex conjugate of those of the Hamiltonians of the other chain. In both cases, however, their eigenvectors are all related to each other, and  can be easily found. Our strategy starts from a single seed Hamiltonian $H$, not necessarily Hermitian, with the only assumption that its eigenvalues have multiplicity one, namely no degeneracy. 
To simplify the mathematical aspects of the present paper without losing the physical relevance of our results, we work here only with finite-dimensional Hilbert spaces. As is well known, this is a standard way to avoid dealing with unbounded operators, which would require much more mathematical care; we will consider the issue in a future paper. 

Let us recall how and when, out of a given $H$, we can construct two biorthogonal bases of eigenvectors of $H$ and $H^\dagger$, namely the Dirac adjoint of $H$.
Note that $H$ is generally non-Hermitian here, and hence the so-called right- and left-eigenvectors are not Hermitian conjugate but biorthogonal to each other.
We will use the biorthogonal bases to construct two pieces of self-adjoint and positive operators, one being the inverse of the other, which allow us to define two new types of scalar products and two related adjoint maps. This itself is in fact well known in the literature, where the adjoint maps are strictly connected to {\em metric operators}. What is not known in our knowledge is that we can use the metric operators to generate a series of new metric operators sequentially. Once $H$ and its eigenvectors and eigenvalues are known, the procedure starts producing new isospectral, generally non-Hermitian Hamiltonians. We describe this in Sec.~\ref{sect2}, considering both real and complex eigenvalues of $H$.

Since the publication of Refs.~\cite{Hatano96,BB98}, there is much interest in Hamiltonians which are {\bf not} self-adjoint but are physically motivated, and admit real eigenvalues and a transition to a phase of complex eigenvalues. 
Historically, non-Hermitian Hamiltonians were first introduced in nuclear theory in order to describe decay processes~\cite{Gamow28,Feshbach58,Peierls59}.
In the second half of 1990's, two epoch-making non-Hermitian Hamiltonians were introduced; namely a generalized Anderson model called the Hatano-Nelson model~\cite{Hatano96,Hatano97,Hatano98} and the $PT$-symmetric systems~\cite{BB98,benbook,ben}.
The former Hamiltonian has non-Hermiticity in off-diagonal elements, while the latter has it in diagonal ones.
The former was originally introduced as an effective model mapped from a statistical-physical model of type-II superconductors, but later used to analyze the Anderson localization.
It was revealed that the localized eigenstates have real eigenvalues while the extended ones have complex eigenvalues.
The latter was originally introduced to replace the Hermiticity as a condition to produce real-valued energy eigenvalues, but later triggered various macroscopic and optic experiments with device applications in sight.
The real eigenvalues signify stationary flows of probability from a source to a sink, while the complex eigenvalues indicate exponential growth and decay.
The last couple of years saw an explosive boom of studies on non-Hermitian systems of a wide variety, including non-Hermitian topological systems~\cite{Gong18,Kawabata19} and $PT$-symmetric Floquet theory~\cite{Turker18,Harter20}.

In light of this situation, it is of importance to classify non-Hermitian Hamiltonians.
The present study implies that there are an infinite number of isospectral non-Hermitian Hamiltonians associated with one single Hamiltonian.
We will show in Sec.~\ref{sectexe} various examples for which the newly generated operators are almost always different from the previous ones. After a preliminary application, meant to show the non-triviality of our procedure, we apply our strategy to several versions of the Hatano-Nelson model \cite{Hatano96,Hatano97}, and in particular to an open clean system with asymmetric hopping, and to its closed version. We also discuss what happens when impurities are considered. A PT-symmetric model~\cite{BB98,benbook,ben} is also discussed. In all these cases, we clearly see that the procedure is non-trivial and we discuss its physical meaning. 
Our conclusions are given in Section \ref{sectconcl}.


\section{The general framework}\label{sect2}

\subsection{Starting point: the Dirac adjoint}

We consider a system residing in an $N$-dimensional Hilbert space $\Hil$, where $N<\infty$, with the scalar product $\scp<.,.>$, which satisfies the standard equalities 
\begin{align}
\overline{\scp<f,g>}=\scp<g,f>,\qquad
\scp<f,\alpha g+\beta h>=\alpha\scp<f,g>+\beta\scp<f,h>, \qquad
\scp<f,f>\ge 0, 
\end{align}
for all $f,g,h\in\Hil$ and for all $\alpha, \beta\in\mathbb{C}$, the last of which yields the corresponding norm $\norm{f}:=\sqrt{\scp<f,f>}$.

Let a non-self-adjoint Hamiltonian $H$ with $H\neq H^\dagger$ drive the system.
Here $H^\dagger$ is the Dirac adjoint of $H$, which is defined by $\scp<.,.>$ as in $\scp<Hf,g>=\scp<f,H^\dagger g>$, $^\forall f,g\in\Hil$. 
Note that this adjoint is generally an involution because $\qty(H^\dag)^\dag=H$.
For concreteness, if the scalar product $\scp<.,.>$ is the ordinary product in $\Hil=\mathbb{C}^N$, which is $\scp<f,g>=\sum_{l=1}^{N}\overline{f_l}\,g_l$ with obvious notation, then the adjoint $H^\dagger$ is just the transpose and complex conjugate of the matrix representing $H$. We denote the set of all the bounded operators on $\Hil$ by $B(\Hil)$. Notice that, since $\Hil$ is finite-dimensional, $B(\Hil)$ is nothing but the set of all the matrices on $\Hil$.

Hereafter we will generate various types of adjoint map by generating various types of scalar product, starting with the simplest scalar product $\scp<.,.>$ and the Dirac adjoint.
Our only assumptions all along the present paper are that $H$ is diagonalizable and that all the eigenvalues have multiplicity one:
\begin{align}
H\vno=E_n\vno, 
\label{21}\end{align}
$n=1,2,\ldots,N$, where $E_n\neq E_m$ if $n\neq m$. 
(Note that the eigenvalues can be real or complex.)
Consequently, the set of eigenstates $\F_\varphi\supzero=\{\vno\}$ is a basis for $\Hil$. 

Since $H\neq H^\dagger$, the basis $\F_\varphi\supzero$ needs not be an orthogonal one. 
However, it uniquely admits another basis $\F_\psi\supzero=\{\pno\}$, which satisfies the biorthogonality $\scp<\vno,\pmo>=\delta_{n,m}$ and 
\begin{align}
\label{eq2.3}
f=\sum_{n=1}^{N}\scp<\vno,f>\pno = \sum_{n=1}^{N}\scp<\pno,f>\vno
\end{align}
for all $f\in\Hil$; see Ref.~\cite{chri}.  
We will often rewrite \eqref{eq2.3} as 
\begin{align}
\sum_{n=1}^{N}\dyad{\vno}{\pno}=\sum_{n=1}^{N}\dyad{\pno}{\vno}=\1,
\label{22}\end{align}
where $\dyad{f}{g}$ is an operator defined by $(\dyad{f}{g})h:=\scp<g,h>f$, $^\forall f,g,h\in\Hil$. 
Formula~\eqref{22} is known as a {\em resolution of the identity}.
It is known that $\F_\psi\supzero$ is the set of eigenstates of $H^\dagger$:
\begin{align}
H^\dagger\pno=\overline{E}_{n}\,\pno.
\label{23}\end{align}

The mutually biorthogonal sets $\F_\varphi\supzero$ and  $\F_\psi\supzero$ define two operators which will play a fundamental role in our construction below:
\begin{align}
S_\varphi\supzero:=\sum_{n=1}^{N}\dyad{\vno}{\vno},
\qquad
S_\psi\supzero:=\sum_{n=1}^{N}\dyad{\pno}{\pno}.
\label{24}\end{align}
They are both self-adjoint because \textit{e.g.}\
\begin{align}
\scp<f,\qty(S_\varphi\supzero)^\dagger g>&=\scp<S_\varphi\supzero f,g>
=\sum_{n=1}^N\scp<\scp<\vno,f>\vno,g>
=\sum_{n=1}^N\overline{\scp<\vno,f>}\scp<\vno,g>
\nonumber\\
&=\sum_{n=1}^N\scp<f,\vno>\scp<\vno,g>
=\sum_{n=1}^N\scp<f,\scp<\vno,g>\vno>
=\scp<f,S_\varphi\supzero g>,\quad {}^\forall f,g\in\Hil ,
\end{align}
and strictly positive because \textit{e.g.}\
\begin{align}
\scp<f,S_\varphi\supzero f>=\sum_{n=1}^N \scp<f,\scp<\vno,f>\vno>
=\sum_{n=1}^N \scp<\vno,f>\scp<f,\vno>=\sum_{n=1}^N \abs{\scp<\vno,f>}^2>0,
\end{align}
for all $f\neq0$. 
Moreover, owing to the biorthogonality of the families, we have
\begin{align}
S_\varphi\supzero\pno=\vno, 
\qquad 
S_\psi\supzero\vno=\pno, 
\qquad\mbox{and}\qquad 
S_\varphi\supzero=\qty(S_\psi\supzero)^{-1}.
\label{25}
\end{align}

\subsection{First iteration generating two new adjoints}
\label{subsec22}

We can use the operators~\eqref{24} to define two new scalar products on $\Hil$,
\begin{align}
\scp<f,g>_{\varphi(0)}:=\scp<S_\varphi\supzero f,g>, 
\qquad 
\scp<f,g>_{\psi(0)}:=\scp<S_\psi\supzero f,g>,
\label{26}\end{align}
and the two corresponding adjoint maps, 
\begin{align}
\scp<Xf,g>_{\varphi(0)}=\scp<f,X^{\flat_0}g>_{\varphi(0)}, 
\qquad 
\scp<Xf,g>_{\psi(0)}=\scp<f,X^{\sharp_0}g>_{\psi(0)},
\quad {}^\forall f,g\in\Hil.
\label{27}\end{align}
The fact that $\scp<f,g>_{\varphi(0)}$ and $\scp<f,g>_{\psi(0)}$ are both scalar products that satisfy the standard equalities follows from the properties of $S_\varphi\supzero$ and $S_\psi\supzero$; for instance,
\begin{align}
\overline{\scp<f,g>_{\varphi(0)}} =\overline{\scp<S_\varphi\supzero f,g>}=\scp<g,S_\varphi\supzero f>=\scp<S_\varphi\supzero g,f>=\scp<g,f>_{\varphi(0)}.
\end{align}
The norms
\begin{align}
\norm{f}_{\varphi(0)}:=\sqrt{\scp<f,f>_{\varphi(0)}},
\qquad
\norm{f}_{\psi(0)}:=\sqrt{\scp<f,f>_{\psi(0)}}
\end{align}
are both equivalent to $\norm{f}$ because \textit{e.g.}\
\begin{align}
\|f\|_{\varphi(0)}^2=\langle S_\varphi\supzero f,f\rangle=\langle (S_\varphi\supzero)^{1/2} f,(S_\varphi\supzero)^{1/2}f\rangle=\|(S_\varphi\supzero)^{1/2}f\|^2\leq \|(S_\varphi\supzero)^{1/2}\|^2\|f\|^2;
\end{align}
in a similar way we can find a reverse inequality between $\norm{f}_{\varphi(0)}$ and $\norm{f}$, and hence $\|f\|_{\varphi(0)}=\|(S_\varphi\supzero)^{1/2}\|\|f\|\propto\|f\|$.

On the other hand, simple computations show that the two adjoint maps in Eq.~\eqref{27} are generally different operations. 
We find \textit{e.g.}\ from
\begin{align}
&\scp<f,S_\psi\supzero X^\dagger S_\varphi\supzero g>_{\varphi(0)}
=\scp<S_\varphi\supzero XS_\psi\supzero S_\varphi\supzero f, g>
=\scp<S_\varphi\supzero Xf, g>=\scp<Xf,g>_{\varphi(0)}=\scp<f,X^{\flat_0}g>_{\varphi(0)}
\end{align}
for all operator $X$ on $\Hil$, namely $^\forall X\in B(\Hil)$, and  $^\forall f,g\in \Hil$, that
\begin{align}
X^{\flat_0}=S_\psi\supzero X^\dagger S_\varphi\supzero, \qquad X^{\sharp_0}=S_\varphi\supzero X^\dagger S_\psi\supzero,
\label{28}\end{align}
which together  imply the following relations:
\begin{align}
&X^{\flat_0}=\qty(S_\psi\supzero)^2 X^{\sharp_0} \qty(S_\varphi\supzero)^2,
\\
&\qty(X^{\flat_0})^\dagger=\qty(X^\dagger)^{\sharp_0}, 
\qquad 
\qty(X^{\sharp_0})^\dagger=\qty(X^\dagger)^{\flat_0}.
\label{29}\end{align}

Formula \eqref{29} shows that the ordering of the adjoint maps is not commutative. 
However, there is a specific case in which the ordering becomes irrelevant, as the following lemma shows:
\begin{lemma}\label{lemma1}
	The following statements are equivalent: 
	(i) $(X^{\flat_0})^\dagger=\qty(X^\dagger)^{\flat_0}$; 
	(ii) $\qty[X,\qty(S_\varphi\supzero)^2]=0$; 
	(iii) $\qty[X,\qty(S_\psi\supzero)^2]=0$; 
	(iv) $\qty(X^{\sharp_0})^\dagger=\qty(X^\dagger)^{\sharp_0}$.
\end{lemma}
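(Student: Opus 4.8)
The plan is to show that all four statements collapse to the single operator identity $\qty[X,\qty(S_\varphi\supzero)^2]=0$, the whole argument resting on the fundamental relation $S_\varphi\supzero=\qty(S_\psi\supzero)^{-1}$ from \eqref{25} together with the self-adjointness of $S_\varphi\supzero$ and $S_\psi\supzero$. First I would dispose of the equivalence (i)$\,\Leftrightarrow\,$(iv) essentially for free, using the two relations in \eqref{29}. Substituting $\qty(X^{\flat_0})^\dagger=\qty(X^\dagger)^{\sharp_0}$ turns (i) into $\qty(X^\dagger)^{\sharp_0}=\qty(X^\dagger)^{\flat_0}$, while substituting $\qty(X^{\sharp_0})^\dagger=\qty(X^\dagger)^{\flat_0}$ turns (iv) into $\qty(X^\dagger)^{\flat_0}=\qty(X^\dagger)^{\sharp_0}$; these are literally the same equation, so (i) and (iv) are equivalent with no further work. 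Hence it suffices to relate this common condition to the commutator statements.

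Next I would make the condition explicit. Using the definitions \eqref{28} and $\qty(X^\dagger)^\dagger=X$, a direct computation gives $\qty(X^\dagger)^{\flat_0}=S_\psi\supzero X S_\varphi\supzero$ and $\qty(X^\dagger)^{\sharp_0}=S_\varphi\supzero X S_\psi\supzero$, so the common condition reads $S_\psi\supzero X S_\varphi\supzero=S_\varphi\supzero X S_\psi\supzero$. Now I insert $S_\varphi\supzero=\qty(S_\psi\supzero)^{-1}$ and multiply the resulting identity on the left and on the right by $S_\psi\supzero$; the inverse factors cancel and I am left precisely with $\qty[X,\qty(S_\psi\supzero)^2]=0$, which is (iii). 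Inserting instead $S_\psi\supzero=\qty(S_\varphi\supzero)^{-1}$ and multiplying by $S_\varphi\supzero$ produces (ii) in exactly the same way. This establishes (i)$\,\Leftrightarrow\,$(ii) and (i)$\,\Leftrightarrow\,$(iii).

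Finally, the equivalence (ii)$\,\Leftrightarrow\,$(iii) is immediate on its own: since $\qty(S_\varphi\supzero)^2=\qty(S_\psi\supzero)^{-2}$ by \eqref{25}, and since an operator commutes with an invertible operator if and only if it commutes with its inverse (invertibility being guaranteed by the strict positivity of $S_\psi\supzero$), the two commutators vanish together. Assembling the pieces closes the chain of equivalences.

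I do not expect a genuine obstacle here; the statement reduces to a short chain of direct manipulations. The only points requiring a little care are taking the Dirac adjoint of the products in \eqref{28} correctly, that is reversing the order of the factors and using that $S_\varphi\supzero$ and $S_\psi\supzero$ are self-adjoint, and recognizing that it is exactly the relation $S_\varphi\supzero=\qty(S_\psi\supzero)^{-1}$ that lets the conjugation identity $SXS^{-1}=S^{-1}XS$ collapse into a true commutator. Without that inverse relationship the reduction would simply not go through, so I would flag it as the conceptual heart of the lemma rather than any computational difficulty.
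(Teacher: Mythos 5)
Your proof is correct. The paper itself omits the argument, stating only that ``the proof is easy and will not be given here,'' so there is nothing to compare against; your reduction --- using \eqref{29} to collapse (i) and (iv) to the single identity $S_\psi\supzero X S_\varphi\supzero=S_\varphi\supzero X S_\psi\supzero$, then using $S_\varphi\supzero=\qty(S_\psi\supzero)^{-1}$ from \eqref{25} to turn that identity into the commutator conditions (ii) and (iii) --- is exactly the direct computation the authors evidently had in mind, and every step is a genuine equivalence since you only multiply by the invertible self-adjoint operators $S_\varphi\supzero$ and $S_\psi\supzero$.
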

The proof is easy and will not be given here. 
Lemma~1 implies the specialty of the pseudo-Hermiticity~\cite{mosta}, which we will come back to later.

Let us note that the eigenstates $\vno$ and $\pno$ of $H$ and $H^\dagger$ are also the eigenstates of new Hamiltonians constructed out of $H$, using these alternative adjoints.
More in details, we have
\begin{align}
\qty(H^{\sharp_0})^\dagger\pno=E_n\pno, 
\qquad
H^{\sharp_0}\vno=\overline{E}_{n}\,\vno,
\label{29bis}\end{align} 
for all $n$. 
This means that the pair of $H$ and $\qty(H^{\sharp_0})^\dagger$ are isospectral to each other as well as the pair of $H^\dagger$ and $H^{\sharp_0}$ are.

We can attribute this isospectrality to the fact that each pair of Hamiltonians is related by a simple intertwining relation that preserves the eigenvalues~\cite{intop1,intop2,intop3,bagintop}, at least in finite-dimensional Hilbert spaces. 
For instance, if we multiply from the left of the second equality in \eqref{28}  by $S_\psi\supzero$ for $X=H$ and  use the last identity in \eqref{25}, we find the intertwining relations
\begin{align}\label{218-1}
S_\psi\supzero H^{\sharp_0}=H^\dagger S_\psi\supzero,
\qquad
\mbox{or equivalently}
\quad
H^{\sharp_0}S_\varphi\supzero=S_\varphi\supzero H^\dagger,
\end{align}
which is indeed followed by
\begin{align}
H^{\sharp_0}\vno=
\qty(S_\psi\supzero)^{-1} H^\dagger S_\psi\supzero\vno=S_\varphi\supzero H^\dagger\pno=\overline{E}_{n}\,S_\varphi\supzero\pno=\overline{E}_{n}\,\vno.
\end{align}
We also have the following similar intertwining relations:
\begin{align}
\qty(H^{\sharp_0})^\dag S_\psi\supzero=S_\psi\supzero H,
&\qquad
S_\varphi\supzero\qty(H^{\sharp_0})^\dag =HS_\varphi\supzero,
\\
S_\varphi\supzero H^{\flat_0}=H^\dag S_\varphi\supzero,
&\qquad
H^{\flat_0}S_\psi\supzero=S_\psi\supzero H^\dag ,
\\\label{222-1}
\qty(H^{\flat_0})^\dag S_\varphi\supzero=S_\varphi\supzero H,
&\qquad
S_\psi\supzero\qty(H^{\flat_0})^\dag =HS_\psi\supzero,
\end{align}
the last two of which implies that $H$ and $\qty(H^{\flat_0})^\dag$ are isospectral to each other as well as $H^\dag$ and $H^{\flat_0}$ are.

In fact, the Hamiltonian is called pseudo-Hermitian~\cite{mosta} and all its eigenvalues are real if it satisfies the specific intertwining relation
\begin{align}
S_\psi\supzero H=H^\dag S_\psi\supzero.
\end{align}
This implies that the pseudo-Hermiticity is nothing but the self-adjointness with respect to $\sharp_0$ as in 
\begin{align}\label{223-1}
H=H^{\sharp_0}.
\end{align} 
We will probe more into this in the next subsection.

\vspace{0.5\baselineskip}

\noindent
{\bf Remark--} Notice that if $H=H^\dagger$, then $\F_\varphi\supzero$ is already an orthonormal basis, so that $\vno=\pno$ for all $n$.
Therefore $S_\varphi\supzero=S_\psi\supzero=\1$, and hence all the scalar products and the adjoint maps collapse to the standard ones.
We will also show in Subsect.~\ref{sec3.2.2} below an instance of non-Hermitian Hamiltonian for which $\F_\varphi\supzero=\F_\psi\supzero$ and $S_\varphi\supzero=S_\psi\supzero=\1$ despite that the eigenvalues are mostly complex.

\vspace{0.5\baselineskip}

Let us summarize what we have obtained throughout the present subsection in a slightly different presentation.
Out of the two Hamiltonians, one and its $\dag$-adjoint, 
\begin{align}
H\vno&=E_n\vno, 
\qquad
H^\dagger\pno=\overline{E}_{n}\,\pno, 
\end{align}
we generated two new Hamiltonians, one new adjoint of $H$ ($\sharp_0$) and one of $H^\dag$ ($\flat_0$);
the former is isospectral to $H^\dag$ with the eigenvectors $\vno$ and the latter is to $H$ with $\pno$:
\begin{align}
H^{\sharp_0}\vno&=\overline{E}_{n}\,\vno,
\qquad
\qty(H^\dagger)^{\flat_0} \pno=E_n\pno.
\label{223-2}\end{align} 
We will show in Subsect.~\ref{subsec24} that $H^{\flat_0}$ and $\qty(H^\dagger)^{\sharp_0}$ are also isospectral to $H^\dag$ and $H$, respectively, with new set of eigenvectors.
We will then open up a door to a whole new series of non-Hermitian Hamiltonians.

\vspace{0.5\baselineskip}

\noindent
{\bf Remark--} It may be worth stressing that what we are doing here is not just constructing matrices which are similar to $H$ or to $H^\dagger$, but to do this by means of some {\bf very specific} operators, as in (\ref{28}), for instance, $H^{\flat_0}=S_\psi\supzero H^\dagger S_\varphi\supzero$. This is a big difference, since we only work with a single ingredient, $H$. We do not use, in our construction above and below, any other mathematical object which is not connected to the original $H$.

\subsection{Reality of the eigenvalues and the pseudo-Hermiticity}

Before going to introducing more new Hamiltonians, let us discuss the reality of the eigenvalues.
As we mentioned at the beginning, the eigenvalues of $H$ can be real or complex. 
The following theorem gives equivalent conditions for every piece of $E_n$ to be real specifically in the case of the pseudo-Hermiticity \eqref{223-1}.

\begin{thm}\label{thm1}
	The following statements are equivalent: 
	(i) $E_n\in\mathbb{R}$\, for all $n$; 
	(ii) $S_\psi\supzero H=H^\dagger S_\psi\supzero$; 
	(iii) $S_\varphi\supzero H^\dagger=HS_\varphi\supzero$; 
	(iv) $H=H^{\sharp_0}$; 
	(v) $H^\dagger=\qty(H^\dagger)^{\flat_0}$.
\end{thm}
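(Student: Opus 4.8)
The plan is to split the five conditions into two groups: the three operator-identity conditions (ii)--(v), which I expect to be mutually equivalent by purely algebraic manipulation, and the reality condition (i), which I would tie to the group through the eigenvalue equations \eqref{29bis} that are already established. Concretely, I would first close the ring of equivalences (ii) $\Leftrightarrow$ (iii) $\Leftrightarrow$ (iv) $\Leftrightarrow$ (v), and then connect (i) to the group by proving (i) $\Leftrightarrow$ (iv).

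For the algebraic block the only ingredients needed are the definitions \eqref{28} of the two adjoint maps together with the inverse relation $S_\varphi\supzero=\qty(S_\psi\supzero)^{-1}$ from \eqref{25}. For (ii) $\Leftrightarrow$ (iii): starting from $S_\psi\supzero H=H^\dagger S_\psi\supzero$, I would multiply on the left and then on the right by $S_\varphi\supzero$ and cancel $S_\varphi\supzero S_\psi\supzero=\1$ to reach $HS_\varphi\supzero=S_\varphi\supzero H^\dagger$, which is (iii); the step is reversible. For (iii) $\Leftrightarrow$ (iv): inserting $H^{\sharp_0}=S_\varphi\supzero H^\dagger S_\psi\supzero$ into $H=H^{\sharp_0}$ and multiplying on the right by $S_\varphi\supzero$ yields exactly (iii). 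For (ii) $\Leftrightarrow$ (v): since $\qty(H^\dagger)^{\flat_0}=S_\psi\supzero HS_\varphi\supzero$, condition (v) reads $H^\dagger=S_\psi\supzero HS_\varphi\supzero$, and multiplying on the right by $S_\psi\supzero$ recovers (ii). None of these steps conceals any difficulty; they are one-line cancellations.

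The substantive link is (i) $\Leftrightarrow$ (iv), and here I would lean on the eigenvalue equation $H^{\sharp_0}\vno=\overline{E}_n\vno$ recorded in \eqref{29bis}. For (i) $\Rightarrow$ (iv): if every $E_n$ is real then $\overline{E}_n=E_n$, so $H^{\sharp_0}\vno=E_n\vno=H\vno$ for all $n$; since $\F_\varphi\supzero=\{\vno\}$ is a basis of $\Hil$, the two operators agree everywhere and $H=H^{\sharp_0}$. For (iv) $\Rightarrow$ (i): assuming $H=H^{\sharp_0}$, applying both sides to $\vno$ gives $E_n\vno=H\vno=H^{\sharp_0}\vno=\overline{E}_n\vno$, whence $E_n=\overline{E}_n$ for each $n$, i.e.\ $E_n\in\mathbb{R}$.

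I do not anticipate a genuine obstacle: the statement reduces to bookkeeping with \eqref{28}, \eqref{25}, and \eqref{29bis}. The only point deserving care is the "agree on a basis, hence agree as operators" step in (i) $\Rightarrow$ (iv), which relies on $\{\vno\}$ spanning $\Hil$ (guaranteed by the no-degeneracy assumption). A fully self-contained alternative for (i) $\Leftrightarrow$ (ii) would instead insert the spectral expansions $H=\sum_n E_n\dyad{\vno}{\pno}$ and $H^\dagger=\sum_n\overline{E}_n\dyad{\pno}{\vno}$ into the two sides of (ii); using the biorthogonality both sides collapse to $\sum_n E_n\dyad{\pno}{\pno}$ and $\sum_n\overline{E}_n\dyad{\pno}{\pno}$, and the linear independence of the operators $\dyad{\pno}{\pno}$ forces $E_n=\overline{E}_n$. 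I would keep the first route as the main argument, since it reuses \eqref{29bis} and avoids recomputing the spectral forms.
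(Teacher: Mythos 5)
Your proposal is correct and takes essentially the same approach as the paper: the paper likewise reduces everything to testing the operator identities on the biorthogonal eigenbasis (proving (i)$\Leftrightarrow$(ii) by comparing $S_\psi\supzero H\vno=E_n\pno$ with $H^\dagger S_\psi\supzero\vno=\overline{E}_{n}\pno$ and expanding a generic $f$) and then closes (ii)$\Leftrightarrow$(iii)$\Leftrightarrow$(iv)$\Leftrightarrow$(v) by exactly the one-line cancellations with \eqref{28} and $S_\varphi\supzero=\qty(S_\psi\supzero)^{-1}$ that you perform. Your only deviation --- anchoring the reality condition at (iv) by citing the already-established eigenvalue equation \eqref{29bis} instead of redoing the computation with $S_\psi\supzero$ --- is a cosmetic reorganization of the identical eigenvector argument, since \eqref{29bis} was itself derived from the same intertwining relations.
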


\begin{proof}
	We first prove that (i) implies (ii). Indeed we have $S_\psi\supzero H\vno=E_n\pno$, while $H^\dagger S_\psi\supzero\vno=\overline{E}_{n}\,\pno$, $^\forall n$. By using the expansion $f=\sum_{n=1}^{N}\scp<\pno,f>\,\vno$, we therefore have $(S_\psi\supzero H-H^\dagger S_\psi\supzero)f=0$, $^\forall f\in\Hil$ if $E_n=\overline{E}_{n}$ for all $n$, which implies (ii).
	
	Let us next assume that (ii) is true and show that $E_n\in\mathbb{R}$ for all $n$. Indeed we have
	\begin{align}
	\overline{E}_{n}\,\pno=H^\dagger S_\psi\supzero\vno=S_\psi\supzero H\vno=E_n\pno,
	\end{align}
	for all $n$. Hence $E_n=\overline{E}_{n}$.
	
	The equivalence between (ii) and (iii) is obvious, while the equivalence between (iii) and (iv) can be proved by using \eqref{28} as follows:
	\begin{align}
	H^{\sharp_0}=H
	\quad \Leftrightarrow \quad 
	S_\varphi\supzero H^\dagger S_\psi\supzero=H 
	\quad \Leftrightarrow \quad 
	S_\varphi\supzero H^\dagger=HS_\varphi\supzero.
	\end{align}
	The equivalence between (iv) and (v) is trivial.
\end{proof}

This theorem states that even if the Hamiltonian $H$ of some system is not self-adjoint with the standard Dirac adjoint $\dag$, it can still have only real eigenvalues if $H$ is self-adjoint with respect to the adjoint $\sharp_0$. 

It is now natural to ask if this claim can be inverted. In other words: if at least one eigenvalue of $H$, say $E_{n_0}$, is complex, is it possible to define a new adjoint $*$ (not necessarily coincident with $\dagger$, $\flat_0$ or $\sharp_0$) such that $H=H^*$? 

The following argument shows that this is not the case: if a single eigenvalue of $H$ is complex, then it is \textit{not} possible to have the self-adjointness $H=H^*$ for any possible adjoint map $*$. 
Suppose that this were possible; hence $H=H^*$. 
Let $\scpp<.,.>$ be the scalar product on $\Hil$ defining $*$:  $\scpp< Xf,g>= \scpp< f,X^*g>$, $^\forall f,g\in\Hil$. 
If we set both $f$ and $g$ to the eigenstate $\varphi_{n_0}$ of $H$ corresponding to the complex eigenvalue $E_{n_0}$ (as in $H\varphi_{n_0}=E_{n_0}\varphi_{n_0}$), we would have
\begin{align}
E_{n_0}\scpp< \varphi_{n_0},\varphi_{n_0}> = \scpp< \varphi_{n_0},H\varphi_{n_0}> = \scpp<  H^*\varphi_{n_0},\varphi_{n_0}> = \scpp<  H\varphi_{n_0},\varphi_{n_0}> =\overline{E}_{{n_0}}\,\scpp< \varphi_{n_0},\varphi_{n_0}>,
\end{align}
which is impossible, since we have assumed $E_{n_0}\neq \overline{E}_{{n_0}}$ while $\scpp< \varphi_{n_0},\varphi_{n_0}>\neq0$. 
Summarizing, we can state that {\em a non-self-adjoint Hamiltonian $H$ can have only real eigenvalues if and only if it is possible to introduce an adjoint with respect to which $H$ becomes self-adjoint.}

This statement is relevant to the $PT$-symmetric non-Hermitian quantum mechanics.
In the $PT$-symmetric quantum mechanics, it has been argued that there should be a so-called $C$-operator~\cite{CMB02} that gives a scalar product in the form $\scp<f,g>_{CPT}:=\scp<CPTf,g>$ and the corresponding $CPT$-conjugation. 
Our argument above implies that the $CPT$-conjugation exists only in the $PT$-unbroken phase, where all eigenvalues are real because of the $PT$-symmetry. However, as we note in the following two remarks as well as in Subsects.~\ref{subusubsec3.2.1} and~\ref{sect3.3.2} below, something interesting can also be stated when all the eigenvalues are pure imaginary.

\vspace{0.5\baselineskip}

\noindent
{\bf Remark 1:--} It is interesting to see what happens when all the eigenvalues, rather than being reals, are purely imaginary. This is relevant in view of some of the examples considered later in Subsects.~\ref{subusubsec3.2.1} and~\ref{sect3.3.2}. 
Suppose that $h\neq h^\dagger$ is a Hamiltonian with purely imaginary eigenvalues $i\epsilon_n$, where $\epsilon_n\in\mathbb{R}$ for all $n$, as in  $h\vno=i\epsilon_n\vno$ and $h^\dagger\pno=-i\epsilon_n\pno$. Then, for a new Hamiltonian $H=-ih$, all eigenvalues are real and its eigenvectors coincide with those of $h$ and $h^\dagger$, as in $H\vno=\epsilon_n\vno$ and $H^\dagger\pno=\epsilon_n\pno$. We can then use Theorem \ref{thm1} to deduce the following properties satisfied by $h$ and $h^\dagger$:
\begin{align}
S_\psi\supzero h=-h^\dagger S_\psi\supzero, \quad S_\varphi\supzero h^\dagger=-hS_\varphi\supzero, \quad h=-h^{\sharp_0}, \quad h^\dagger=-\qty(h^\dagger)^{\flat_0}.
\label{add1}
\end{align}
We stress that these equalities, and those in Theorem \ref{thm1}, do not hold if at least one of the eigenvalues of $h$ (or $H$) has non-zero real and imaginary parts.

\vspace{0.5\baselineskip}

\noindent
{\bf Remark 2:--} Let us consider again the case in which all the eigenvalues are purely imaginary: $h\vno=i\epsilon_n\vno$. This is followed by $h^2\vno=-\epsilon_n^2\vno$. It is interesting to consider the inverse implication: given an operator $h$ such that its square has only negative eigenvalues, $h^2\vno=-\epsilon_n^2\vno$, does  $h$ have only purely imaginary eigenvalues? We can indeed show something similar. Our assumption implies that $-h^2$ has only positive eigenvalues, and hence $-h^2$ is a positive operator. Therefore, a unique positive operator $B$ exists such that $B^2=-h^2$; see Ref.~\cite{rs}. 
The operator $B$ commutes with all the (bounded) operators that commute with $h^2$, and hence $[B,h^2]=-[B,B^2]=0$. Furthermore, defining $\tilde h=iB$, we have $[\tilde h,h^2]=0$. Let us now assume that the multiplicity of each $-\epsilon_n^2$ is one. Then, since $h^2\tilde h\vno=\tilde h h^2\vno=-\epsilon_n^2\tilde h\vno$, the state $\tilde h\vno$ is an eigenstate of $h^2$ with eigenvalue $-\epsilon_n^2$. Hence, $\tilde h\vno$ is necessarily proportional to $\vno$: $\tilde h\vno=\alpha_n\vno$, and $B\vno=-i\alpha_n\vno$. Then we have 
	$$
	\epsilon_n^2\vno=-h^2\vno=B^2\vno=B(-i\alpha_n\vno)=-\alpha_n^2\vno,
	$$
	which implies that $\alpha_n=\pm i|\epsilon_n|$. This shows that all the eigenvalues of $\tilde h$ are purely imaginary. Of course, each choice of sign in the eigenvalues give rise to a possible operator $h$ whose square is as above. In other words, given $h^2$, uniqueness of its square root is obtained only under further assumptions. This is in agreement with a well known result on the square roots of positive bounded operators, which are unique only after the positivity condition is imposed~\cite{rs}. However, all different square roots have purely imaginary eigenvalues.
	

\subsection{Two more isospectral Hamiltonians, four in total}
\label{subsec24}

We have seen in \eqref{25} that the operators $S_\varphi\supzero$ maps $\F_\psi\supzero$ into $\F_\varphi\supzero$ and $S_\psi\supzero$ maps in the opposite way.
If we write \eqref{223-2} in the forms
\begin{align}
H^{\sharp_0}S_\varphi\supzero\pno&=\overline{E}_{n}\,S_\varphi\supzero\pno,
\qquad
\qty(H^\dagger)^{\flat_0} S_\psi\supzero\vno=E_n S_\psi\supzero\vno,
\end{align} 
we realize that $S_\varphi\supzero\pno$ and $S_\psi\supzero\vno$ are the eigenvectors of the new Hamiltonians introduced in Subsect.~\ref{subsec22}.
This is indeed what has been often observed in the literature; see Refs.~\cite{BB98,benbook,ben,baginbagbook,mosta} and references therein. 
Our new point is to see what happens if we apply, for instance, $S_\varphi\supzero$ to $\vno$ instead of to $\pno$. 
We will now show that such states span eigenspaces of $H^{\flat_0}$ and $\qty(H^\dagger)^{\sharp_0}$.

For brevity, let us put
\begin{align}
\vnuno:=S_\varphi\supzero\vno, \qquad \pnuno:=S_\psi\supzero\pno,
\label{210}\end{align}
and $\F_\varphi\supone =\{\vnuno\}$,  $\F_\psi\supone =\{\pnuno\}$. 
The pair $\qty(\F_\varphi\supone ,\F_\psi\supone )$ has similar properties to the original pair $\qty(\F_\varphi\supzero,\F_\psi\supzero)$. 
In fact, we can easily prove that it is a biorthonormal basis for $\Hil$:
$\scp<\vnuno,\pmuno>=\delta_{n,m}$ and $f=\sum_{n=1}^{N}\scp<\vnuno,f>\,\pnuno = \sum_{n=1}^{N}\scp<\pnuno,f>\,\vnuno$, $^\forall f\in\Hil$. 
For example, we have
\begin{align}
\scp<\vnuno,\pmuno>=\scp<S_\varphi\supzero\vno,S_\psi\supzero\pmo>=\scp<\vno,S_\varphi\supzero S_\psi\supzero\pmo>=\scp<\vno,\pmo>=\delta_{n,m},
\end{align}
and
\begin{align}
\sum_{n=1}^{N}\scp<\vnuno,f>\,\pnuno=S_\psi\supzero\sum_{n=1}^{N}\scp<\vno,S_\varphi\supzero f>\,\pno=S_\psi\supzero S_\varphi\supzero f=f,
\end{align}
for all $f\in\Hil$. As before, we write it  more concisely in the form of a resolution of the identity:
\begin{align}
\sum_{n=1}^{N}\dyad{\vnuno}{\pnuno}=\sum_{n=1}^{N}\dyad{\pnuno}{\vnuno}=\1.
\label{211}\end{align}

We are now in a position to prove that $\vnuno$ and $\pnuno$ are eigenstates of the other adjoints of $H$. 
Indeed, we have
\begin{align}
\qty(H^{\flat_0})^\dagger\vnuno=E_n\vnuno, 
\qquad 
H^{\flat_0}\pnuno=\overline{E}_{n}\,\pnuno,
\label{212}\end{align}
for all $n$. 
For example, using \eqref{28}, we have
\begin{align}
\qty(H^{\flat_0})^\dagger\vnuno=S_\varphi\supzero HS_\psi\supzero S_\varphi\supzero\vno=S_\varphi\supzero H\vno=E_nS_\varphi\supzero\vno=E_n\vnuno.
\end{align}
We can prove the second eigenvalue equation in \eqref{212} in a similar way. 

Equations~\eqref{212} combined with \eqref{29bis} show that by applying $S_\varphi\supzero$ and $S_\psi\supzero$ to $\vno$ and $\pno$, we can introduce four sets of eigenspace for four new Hamiltonians in total, two adjoints of $H$ isospectral to $H^\dag$ and two of $H^\dag$ isospectral to $H$; see 
Table~\ref{tab1}. 

As before in Eqs.~\eqref{218-1}-\eqref{222-1}, these features are consequences of the existence of several intertwining relations between the operators involved. 
In the present finite-dimensional situation, these intertwining equations are similarity conditions, since all the intertwining operators are invertible. 
However, this invertibility could be lost when we try to extend our settings to an infinite-dimensional system, while the intertwining relations usually remain true (modulo, at most, some constraints on the domains of the operators~\cite{baginbagbook}). We hope to consider this aspect in a future paper.

To summarize so far, we defined two new adjoints $\sharp_0$ and $\flat_0$, with which we generated four new Hamiltonians in total, the $\sharp_0$-adjoint and the $\flat_0$-adjoint of $H$, namely $H^{\sharp_0}$ and $H^{\flat_0}$, and those of $H^\dag$,
namely $\qty(H^\dag)^{\sharp_0}=\qty(H^{\flat_0})^\dag$ and $\qty(H^\dag)^{\flat_0}=\qty(H^{\sharp_0})^\dag$.
The former two are isospectral to $H^\dag$, while the latter two are isospectral to $H$.
This is summarized in 
Table~\ref{tab1}, together with new isospectral Hamiltonians we will find below.
\begin{table}
\caption{New isospectral Hamiltonians with their eigenvectors that we introduce down to the third iteration.
The Hamiltonians listed on the top half are all isospectral to $H$, while those on the bottom half to $H^\dag$.
On the first generation, only two sets of eigenvectors of the four Hamiltonians are new, namely $\{\varphi_n^{(1)}\}$ and $\{\psi_n^{(1)}\}$, hence producing only four new Hamiltonians in the second generation. 
This anomaly corresponds to the existence of the pseudo-Hermiticity $H=H^{\sharp_0}$.
Each of the four Hamiltonians in the second generation has its own new set of eigenvectors, hence producing eight new Hamiltonians in the third generation.
Sixteen new Hamiltonians are in the fourth generation, which we left out from the table.}
\label{tab1}
\begin{center}
\begin{tabular}{lclll}
Eigenvalues & generation & Hamiltonian & Eigenvectors & Another notation\\
\hline
$\{E_n\}$ & ---& $H$ & $\{\vno\}$ & \\
\cline{2-5}
& 1 & $\qty(H^\dag)^{\flat_0}=\qty(H^{\sharp_0})^\dag$ & $\{S_\psi\supzero\vno\}$ &  $=\{\psi_n\supzero\}$  \\
\cline{2-5}
& 1 & $\qty(H^\dag)^{\sharp_0}=\qty(H^{\flat_0})^\dag$ & $\{S_\varphi\supzero\vno\}$ & $=\{\varphi_n\supone\}$\\
\cline{2-5}
& 2 & $\qty(H^\dag)^{\flat_1}=\qty(H^{\sharp_1})^\dag$ & $\{S_\psi\supone\vno\}$ &  $=\{\varphi_n^{(2,\alpha)}\}$\\
\cline{2-5}
& 2 & $\qty(H^\dag)^{\sharp_1}=\qty(H^{\flat_1})^\dag$ & $\{S_\varphi\supone\vno\}$ &  $=\{\varphi_n^{(2,\beta)}\}$ \\
\cline{2-5}
& 3 & $\qty(H^\dag)^{\flat_{2,\alpha}}=\qty(H^{\sharp_{2,\alpha}})^\dag$ & $\{S_\psi^{(2,\alpha)}\vno\}$ & $=\{\varphi_n^{(3,\alpha\alpha)}\}$ \\
\cline{2-5}
& 3 & $\qty(H^\dag)^{\sharp_{2,\alpha}}=\qty(H^{\flat_{2,\alpha}})^\dag$ & $\{S_\varphi^{(2,\alpha)}\vno\}$ & $=\{\varphi_n^{(3,\alpha\beta)}\}$ \\
\cline{2-5}
& 3 & $\qty(H^\dag)^{\flat_{2,\beta}}=\qty(H^{\sharp_{2,\beta}})^\dag$ & $\{S_\psi^{(2,\beta)}\vno\}$ & $=\{\varphi_n^{(3,\beta\alpha)}\}$\\
\cline{2-5}
& 3 & $\qty(H^\dag)^{\sharp_{2,\beta}}=\qty(H^{\flat_{2,\beta}})^\dag$ & $\{S_\varphi^{(2,\beta)}\vno\}$ & $=\{\varphi_n^{(3,\beta\beta)}\}$\\
\hline
$\{\overline{E}_{n}\}$ & --- & $H^\dag$ & $\{\psi_n\supzero\}$ &\\
\cline{2-5}
& 1 & $H^{\sharp_0}$ & $\{S_\varphi\supzero\pno\}$ & $=\{\varphi_n\supzero\}$ \\
\cline{2-5}
& 1 & $H^{\flat_0}$ & $\{S_\psi\supzero\pno\}$ & $=\{\psi_n\supone\}$  \\
\cline{2-5}
& 2 & $H^{\sharp_1}$ & $\{S_\varphi\supone\pno\}$&  $=\{\psi_n^{(2,\alpha)}\}$  \\
\cline{2-5}
& 2 & $H^{\flat_1}$ & $\{S_\psi\supone\pno\}$ &  $=\{\psi_n^{(2,\beta)}\}$ \\
\cline{2-5}
& 3 & $H^{\sharp_{2,\alpha}}$ & $\{S_\varphi^{(2,\alpha)}\pno\}$  & $=\{\psi_n^{(3,\alpha\alpha)}\}$\\
\cline{2-5}
& 3 & $H^{\flat_{2,\alpha}}$ & $\{S_\psi^{(2,\alpha)}\pno\}$  & $=\{\psi_n^{(3,\alpha\beta)}\}$ \\
\cline{2-5}
& 3 & $H^{\sharp_{2,\beta}}$ & $\{S_\varphi^{(2,\beta)}\pno\}$  & $=\{\psi_n^{(3,\beta\alpha)}\}$ \\
\cline{2-5}
& 3 & $H^{\flat_{2,\beta}}$ & $\{S_\psi^{(2,\beta)}\pno\}$  & $=\{\psi_n^{(3,\beta\beta)}\}$ \\
\hline
\end{tabular}
\end{center}
\end{table}

\subsection{Second iteration of generating two new adjoints}

The biorthonormal basis  $\qty(\F_\varphi\supone ,\F_\psi\supone )$ can now be used, in analogy with  $\qty(\F_\varphi\supzero,\F_\psi\supzero)$, to define two new operators $S_\varphi\supone $ and $S_\psi\supone $, two new scalar products $\scp<.,.>_{\varphi(1)}$ and $\scp<.,.>_{\psi(1)}$ (with their related norms), and two new adjoint maps $\sharp_1$ and $\flat_1$. 
As we will show, these adjoint maps then give rise to four new Hamiltonians with known eigensystems. 
The procedure can be iterated further, producing even more new Hamiltonians whose eigenvalues and eigensystems can also be easily derived.
In each iteration, we define new metric operators out of eigenvectors in the previous iteration, define new scalar products out of the metric operators, and thereby define new adjoints. 
The eigenvectors of the new adjoints of the Hamiltonians then produce further new metric operators.
As is indicated in 
Table~\ref{tab1}, the number of new Hamiltonians is doubled in every iteration down from the second generation.
We track the procedure one by one in the following.

First, let us put
\begin{align}
S_\varphi\supone :=\sum_{n=1}^{N}\dyad{\vnuno}{\vnuno}, 
\qquad 
S_\psi\supone :=\sum_{n=1}^{N}\dyad{\pnuno}{\pnuno}.
\label{213}\end{align}
Their properties reflect those of $S_\varphi\supzero$ and  $S_\psi\supzero$.
They are both self-adjoint and strictly positive. 
They also satisfy the analogous of \eqref{25}:
\begin{align}
S_\varphi\supone \pnuno=\vnuno, 
\qquad 
S_\psi\supone \vnuno=\pnuno, 
\qquad\mbox{and}\qquad 
S_\varphi\supone =\qty(S_\psi\supone )^{-1}.
\label{214}
\end{align}
Again in analogy with our previous definitions, we have the new scalar products
\begin{align}
\scp<f,g>_{\varphi(1)}:=\scp<S_\varphi\supone f,g>, 
\qquad 
\scp<f,g>_{\psi(1)}:=\scp<S_\psi\supone f,g>,
\label{215}\end{align}
and the new adjoints
\begin{align}
\scp<Xf,g>_{\varphi(1)}=\scp<f,X^{\flat_1}g>_{\varphi(1)}, 
\qquad 
\scp<Xf,g>_{\psi(1)}=\scp<f,X^{\sharp_1}g>_{\psi(1)}
\label{216}\end{align}
for all $f,g\in\Hil$.  
For all  $X\in B(\Hil)$ we find that 
\begin{align}
X^{\flat_1}=S_\psi\supone  X^\dagger S_\varphi\supone , 
\qquad 
X^{\sharp_1}=S_\varphi\supone  X^\dagger S_\psi\supone ,
\label{217}\end{align}
so that $X^{\flat_1}=\qty(S_\psi\supone )^2 X^{\sharp_1} \qty(S_\varphi\supone)^2$, and 
\begin{align}\label{237-1}
\qty(X^{\flat_1})^\dagger=\qty(X^\dagger)^{\sharp_1},
\qquad
\qty(X^{\sharp_1})^\dagger=\qty(X^\dagger)^{\flat_1}.
\end{align} 
It is clear that the analogous of Lemma \ref{lemma1} can now be stated. For instance, we can show that  $\qty(X^{\flat_1})^\dagger=\qty(X^\dagger)^{\flat_1}$ if and only if   $\qty[X,\qty(S_\varphi\supone )^2]=0$.

It is possible to find now that the operators $S_{\varphi}\supone$ and $S_\psi\supone$ are simply the cube of their zeroth counterparts $S_{\varphi}\supzero$ and $S_{\psi}\supzero$:
		\begin{align}
		S_{\varphi}\supone =\qty(S_{\varphi}\supzero)^3, 
		\qquad 
		S_{\psi}\supone =\qty(S_{\psi}\supzero)^3.
		\label{218}\end{align}
For instance, recalling the definitions of $S_{\varphi}\supzero$ and $S_{\varphi}\supone $, we have the following for a generic $f\in\Hil$: 
\begin{align}
S_{\varphi}\supone f&=\sum_{n=1}^{N}\scp<\vnuno,f>\,\vnuno=\sum_{n=1}^{N}\scp<S_{\varphi}\supzero\vno,f>\,S_{\varphi}\supzero\vno=S_{\varphi}\supzero\sum_{n=1}^{N}\scp<\vno,S_{\varphi}\supzero f>\,\vno 
\nonumber\\
&=S_{\varphi}\supzero S_{\varphi}\supzero\qty(S_{\varphi}\supzero f)=\qty(S_{\varphi}\supzero)^3 f.
\end{align}
Hence we deduce that $\sharp_1$ and $\sharp_0$ are related as well as $\flat_1$ and $\flat_0$ are. 
In fact, we have
\begin{align}
X^{\sharp_1}=\qty(S_\varphi\supzero)^2 X^{\sharp_0} \qty(S_{\psi}\supzero)^2,
\qquad 
X^{\flat_1}=\qty(S_\psi\supzero)^2 X^{\flat_0} \qty(S_{\varphi}\supzero)^2, 
\label{219}\end{align}
for all $X\in B(\Hil)$.

\subsection{Four new Hamiltonians in the second iteration}

Going back to the Hamiltonians, we first observe that starting from $H$ and $H^\dagger$, we have found four new operators $\qty(H^{\flat_0}, H^{\sharp_0})$ as well as $\qty((H^\dagger)^{\flat_0}, \qty(H^\dagger)^{\sharp_0})$, whose eigensystems are connected to those of $\qty(H,H^\dagger)$. 
It is now natural to define four more operators, again constructed out of $\qty(H,H^\dagger)$ but using the {\em new} adjoints: $\qty(H^{\flat_1},H^{\sharp_1})$ and $\qty((H^\dagger)^{\flat_1},\qty(H^\dagger)^{\sharp_1})$. 
Of course, we should check whether the new operators are different from the previous ones, which is exactly what we find in some concrete situations as we demonstrate in Section \ref{sectexe}. 

We stress that we can construct eigenvectors for the new Hamiltonians $H^{\flat_1}$, $H^{\sharp_1}$, $\qty(H^\dagger)^{\flat_1}$ and $\qty(H^\dagger)^{\sharp_1}$ easily. 
In fact, using the equality $S_\varphi\supone =\qty(S_\psi\supone )^{-1}$, we can show that
\begin{align}
H^{\sharp_1}S_\varphi\supone \pno=\overline{E}_{n}\,S_\varphi\supone \pno, 
\qquad 
H^{\flat_1}S_\psi\supone \pno=\overline{E}_{n}\,S_\psi\supone \pno,
\label{220}\end{align}
as well as 
\begin{align}
\qty(H^\dagger)^{\flat_1} S_\psi\supone \vno={E_n}\,S_\psi\supone \vno, 
\qquad 
\qty(H^\dagger)^{\sharp_1} S_\varphi\supone \vno={E_n}\,S_\varphi\supone \vno.
\label{221}\end{align}
We will give notations dedicated to these eigenstates below in \eqref{225}.
The new Hamiltonians that we have obtained in the second iteration is additionally listed in 
Table~\ref{tab1}.

Let us summarize our iteration procedure so far. 
\begin{enumerate}
\renewcommand{\theenumi}{(\roman{enumi})}

\item 
We start from a pair of non-self-adojoint Hamiltonian $(H,H^\dag)$.
Their eigenvectors form the biorthonormal basis
\begin{align}\label{246}
\E&=\qty(\{\vno\},\{\pno\}).
\end{align}

\item 
As the start of the first iteration, we constructed out of \eqref{246} the two intertwining operators $S_\varphi\supzero$ and $S_\psi\supzero$, which defined two new adjoints $\flat_0$ and $\sharp_0$.

\item 
The two new adjoints introduced four new Hamiltonians $\qty(\qty(H^{\flat_0})^\dagger, H^{\flat_0})$ and $\qty(\qty(H^{\sharp_0})^\dagger, H^{\sharp_0})$, both of which pair are isospectral to  $(H,H^\dag)$.

\item 
As the eigenvectors of the four new Hamiltonians, we introduced the two new biorthogonal bases
\begin{align}\label{247}
\E_{\sharp_0}&=\qty(\{S_\psi\supzero\vno\},\{S_\varphi\supzero\pno\}),
\\\label{248}
\E_{\flat_0}&=\qty(\{S_\varphi\supzero\vno\},\{S_\psi\supzero\pno\}).
\end{align}
We can see from \eqref{25} that $\E_{\sharp_0}$ actually coincides with  $\E$  with the role of the two bases exchanged.
It is also clear from \eqref{210} that $\E_{\flat_0}=\qty(\{\varphi_n\supone\},\{\psi_n\supone\})=\qty(\F_\varphi\supone ,\F_\psi\supone )$.

\item 
As the start of the second iteration, we constructed out of \eqref{248} the two intertwining operators $S_\varphi\supone$ and $S_\psi\supone$, which defined two adjoints $\flat_1$ and $\sharp_1$.
We could do the same out of \eqref{247}, but since it is equivalent to $\E$, it would only produce the old intertwining operators $S_\psi\supzero$ and $S_\varphi\supzero$ as well as the old adjoints $\sharp_0$ and $\flat_0$.
This will be different in the third iteration.

\item
The two new adjoints introduced four new Hamiltonians $\qty(\qty(H^{\flat_1})^\dagger, H^{\flat_1})$ and $\qty(\qty(H^{\sharp_1})^\dagger, H^{\sharp_1})$, both of which pair are isospectral to  $(H,H^\dag)$.

\item 
As the eigenvectors of the four new Hamiltonians, we introduced the two new biorthogonal bases
\begin{align}\label{249}
\E_{\sharp_1}&=\qty(\{S_\psi\supone \vno\},\{S_\varphi\supone \pno\}),
\\\label{250}
\E_{\flat_1}&=\qty(\{S_\varphi\supone \vno\},\{S_\psi\supone \pno\}).
\end{align}
\end{enumerate}
The next subsection presents the third iteration.

\subsection{Third iteration of generating four new adjoints}

%

It is now clear how to proceed in the third iteration.
The only difference from the second iteration is that we now have two new biorthogonal bases $\E_{\sharp_1}$ and $\E_{\flat_1}$ after the second iteration, whereas we had only one new basis $\E_{\flat_0}$ after the first iteration because $\E_{\sharp_0}$ was equivalent to $\E$.
Because of this difference, we will now come up with four new adjoints, which will let us introduce eight new Hamiltonians.

Obviously the number of Hamiltonians produced in this way increases at each iteration, we would have eight new adjoints in the fourth iteration, but we will stop our analysis here, since the structure is already very rich, as we will show in a moment. 
Of course, natural questions are the following: Does this procedure converge to some {\em limiting} Hamiltonian? Does this happens at least in some particular system? 
Are some basic (physical) properties of the original Hamiltonian $H$ preserved by our construction? 
Do these operators, scalar products, and adjoints have some physical meaning? Some of these questions will be considered here, while others are postponed to a future work.

To simplify the notation in the third iteration, let us introduce the following notation for the vectors in $\E_{\sharp_1}$ and  $\E_{\flat_1}$:
\begin{align}
\begin{cases}
	\varphi_n^{(2,\alpha)}=S_\psi\supone \vno, 
	\qquad 
	\psi_n^{(2,\alpha)}=S_\varphi\supone \pno, \\
	\varphi_n^{(2,\beta)}=S_\varphi\supone \vno, 
	\qquad 
	\psi_n^{(2,\beta)}=S_\psi\supone \pno, 
\end{cases}
\label{225}\end{align} 
so that $\E_{\sharp_1}=\qty(\{\varphi_n^{(2,\alpha)}\},\{\psi_n^{(2,\alpha)}\})$ and $\E_{\flat_1}=\qty(\{\varphi_n^{(2,\beta)}\},\{S_\psi\supone \pno\})$.
As we mentioned above, both bases are generally different from the other orthonormal bases $\E$, $\E_{\sharp_0}$ and $\E_{\flat_0}$.

In analogy to what we have done before for other biorthonormal bases in \eqref{24} and \eqref{213}, we now introduce the following four new intertwining operators:
\begin{align}
\begin{cases}
	S_\varphi^{(2,\alpha)}=\sum_n \dyad{\varphi_n^{(2,\alpha)}}{\varphi_n^{(2,\alpha)}}, 
	\qquad 
	S_\psi^{(2,\alpha)}=\sum_n \dyad{\psi_n^{(2,\alpha)}}{\psi_n^{(2,\alpha)}}, \\
	\\
	S_\varphi^{(2,\beta)}=\sum_n \dyad{\varphi_n^{(2,\beta)}}{\varphi_n^{(2,\beta)}}, 
	\qquad 
	S_\psi^{(2,\beta)}=\sum_n \dyad{\psi_n^{(2,\beta)}}{\psi_n^{(2,\beta)}}. \\
\end{cases}
\label{226}\end{align} 
These operators are positive, self-adjoint, and satisfy formulas similar to those in \eqref{25}:
\begin{align}
S_\varphi^{(2,\gamma)}\psi_n^{(2,\gamma)}=\varphi_n^{(2,\gamma)}, 
\qquad 
S_\psi^{(2,\gamma)}\varphi_n^{(2,\gamma)}=\psi_n^{(2,\gamma)}, 
\qquad\mbox{and}\qquad 
S_\varphi^{(2,\gamma)}=\qty(S_\psi^{(2,\gamma)})^{-1},
\label{226-1}
\end{align}
for all $n$ and for $\gamma=\alpha,\beta$. It is also possible to check that all these operators can be written in terms of the original $S_\varphi\supzero$ and $S_\psi\supzero$ as follows:
\begin{align}
S_\varphi^{(2,\alpha)}=\qty(S_\psi\supzero)^5, 
\quad
S_\psi^{(2,\alpha)}=\qty(S_\varphi\supzero)^5, 
\quad 
S_\varphi^{(2,\beta)}=\qty(S_\varphi\supzero)^7, 
\quad 
S_\psi^{(2,\beta)}=\qty(S_\psi\supzero)^7.
\label{227}\end{align}

We can now introduce four more pairs of scalar products
\begin{align}
\scp<f,g>_{\varphi(2,\gamma)}=\scp<S_\varphi^{(2,\gamma)}f,g>, 
\qquad 
\scp<f,g>_{\psi(2,\gamma)}=\scp<S_\psi^{(2,\gamma)}f,g>,
\label{228}\end{align}
and four related pairs of adjoints
\begin{align}
\scp<Xf,g>_{\varphi(2,\gamma)}=\scp<f,X^{\flat_{2,\gamma}}g>_{\varphi(2,\gamma)}, 
\qquad 
\scp<Xf,g>_{\psi(2,\gamma)}=\scp<f,X^{\sharp_{2,\gamma}}g>_{\psi(2,\gamma)}
\label{229}\end{align}
for all $f,g\in\Hil$, $\gamma=\alpha,\beta$.  For all  $X\in B(\Hil)$, we find that 
\begin{align}
X^{\flat_{2,\gamma}}=S_\psi^{(2,\gamma)} X^\dagger S_\varphi^{(2,\gamma)}, 
\qquad 
X^{\sharp_{2,\gamma}}=S_\varphi^{(2,\gamma)} X^\dagger S_\psi^{(2,\gamma)}.
\label{230}\end{align}
We can deduce more results in complete analogy with our previous ones. 

The main point is that these new adjoints still produce Hamiltonians with the same eigenvalues as the original $H$ and $H^\dagger$, and with easily computable eigenstates. We obtain
\begin{align}
\begin{cases}
	H^{\flat_{2,\gamma}}S_\psi^{(2,\gamma)}\pno=\overline{E}_{n}S_\psi^{(2,\gamma)}\pno, \\
	H^{\sharp_{2,\gamma}}S_\varphi^{(2,\gamma)}\pno=\overline{E}_{n}S_\varphi^{(2,\gamma)}\pno, \\	
\end{cases}
\label{231}\end{align} 
and
\begin{align}
\begin{cases}
	\qty(H^\dagger)^{\flat_{2,\gamma}}S_\psi^{(2,\gamma)}\vno={E_n}S_\psi^{(2,\gamma)}\vno, \\
	\qty(H^\dagger)^{\sharp_{2,\gamma}}S_\varphi^{(2,\gamma)}\vno={E_n}S_\varphi^{(2,\gamma)}\vno, \\	
\end{cases}
\label{232}\end{align} 
for $\gamma=\alpha,\beta$, which have been already added to 
Table~\ref{tab1}.
We thus see that the chain of solvable Hamiltonians becomes longer, and can be made even longer simply using the new biorthonormal bases which are constructed step by step.

Let us summarize what we have done so far. 
We have used the eigenvectors $\vno$ of $H$ (different from $H^\dagger$, in principle) to construct the operator $S_\varphi\supzero$ and its inverse to introduce several new scalar products and their related adjoints, which generate new Hamiltonians all with the same eigenvalues as $H$ and $H^\dagger$. 
All these adjoints are matrices which are similar to $H$ or to $H^\dagger$; the similarity maps are given by odd powers of $S_\varphi\supzero$ or $S_\psi\supzero=(S_\varphi\supzero)^{-1}$. 
We can also use even powers, but this would  not be on the line of this paper, where the new Hamiltonians are just suitable adjoints of $H$ or of $H^\dagger$ constructed in a specific way. 
In fact, every time we have a positive invertible operator $T$ on $\Hil$, we can introduce two new scalar products $\scp<f,g>_T:=\scp<Tf,g>$ and $\scp<f,g>_{T^{-1}}:=\scp<T^{-1}f,g>$ as well as two new adjoints, as we have done in, \textit{e.g.}, formula \eqref{27}. 
For this reason it is important to stress here that the whole procedure is {\em generated} by $H$ alone, and there is no need to introduce any {\em external} operator like the $T$ above.

\section{Examples}\label{sectexe}

In this section we will first consider a simple example in two dimensions in order to demonstrate that the procedure proposed here is not trivial: 
the chain of the Hamiltonians that we deduce and their eigenvectors are all different in general. Then we consider much richer and more physically motivated examples.

\subsection{A $2\times 2$ upper triangular matrix}

Let us consider the following two-by-two matrix
\begin{align}
H=
\begin{pmatrix}
{E_1} & \alpha(E_2-E_1) \\
0 & {E_2} \\
\end{pmatrix},
\end{align}
where $\alpha\in\mathbb{R}$, while $E_1$ and $E_2$ are generally complex quantities. 
Its $\dagger$-adjoint in $\Hil=\mathbb{C}^2$ is simply
\begin{align}
H^\dagger=
\begin{pmatrix}
\overline{E}_{1} & 0 \\
\alpha(\overline{E}_{2}-\overline{E}_{1}) & \overline{E}_{2} \\
\end{pmatrix}.
\end{align}
The eigenvalues of $H$ are exactly $E_1$ and $E_2$, while $\overline{E}_{1}$ and $\overline{E}_{2}$ are those of $H^\dagger$.
The eigenvectors of $H$ and $H^\dagger$ can be easily found:
\begin{align}
\varphi_1\supzero&=
\begin{pmatrix}
1 \\
0 \\
\end{pmatrix},
&
\varphi_2\supzero&=
\begin{pmatrix}
\alpha \\
1 \\
\end{pmatrix},
&
\psi_1\supzero&=
\begin{pmatrix}
1 \\
-\alpha \\
\end{pmatrix},
&
\psi_2\supzero&=
\begin{pmatrix}
0 \\
1 \\
\end{pmatrix}.
\end{align}
We can confirm the biorthonormality $\scp<\varphi\supzero_k,\psi\supzero_l>=\delta_{k,l}$ and the eigenvalue equations $H\varphi\supzero_k=E_k\supzero\varphi_k$ and $H^\dagger
\psi\supzero_k=\overline{E}_{k}\psi\supzero_k$. 
It is clear that an interesting situation arises when $\alpha\neq0$. 
In fact, when $\alpha=0$, $\F_\varphi\supzero$ and $\F_\psi\supzero$ collapse to the canonical orthonormal basis in $\Bbb C^2$, namely $\F_e=\{e_1,e_2\}$ and both $H$ and $H^\dagger$ reduce to diagonal matrices, which are not necessarily equal, except when $E_1, E_2\in\mathbb{R}$.

The operators $S_\varphi\supzero$ and $S_\psi\supzero$ in \eqref{24} are given by
\begin{align}
S_\varphi\supzero&=
\begin{pmatrix}
1+\alpha^2 & \alpha \\
\alpha & 1 \\
\end{pmatrix},
&
S_\psi\supzero&=
\begin{pmatrix}
1 & -\alpha \\
-\alpha & 1+\alpha^2 \\
\end{pmatrix},
\end{align}
which are clearly self-adjoint because $\alpha\in\mathbb{R}$. 
We can also confirm that they are inversion matrices to each other: $S\supzero_\varphi S\supzero_\psi=S\supzero_\psi S\supzero_\varphi=\1$. 
Moreover, $S\supzero_\psi\varphi\supzero_n=\psi\supzero_n$ and $S\supzero_\varphi\psi\supzero_n=\varphi\supzero_n$,
which confirms \eqref{25}. 

We can compute the adjoints defined in \eqref{27} using \eqref{28}:
\begin{align}
H^{\sharp_0}&=S_\varphi\supzero H^\dagger S\supzero_\psi=
\begin{pmatrix}
\overline{E}_{1} & \alpha(\overline{E}_{2}-\overline{E}_{1}) \\
0 & \overline{E}_{2} \\
\end{pmatrix}
\\
H^{\flat_0}&=S\supzero_\psi H^\dagger S\supzero_\varphi=
\begin{pmatrix}
(1+\alpha^2)\overline{E}_{1}-\alpha^2(2+\alpha^2)\overline{E}_{2} & \alpha(1+\alpha^2)(\overline{E}_{1}-\overline{E}_{2}) \\
-\alpha(1+\alpha^2)(2+\alpha^2)(\overline{E}_{1}-\overline{E}_{2}) & -\alpha^2(2+\alpha^2)\overline{E}_{1}+(1+\alpha^2)^2\overline{E}_{2} \\
\end{pmatrix}.
\end{align}
It is clear that $H=H^{\sharp_0}$ if and only if $E_1$ and $E_2$ are real. 
This is in agreement with Theorem \ref{thm1} for the pseudo-Hermiticity.
On the other hand, it is not so evident at a first glance that $H^\dagger=(H^\dagger)^{\flat_0}$ if and only if $E_1$ and $E_2$ are real. 
Nonetheless, we can show it from the second identity in \eqref{29}, since $H^\dagger-(H^\dagger)^{\flat_0}=H^\dagger-(H^{\sharp_0})^\dagger=(H-H^{\sharp_0})^\dagger=0$ if and only if $E_1$ and $E_2$ are real.

Furthermore, if we introduce the new vectors
\begin{align}
\psi_1\supone &=S\supzero_\psi\psi\supzero_1=
\begin{pmatrix}
1+\alpha^2 \\
-\alpha(2+\alpha^2) \\
\end{pmatrix},
&
\psi\supone _2&=S\supzero_\psi\psi\supzero_2=
\begin{pmatrix}
-\alpha \\
1+\alpha^2 \\
\end{pmatrix},
\\
\varphi\supone _1&=S\supzero_\varphi\varphi\supzero_1=
\begin{pmatrix}
1+\alpha^2 \\
\alpha \\
\end{pmatrix},
&
\varphi\supone _2&=S\supzero_\psi\varphi\supzero_2=
\begin{pmatrix}
\alpha(2+\alpha^2) \\
1+\alpha^2 \\
\end{pmatrix}
\end{align}
as in \eqref{210}, we see their biorthogonality $\scp<\psi\supone _k,\varphi\supone _l>=\delta_{k,l}$. 
For $\alpha=0$, the sets $\F_{\psi}\supone $ and $\F_{\varphi}\supone $ both collapse again to $\F_e$.
Using now $\F_{\psi}\supone $ and $F_{\varphi}\supone $, we introduce the  operators
\begin{align}
S_{\psi}\supone &=
\begin{pmatrix}
\alpha^4+3 \alpha^2+1 & -\alpha \alpha^4+4 \alpha^2+3 \\
-\alpha \alpha^4+4 \alpha^2+3 & \alpha^6+5 \alpha^4+6 \alpha^2+1 \\
\end{pmatrix},
\\
S_{\varphi}\supone &=
\begin{pmatrix}
\alpha^6+5 \alpha^4+6 \alpha^2+1 & \alpha \alpha^4+4 \alpha^2+3 \\
\alpha \alpha^4+4 \alpha^2+3 & \alpha^4+3 \alpha^2+1 \\
\end{pmatrix}
\end{align}
as in \eqref{213}.
Again, they are both self-adjoint (with respect to $\dagger$), positive, and inverse to each other, 
$S_{\varphi}\supone =\qty(S_{\psi}\supone )^{-1}$.
It is also easy to confirm that $S\supone _{\varphi}=\qty(S\supzero_\varphi)^3$ and $S\supone _{\psi}=\qty(S\supzero_\psi)^3$ in \eqref{218}. 

After some algebra, we find the adjoints defined in \eqref{216} of the Hamiltonian,$H^{\sharp_1}$ and $H^{\flat_1}$, in the forms
\begin{footnotesize}
\begin{align}
	H^{\sharp_1}=
	\begin{pmatrix}
	\alpha^4+3 \alpha^2+1^2 \overline{E}_{1}-\alpha^2 \alpha^2+1 \alpha^2+2 \alpha^2+3 \overline{E}_{2} & -\alpha \alpha^4+3 \alpha^2+1 \alpha^4+4 \alpha^2+3 \overline{E}_{1}-\overline{E}_{2} \\
	\alpha \alpha^6+5 \alpha^4+7 \alpha^2+2 \overline{E}_{1}-\overline{E}_{2} & \alpha^2+1 \alpha^2+2 \alpha^2+3 \overline{E}_{2}-\overline{E}_{1} \alpha^2+\overline{E}_{2} \\
	\end{pmatrix}
\end{align}
\end{footnotesize}
and
\begin{align}
H^{\flat_1}=
\begin{pmatrix}
h_{1,1} & h_{1,2} \\
h_{2,1} & h_{2,2} \\
\end{pmatrix},
\end{align}
where
\begin{align}
h_{1,1}&= \alpha^6+5 \alpha^4+6 \alpha^2+1^2 \overline{E}_{1}-\alpha^2 \alpha^2+1 \alpha^2+2 \alpha^2+3 \alpha^4+4 \alpha^2+2 \overline{E}_{2},
\\
h_{1,2}&=\alpha \alpha^4+4 \alpha^2+3 \alpha^6+5 \alpha^4+6 \alpha^2+1 \overline{E}_{1}-\overline{E}_{2},
\\
h_{2,1}&= -\alpha \alpha^2+2 \alpha^4+4 \alpha^2+2 \alpha^6+5 \alpha^4+6 \alpha^2+1 \overline{E}_{1}-\overline{E}_{2},
\\
h_{2,2}&=\alpha^2+1 \alpha^2+2 \alpha^2+3 \alpha^4+4 \alpha^2+2 \overline{E}_{2}-\overline{E}_{1} \alpha^2+\overline{E}_{2}.
\end{align}
Notice that for generic $\alpha$, the two new Hamiltonians $H^{\sharp_1}$ and $H^{\flat_1}$ are  different from all the Hamiltonians considered before. 
The eigenvectors in \eqref{220} and \eqref{221} are specifically given as follows:
\begin{align}
S_\varphi\supone \psi_1\supzero&=
\begin{pmatrix}
\alpha^4+3 \alpha^2+1 \\
\alpha \alpha^2+2 \\
\end{pmatrix},
&
S_\varphi\supone \psi_2\supzero&=
\begin{pmatrix}
\alpha \alpha^4+4 \alpha^2+3 \\
\alpha^4+3 \alpha^2+1 \\
\end{pmatrix},
\\
S_\psi\supone \varphi_1\supzero&=
\begin{pmatrix}
\alpha^4+3 \alpha^2+1 \\
-\alpha \alpha^4+4 \alpha^2+3 \\
\end{pmatrix},
&
S_\psi\supone \varphi_2\supzero&=
\begin{pmatrix}
-\alpha \alpha^2+2 \\
\alpha^4+3 \alpha^2+1 \\
\end{pmatrix},
\\
S_\psi\supone \psi_1\supzero&=
\begin{pmatrix}
\alpha^6+5 \alpha^4+6 \alpha^2+1 \\
-\alpha \alpha^6+6 \alpha^4+10 \alpha^2+4 \\
\end{pmatrix},
&
S_\psi\supone \psi_2\supzero&=
\begin{pmatrix}
-\alpha \alpha^4+4 \alpha^2+3 \\
\alpha^6+5 \alpha^4+6 \alpha^2+1 \\
\end{pmatrix}
\\
S_\varphi\supone \varphi_1\supzero&=
\begin{pmatrix}
\alpha^6+5 \alpha^4+6 \alpha^2+1 \\
\alpha \alpha^4+4 \alpha^2+3 \\
\end{pmatrix}
&
S_\varphi\supone \varphi_2\supzero&=
\begin{pmatrix}
\alpha \alpha^6+6 \alpha^4+10 \alpha^2+4 \\
\alpha^6+5 \alpha^4+6 \alpha^2+1 \\
\end{pmatrix},
\end{align}
which are all different one from another for general $\alpha$ and are  also different from all the vectors introduced before. 
A direct computation shows, for instance, the biorthogonality
\begin{align}
\scp<S_\varphi\supone \psi_k\supzero,S_\psi\supone \varphi_l\supzero> = \scp<S_\psi\supone \psi_k\supzero,S_\varphi\supone \varphi_l\supzero> = \delta_{k,l},
\end{align}
as we deduced in Section \ref{sect2}.

In this way we construct several sets of biorthogonal bases of $\Hil$, each one consisting of eigenstates of some matrix related to the
original Hamiltonian $H$. 
We notice that the procedure is essentially independent of the fact that  $E_1$ and $E_2$ are real or complex.
The main difference lies in the fact that when they are all real, the number of isospectral operators doubles.

Incidentally, we observe that (almost) all the matrices and vectors which are found during the computations are different from those we find at earlier steps. This is important to clarify the non triviality of our procedure.


\subsection{An example of off-diagonal non-self-adjointness: the clean and dirty Hatano-Nelson model}
\label{subsec3.2}

In the present subsection, we consider various versions of the Hatano-Nelson (HN) model~\cite{Hatano96,Hatano97}, in which the non-self-adjointness comes from the asymmetry in real off-diagonal elements.
The HN model was originally considered for an effective model of a statistical physical system of magnetic flux lines in a type-II superconductor.
We first consider two versions under open and periodic boundary conditions without any diagonal elements.
We then add diagonal elements to periodic systems.
We will see that addition of few matrix elements drastically changes the development of new isospectral Hamiltonians.


\subsubsection{An open clean system with asymmetric hopping}

Though the original HN model contained impurity potentials, we first consider the clean version defined on a chain of $L$ lattice sites, labeled by $x$: 
\begin{align}\label{322}
H_{HN}&:=-\hop \sum_{x=1}^{L-1}
\qty(e^g \dyad{x+1}{x}+e^{-g}\dyad{x}{x+1})
\\\label{322-1}
&=-\hop\begin{pmatrix}
0 &  e^{-g} & & & & \\
 e^{g} & 0 &  e^{-g} & & & \\
&  e^{g} & 0 & \ddots & & \\
& & \ddots & \ddots &   e^{-g} & \\
& & &   e^{g} & 0 &   e^{-g} \\
& & & &   e^{g} & 0 \\
\end{pmatrix},
\end{align}
where $x$ refers to a lattice site, $\hop ,g\in\mathbb{R}$ and we have put $\hbar=1$ to simplify the notation. 
$H_{HN}$ would be a self-adjoint operator if $g=i\theta$ were a pure imaginary number;
then $\theta\in\mathbb{R}$ would be referred to as the Peierls phase; see \eqref{327} below.

A similar matrix was recently considered in Ref.~\cite{santos} in order to describe an asymmetric tunneling between two energy-degenerate sites coupled by a non-reciprocal interaction without dissipation. 
Later in Ref.~\cite{bagaop2015a}, the same model was used in connection with the right definition of transition probabilities for systems driven by non self-adjoint Hamiltonian. 
In its simplified version in their notation, the non-self-adjoint Hamiltonian was given by
\begin{align}\label{323}
H:=- g' 
\begin{pmatrix}
0 & 1-k \\
1+k & 0 \\
\end{pmatrix},
\end{align}
where $g'\in\Bbb R$ and $k\in]-1,1[$. 
The $2\times2$-version of the Hamiltonian~\eqref{322} and the one~\eqref{323} are equivalent to each other under the parameter correspondence
\begin{align}
{\hop}^2={g'}^2(1-k^2),
\qquad
e^{2g}=\frac{1+k}{1-k}.
\end{align}

The Hamiltonian~\eqref{322} can be transformed into a real symmetric matrix as follows; 
\begin{align}\label{325}
S^{-1}H_{HN}S=\eval{H_{HN}}_{g=0},
\qquad\mbox{where}\qquad
S=\begin{pmatrix}
1 &  & & & &\\
& e^{g} & & & &\\
& & e^{2g} & & & \\
& & & \ddots & & \\
& & & & e^{(L-2)g} & \\
& & & & & e^{(L-1)g} \\
\end{pmatrix}.
\end{align}
This is physically understood in the following way~\cite{Hatano96,Hatano97}.
A vector potential due to a magnetic field is often represented in the discretized space by the Peierls substitution in the form 
\begin{align}\label{327}
H_\textrm{Peierls}= -\hop
\begin{pmatrix}
0 & e^{i\theta} & & & & \\
 e^{-i\theta} & 0 &  e^{i\theta} & & & \\
&   e^{-i\theta} & 0 & \ddots & & \\
& & \ddots & \ddots &   e^{i\theta} & \\
& & &   e^{-i\theta} & 0 &   e^{i\theta} \\
& & & &   e^{-i\theta} & 0 \\
\end{pmatrix},
,
\end{align}
where $\theta=eAa$ with $e$ the electron charge, $A$ the vector potential and $a$ the lattice constant. 
If there is no loop in the lattice, however, we should be able to gauge it out with the gauge transformation $e^{-i\theta x}$, where $x$ 
is the lattice label $x=1,2, \cdots$.
The symmetrization in \eqref{325} is nothing but gauging out the imaginary gauge field $\theta=ig$ with the gauge transformation $e^{gx}$.
This is a physical background of the symmetrization \eqref{325}.

Let us put $\hop=1$ for brevity hereafter.
After the symmetrization in \eqref{325}, we find the real eigenvalues of the Hamiltonian~\eqref{322-1} as follows:
\begin{align}
E_n:=-2\cos k_n,
\qquad
\mbox{where}
\quad
k_n=\frac{n\pi}{L+1}
\end{align}
for $n=1,2,\cdots,L$. 
Figure~\ref{fig1}(a) exemplifies the spectrum for $L=11$, $\hop=1$, $g=0.1$; note, however, that the spectrum actually does not depend on $g$. {As a consequence, the eigenvalues for  $H_{HN}$ and ${H_{HN}}^\dag$ coincide. In fact, since the eigenvalues of $H_{HN}$ do not depend on $g$, these eigenvalues coincide with those we get for (\ref{322}) taking $g=0$. But this matrix is self-adjoint: $H_{HN}|_{g=0}=H_{HN}^\dagger|_{g=0}$. Hence its eigenvalues must all be real.}

\begin{figure}
\begin{minipage}[t]{0.45\textwidth}
\vspace{0mm}
\includegraphics[width=\textwidth]{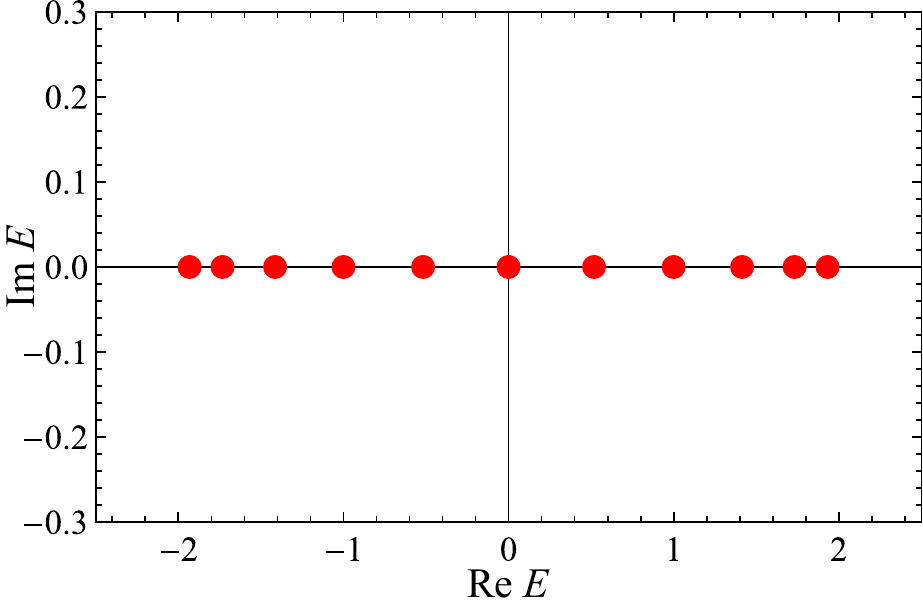}
\centering (a)
\end{minipage}
\hfill
\begin{minipage}[t]{0.45\textwidth}
\vspace{0mm}
\includegraphics[width=\textwidth]{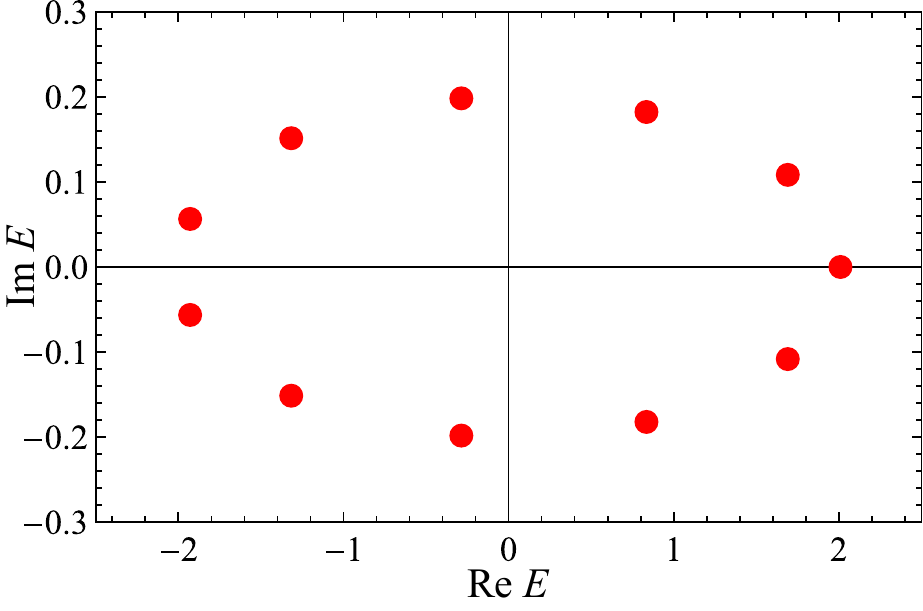}
\centering (b)
\end{minipage}
\\[\baselineskip]
\begin{minipage}[t]{0.45\textwidth}
\vspace{0mm}
\includegraphics[width=\textwidth]{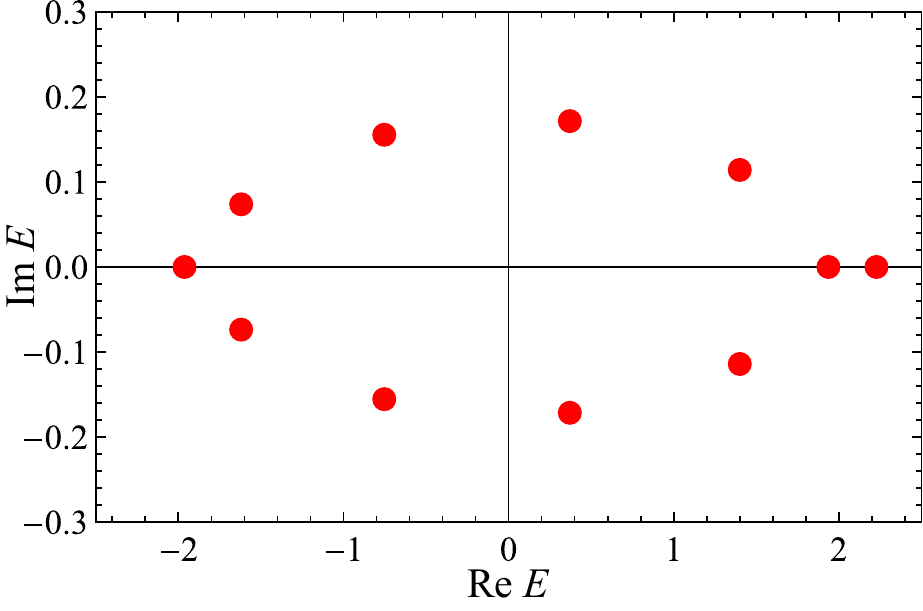}
\centering (c)
\end{minipage}
\hfill
\begin{minipage}[t]{0.45\textwidth}
\vspace{0mm}
\includegraphics[width=\textwidth]{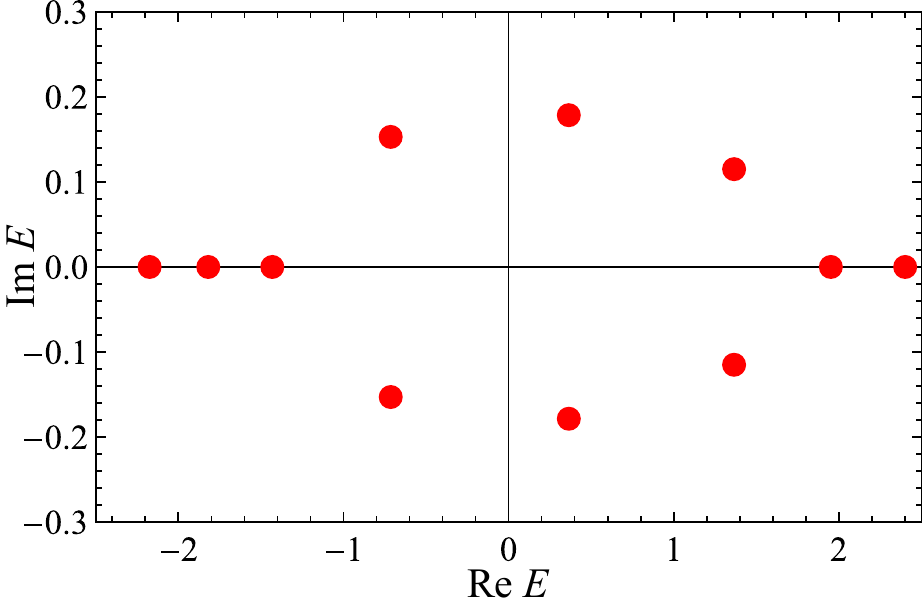}
\centering (d)
\end{minipage}
\\[\baselineskip]
\centering
\includegraphics[width=0.45\textwidth]{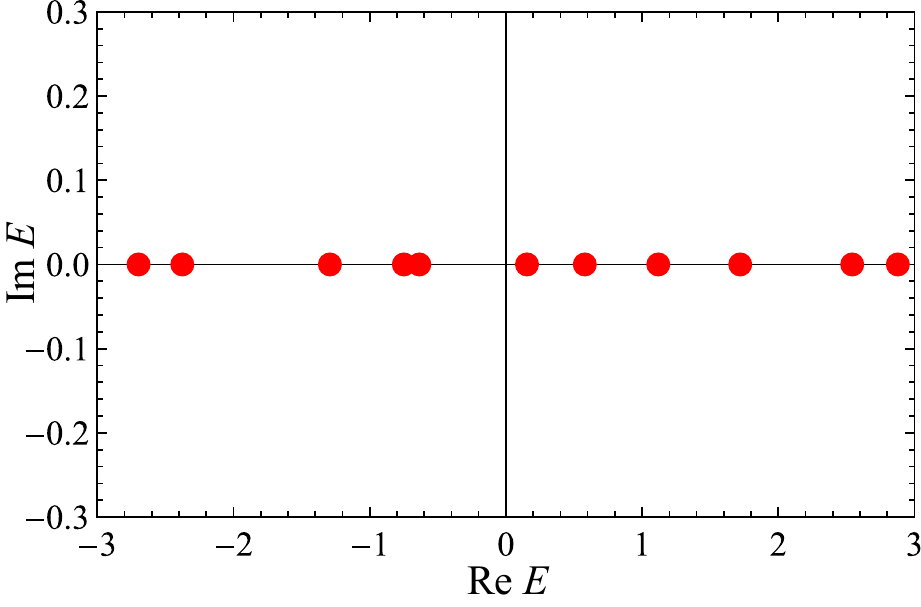}
\\
\centering(e)
\caption{The eigenvalue distribution plotted on a complex plane for the models throughout Subsect.~\ref{subsec3.2}, namely, (a)~\eqref{322}, (b)~\eqref{343}, (c)~\eqref{357}, (d)~\eqref{357-1} in a moderately dirty case, and (e)~\eqref{357-1} in a quite dirty case, all with  $L=11$, $\hop=1$, $g=0.1$:
in (c), we set only one impurity at $x=6$ with $v_6=1$;
in (d), we chose the potential of each site randomly from the box distribution $[-1,1]$, while in (e) from $[-2.5,2.5]$.}
\label{fig1}
\end{figure}

The corresponding eigenvectors of $H_{HN}$ and ${H_{HN}}^\dag$ are respectively given by
\begin{align}
\ket{\varphi_n\supzero}&:=\sqrt{\frac{2}{L+1}}\sum_{x=1}^L e^{g(x-1)}\sin\qty(k_n x)\ket{x},
\\
\ket{\psi_n\supzero}&:=\sqrt{\frac{2}{L+1}}\sum_{x=1}^L e^{-g(x-1)}\sin\qty(k_n x)\ket{x},
\end{align}
where $x=1,2,\cdots,L$ are lattice points. 
%
We can physically understand these eigenvectors in terms of the imaginary gauge transformation too~\cite{Hatano96,Hatano97} and mathematically in terms of the similarity transformation~\eqref{325}.
The eigenvectors satisfy the biorthonormality $\scp<\varphi_m^{(0)},\psi_n^{(0)}>=\delta_{m,n}$ for $m,n=1,2,\cdots,L$,
where the scalar product $\scp<.,.>$ here is the ordinary product in $\Hil=\mathbb{C}^L$.
A simple algebra also shows the resolution of identity $\sum_{n=1}^L\dyad{\varphi_n}{\psi_n}=\sum_{n=1}^L\dyad{\psi_n}{\varphi_n}=\1$. 

The intertwining operators defined in \eqref{24} are given by the following simple diagonal matrices
\begin{align}\label{329}
S_\varphi\supzero
=
\begin{pmatrix}
1 & & & & \\
 & e^{2g} & & &\\
& & e^{4g} & & \\
& & & \ddots & \\
& & & & e^{2(L-1)g}
\end{pmatrix},
\qquad 
S_\psi\supzero
=
\begin{pmatrix}
1 & & & & \\
 & e^{-2g} & & &\\
& & e^{-4g} & & \\
& & & \ddots & \\
& & & & e^{-2(L-1)g}
\end{pmatrix}.
\end{align}
We observe that the former is $S^2$ and the latter is $S^{-2}$, where $S$ is the symmetrizing similarity transformation in \eqref{325}.

As expected, we indeed find $S_\varphi=S_\psi^{-1}=S_\varphi^\dagger$ and see that they are positive operators.
The adjoints defined in \eqref{27} and \eqref{28} of the Hamiltonian~\eqref{322} as well as its $\dagger$-adjoint are given by
\begin{align}
{H_{HN}}^{\flat_0}&=
\begin{pmatrix}
0 & -e^{3g} & & & \\
-e^{-3g} & 0 & -e^{3g} & & \\
& -e^{-3g} & 0 & \ddots & \\
& & \ddots & \ddots & - e^{3g} \\
& & & - e^{-3g} & 0  \\
\end{pmatrix},
\\
{H_{HN}}^{\sharp_0}&=
\begin{pmatrix}
0 & -e^{-g} & & & \\
-e^{g} & 0 & -e^{-g} & & \\
& -e^{g} & 0 & \ddots & \\
& & \ddots & \ddots & - e^{-g} \\
& & & - e^{g} & 0  \\
\end{pmatrix},
\\
\qty({H_{HN}}^\dagger)^{\flat_0}&=
\begin{pmatrix}
0 & -e^{g} & & & \\
-e^{-g} & 0 & -e^{g} & & \\
& -e^{-g} & 0 & \ddots & \\
& & \ddots & \ddots & - e^{g} \\
& & & - e^{-g} & 0  \\
\end{pmatrix},
\\
\qty({H_{HN}}^\dagger)^{\sharp_0}&=
\begin{pmatrix}
0 & -e^{-3g} & & & \\
-e^{3g} & 0 & -e^{-3g} & & \\
& -e^{3g} & 0 & \ddots & \\
& & \ddots & \ddots & - e^{-3g} \\
& & & - e^{3g} & 0  \\
\end{pmatrix},
\end{align}
They satisfy the identities \eqref{29} as well as the pseudo-Hermiticity \eqref{223-1}.
Comparing these new Hamiltonians with the symmetrization procedure in Eq.~\eqref{325}, we observe that the generation of the new Hamiltonians can be referred to as the asymmetrization;
using powers of the symmetrizing similarity transformation $S$ in \eqref{325}, we make the Hamiltonian $H_{HN}$ more asymmetric.
We do not see any self-adjointness beyond the pseudo-Hermiticity $H^{\sharp_0}=H$ because other subsequent adjoints are too asymmetric to be equal to $H$; see below.
The Hamiltonian will become even more asymmetric in the next iteration.

The next round starts with \eqref{210}, which gives the new vectors 
\begin{align}
\ket{\varphi_n\supone}:=\sqrt{\frac{2}{L+1}}e^{3g(x-1)}\sin\qty(k_n x)\ket{x},
\quad
\ket{\psi_n\supone}:=\sqrt{\frac{2}{L+1}}e^{-3g(x-1)}\sin\qty(k_n x)\ket{x},
\end{align}
where $x=1,2,\cdots,L$.
They are again biorthogonal and satisfy the resolution of identity.
The new intertwining operators \eqref{213} are then given by
\begin{align}
S_\varphi\supone
=
\begin{pmatrix}
1 & & & & \\
 & e^{6g} & & &\\
& & e^{12g} & & \\
& & & \ddots & \\
& & & & e^{6(L-1)g}
\end{pmatrix},
\qquad 
S_\psi\supone
=
\begin{pmatrix}
1 & & & & \\
 & e^{-6g} & & &\\
& & e^{-12g} & & \\
& & & \ddots & \\
& & & & e^{-6(L-1)g}
\end{pmatrix},
\end{align}
which indeed satisfy \eqref{218} with the intertwining operators \eqref{329} in the first generation.
The new adjoints are given by
\begin{align}
{H_{HN}}^{\flat_1}&=
\begin{pmatrix}
0 & -e^{7g} & & & \\
-e^{-7g} & 0 & -e^{7g} & & \\
& -e^{-7g} & 0 & \ddots & \\
& & \ddots & \ddots & - e^{7g} \\
& & & - e^{-7g} & 0  \\
\end{pmatrix},
\\
{H_{HN}}^{\sharp_1}&=
\begin{pmatrix}
0 & -e^{-5g} & & & \\
-e^{5g} & 0 & -e^{-5g} & & \\
& -e^{5g} & 0 & \ddots & \\
& & \ddots & \ddots & - e^{-5g} \\
& & & - e^{5g} & 0  \\
\end{pmatrix},
\\
\qty({H_{HN}}^\dagger)^{\flat_1}&=
\begin{pmatrix}
0 & -e^{5g} & & & \\
-e^{-5g} & 0 & -e^{5g} & & \\
& -e^{-5g} & 0 & \ddots & \\
& & \ddots & \ddots & - e^{5g} \\
& & & - e^{-5g} & 0  \\
\end{pmatrix},
\\
\qty({H_{HN}}^\dagger)^{\sharp_1}&=
\begin{pmatrix}
0 & -e^{-7g} & & & \\
-e^{7g} & 0 & -e^{-7g} & & \\
& -e^{7g} & 0 & \ddots & \\
& & \ddots & \ddots & - e^{-7g} \\
& & & - e^{7g} & 0  \\
\end{pmatrix},
\end{align}
which satisfy \eqref{237-1}.

We can go further but we stop here.
A point to note is that while the Hamiltonian becomes more and more asymmetric, the tridiagonal structure of the original Hamiltonian \eqref{322} is kept in all its adjoints.
Because the hopping elements become more asymmetric as we proceed in the iteration, the eigenvector $\phi_n^{(\nu)}$ become more biased to the right end as $e^{mg(x-1)}$ with a larger integer $m$, whereas the eigenvector $\psi_n^{(\nu)}$ become more biased to the left end as $e^{-mg(x-1)}$.


\subsubsection{A periodic clean system with asymmetric hopping}
\label{sec3.2.2}
The situation discussed so far changes drastically when we require the periodic boundary condition to the model \eqref{322}:
\begin{align}\label{343}
H_{HN}&:=-\hop \sum_{x=1}^{L}
\qty(e^g \dyad{x+1}{x}+e^{-g}\dyad{x}{x+1})
\end{align}
with the identification $\ket{x+L}=\ket{x}$.
(The Hamiltonian is now different from \eqref{322} but we use the same notation $H_{HN}$ for brevity.
This notation applies only throughout the present subsection.)
This adds only two off-diagonal elements to the Hamiltonian on the top-right and bottom-left corners:
\begin{align}\label{345}
H_{HN}&=\begin{pmatrix}
0 & -\hop e^{-g} & & & & -\hop e^{g} \\
-\hop e^{g} & 0 & -\hop e^{-g} & & & \\
& -\hop  e^{g} & 0 & \ddots & & \\
& & \ddots & \ddots & -\hop  e^{-g} & \\
& & & -\hop  e^{g} & 0 & -\hop  e^{-g} \\
-\hop e^{-g} & & & & -\hop  e^{g} & 0 \\
\end{pmatrix}.
\end{align}
The addition of these elements produces translational invariance in the Hamiltonian;
that is, the Hamiltonian is invariant under the shift operation $x\to x+1$ thanks to the periodic boundary condition $\ket{x+L}=\ket{x}$.
Since the system is now a loop, we cannot gauge out the vector potential $\theta=ig$.
The existence of the imaginary vector potential implies that an imaginary magnetic field threads through the ring.

Because of the translational symmetry, we can diagonalize the matrix and its $\dag$-adjoint by means of the Fourier transformation,
in other words, by the following $g$-independent eigenvectors, which are actually equal to each other:
\begin{align}\label{345-1}
\ket{\varphi_n\supzero}=\ket{\psi_n\supzero}&=\frac{1}{\sqrt{L}}\sum_{x=1}^L e^{ik_nx}\ket{x},
\qquad\mbox{where}\quad
k_n=\frac{2n\pi}{L}
\end{align}
for $n=1,2,3,\cdots,L$, with the respective eigenvalues
\begin{align}\label{346}
E_n&=-2\cos(k_n+ig)=-2\cos k_n\cosh g-2i\sin k_n\sinh g,
\\
\overline{E}_{n}&=-2\cos(k_n-ig)=-2\cos k_n\cosh g+2i\sin k_n\sinh g,
\end{align}
 where we set $\hop=1$ for brevity. We notice that they are now complex. {We notice also that now $E_n$ depends explicitly on $g$, and that each $E_n$ turns out to be real when $g=0$. This is in agreement with the fact that, when $g=0$, the Hamiltonian in (\ref{345}) is Hermitian.}
The fact that the eigenvectors are plane waves indicates that the wave function circles around the periodic system.
Indeed, because of the imaginary magnetic field threading through the loop of the periodic system, the imaginary current flows around the system~\cite{Hatano96,Hatano97}.

The eigenvalue spectrum~\eqref{346} is exemplified in Fig.~\ref{fig1}(b) for $L=11$, $\hop=1$, $g=0.1$.
All eigenvalues generically sit on the ellipse 
\begin{align}\label{346-1}
\qty(\frac{\Re E_n}{\cosh g})^2+\qty(\frac{\Im E_n}{\sinh g})^2=4.
\end{align}

Because $\varphi_n\supzero=\psi_n\supzero$, both the intertwining operators \eqref{24} reduce to
\begin{align}\label{347}
S_\varphi\supzero=S_\psi\supzero=\sum_{n=1}^L\dyad{\varphi_n\supzero}{\psi_n\supzero}=\1.
\end{align}
Therefore, no new adjoints are generated.
This is due to the fact that the two basis sets in \eqref{345-1} collapse into a single orthonormal basis despite the fact that the Hamiltonian is not Hermitian. 
This collapse happens because of the translational invariance produced by the two elements on the corners in \eqref{345}.
Indeed, the Hamiltonian~\eqref{345} commutes with the shift operator as $H_{NH}S=SH_{NH}$, where
\begin{align}
S:=\sum_{x=1}^L\dyad{x+1}{x}
=\begin{pmatrix}
0 &  & & & & 1 \\
1 & 0 &  & & & \\
& 1 & 0 &  & & \\
& & \ddots & \ddots &  & \\
& & & 1 & 0 &  \\
 & & & & 1 & 0 \\
\end{pmatrix}.
\end{align}
Such Hamiltonians are generally diagonalized by the Fourier transform operator
\begin{align}
F&=\sum_{x=1}^{L}\sum_{y=1}^L e^{ik (x+y-2)}\dyad{x}{y}
\\
&=\frac{1}{\sqrt{L}}
\begin{pmatrix}
1 & 1 & 1 & \cdots & 1 \\
1 & e^{ik} & e^{2ik} & \cdots & e^{ik(L-1)} \\
1 & e^{2ik} & e^{4ik} & \cdots & e^{2ik(L-1)} \\
\vdots & \vdots & \vdots & \ddots & \vdots \\
1 & e^{ik(L-1)} & e^{2ik(L-1)} &\cdots & e^{ik(L-1)^2}
\end{pmatrix},
\end{align}
where $k=2\pi/L$.

%
%
%
%
%
%
%
%
%

Because of the relation~\eqref{347}, we have
\begin{align}
H^{\flat_0}=H^{\sharp_0}=H^\dag;
\end{align}
in particular, the latter equation implies that the present Hamiltonian \eqref{345} is \textit{not} pseudo-Hermitian: $H^{\sharp_0}\neq H$,
which is in agreement with Theorem \ref{thm1} because the eigenvalues~\eqref{346} are complex numbers, for $g\neq0$.

\subsubsection{A periodic system with asymmetric hopping and impurities}


The situation further changes drastically when we add one element on the diagonal of the periodic system~\eqref{345}.
The additional diagonal element physically represents an impurity potential and hence means the loss of the translational invariance.
Let us now numerically treat the Hamiltonian
\begin{align}\label{357}
H_{HN}&:=-\hop\sum_{x=1}^{L}
\qty(e^g \dyad{x+1}{x}+e^{-g}\dyad{x}{x+1})
+v_{x_0}\dyad{x_0},
\end{align}
where $\hop, g, v_{x_0}\in \mathbb{R}$.
(The Hamiltonian is again different from \eqref{322} and \eqref{343}, but we still use the same notation $H_{HN}$ for brevity.
This notation applies only throughout the present subsection.)

The spectrum is exemplified in Fig.~\ref{fig1}(c) for $L=11$, $\hop=1$, $g=0.1$ with the impurity potential set to $v_6=1$ at $x_0=6$, the medium point of the lattice.
When compared to the spectrum in Fig.~\ref{fig1}(b), most of the eigenvalues remain complex while we have two more real eigenvalues.
Figure~\ref{fig2}, on the other hand, shows the development of the elements of the isospectral Hamiltonians generated.
\begin{figure}
\begin{minipage}[t]{0.3\textwidth}
\vspace{0mm}
\includegraphics[width=\textwidth]{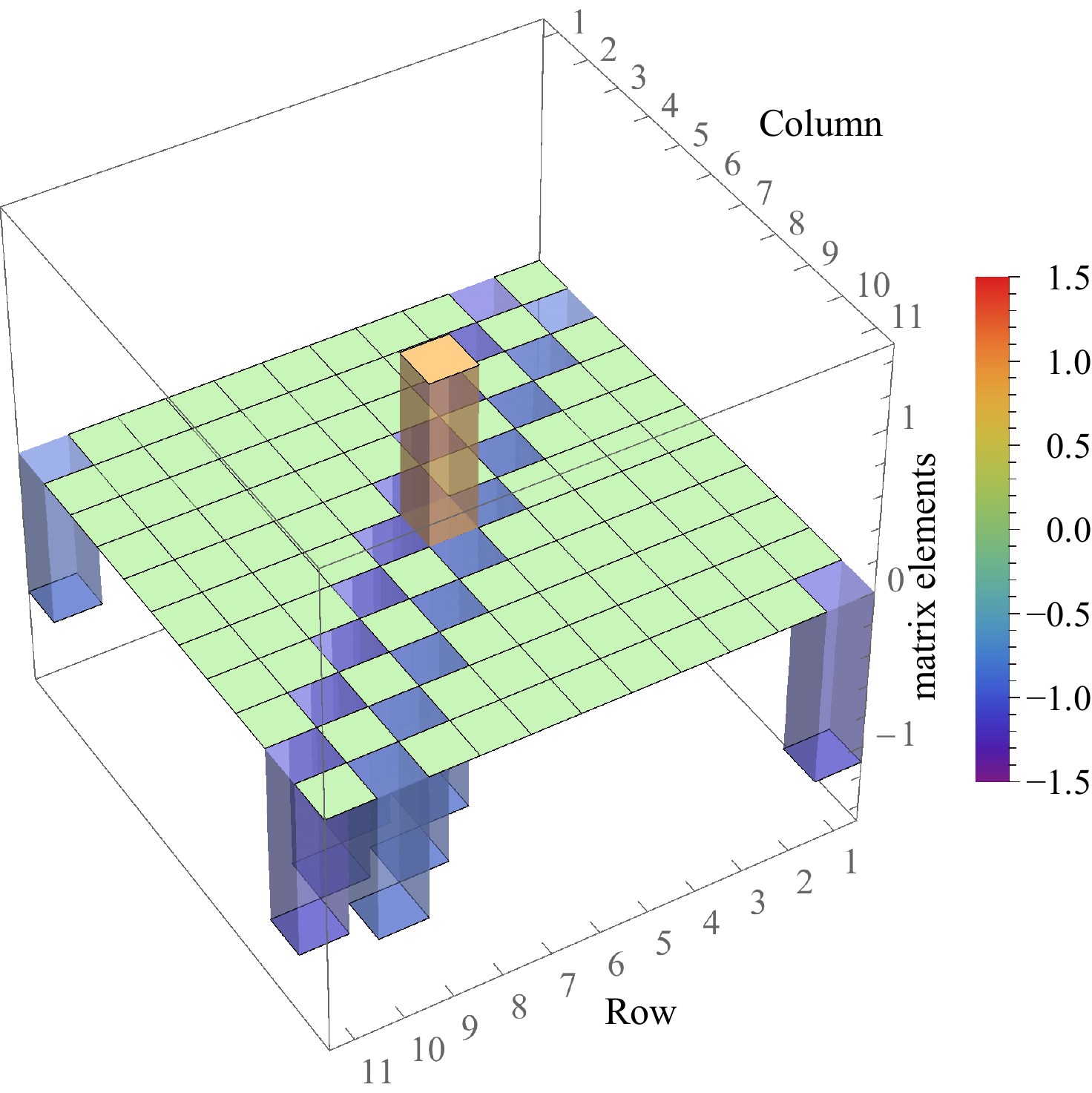}
\centering (a)
\end{minipage}
\hfill
\begin{minipage}[t]{0.3\textwidth}
\vspace{0mm}
\includegraphics[width=\textwidth]{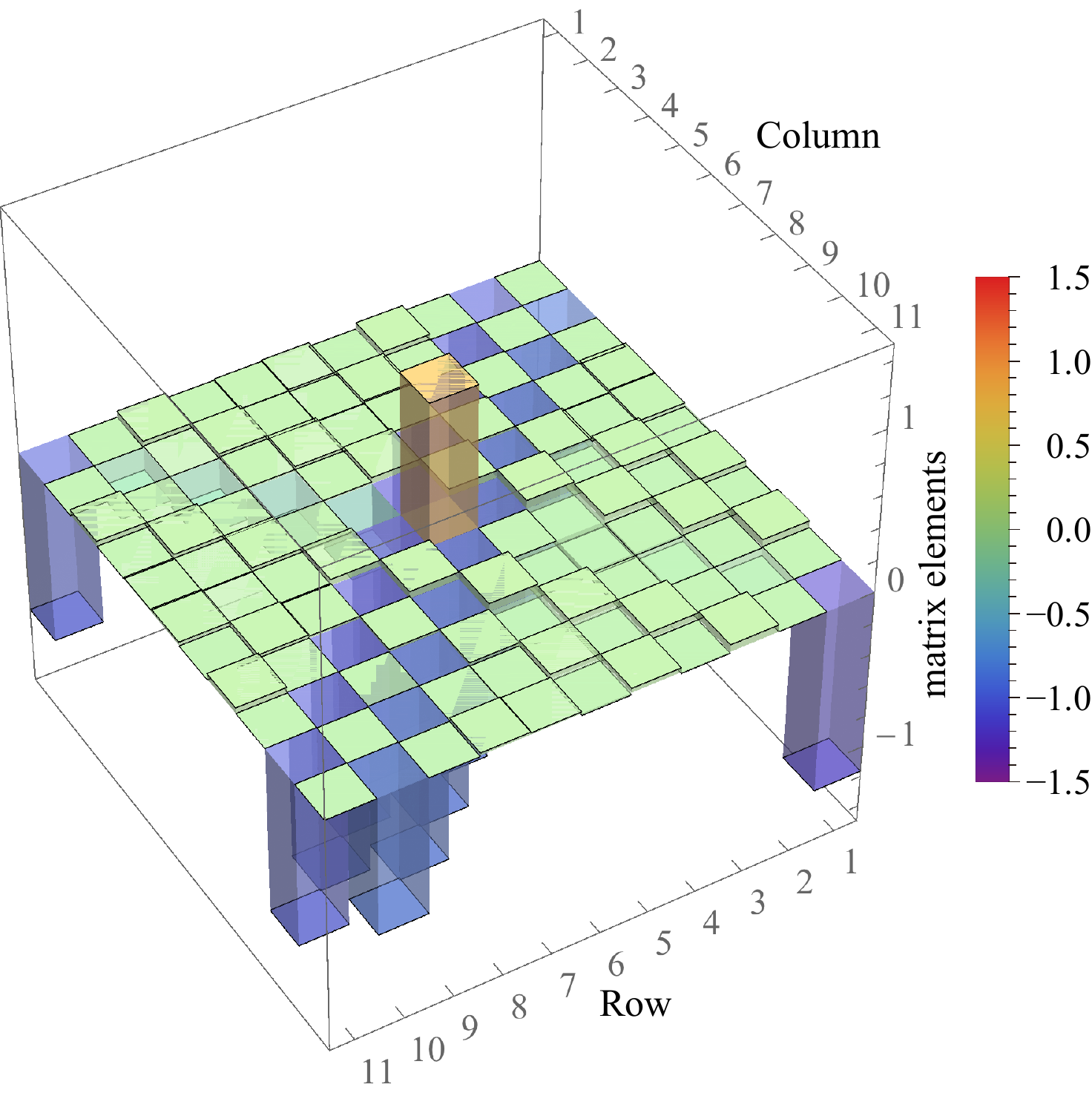}
\centering (b)
\end{minipage}
\hfill
\begin{minipage}[t]{0.3\textwidth}
\vspace{0mm}
\includegraphics[width=\textwidth]{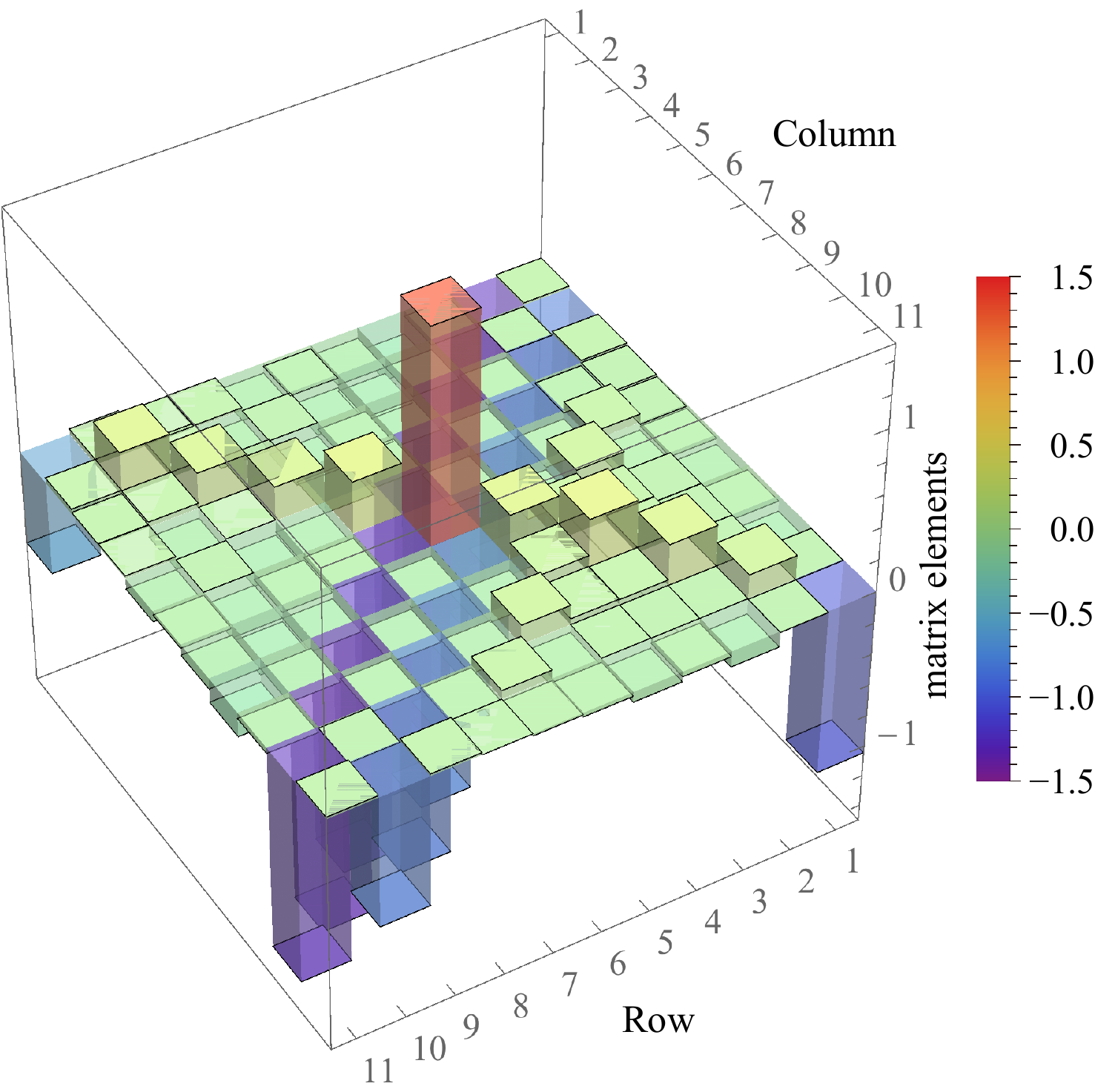}
\centering (c)
\end{minipage}
\vspace{\baselineskip}
\\
\begin{minipage}[t]{0.3\textwidth}
\vspace{0mm}
\includegraphics[width=\textwidth]{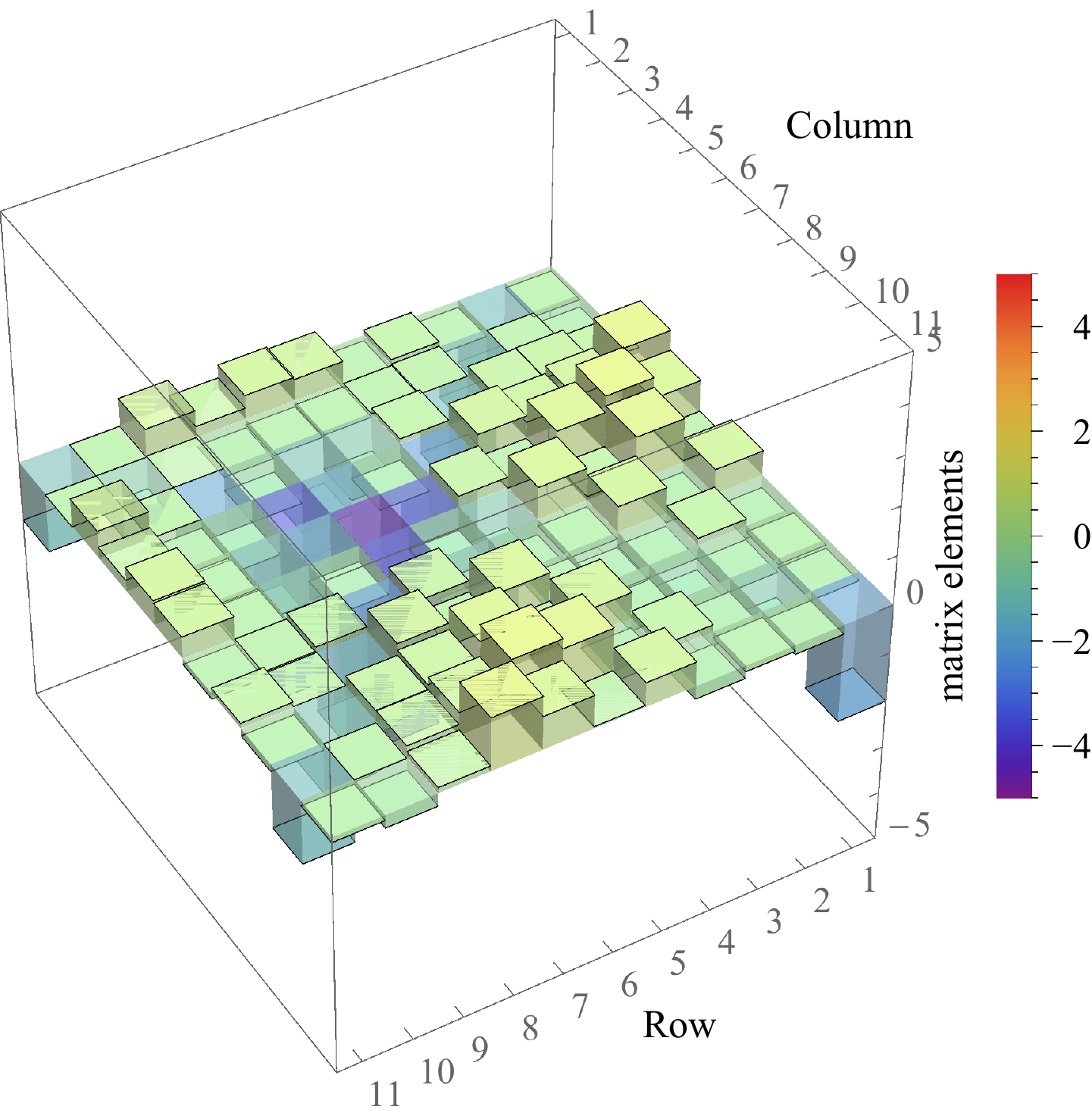}
\centering (d)
\end{minipage}
\hfill
\begin{minipage}[t]{0.3\textwidth}
\vspace{0mm}
\includegraphics[width=\textwidth]{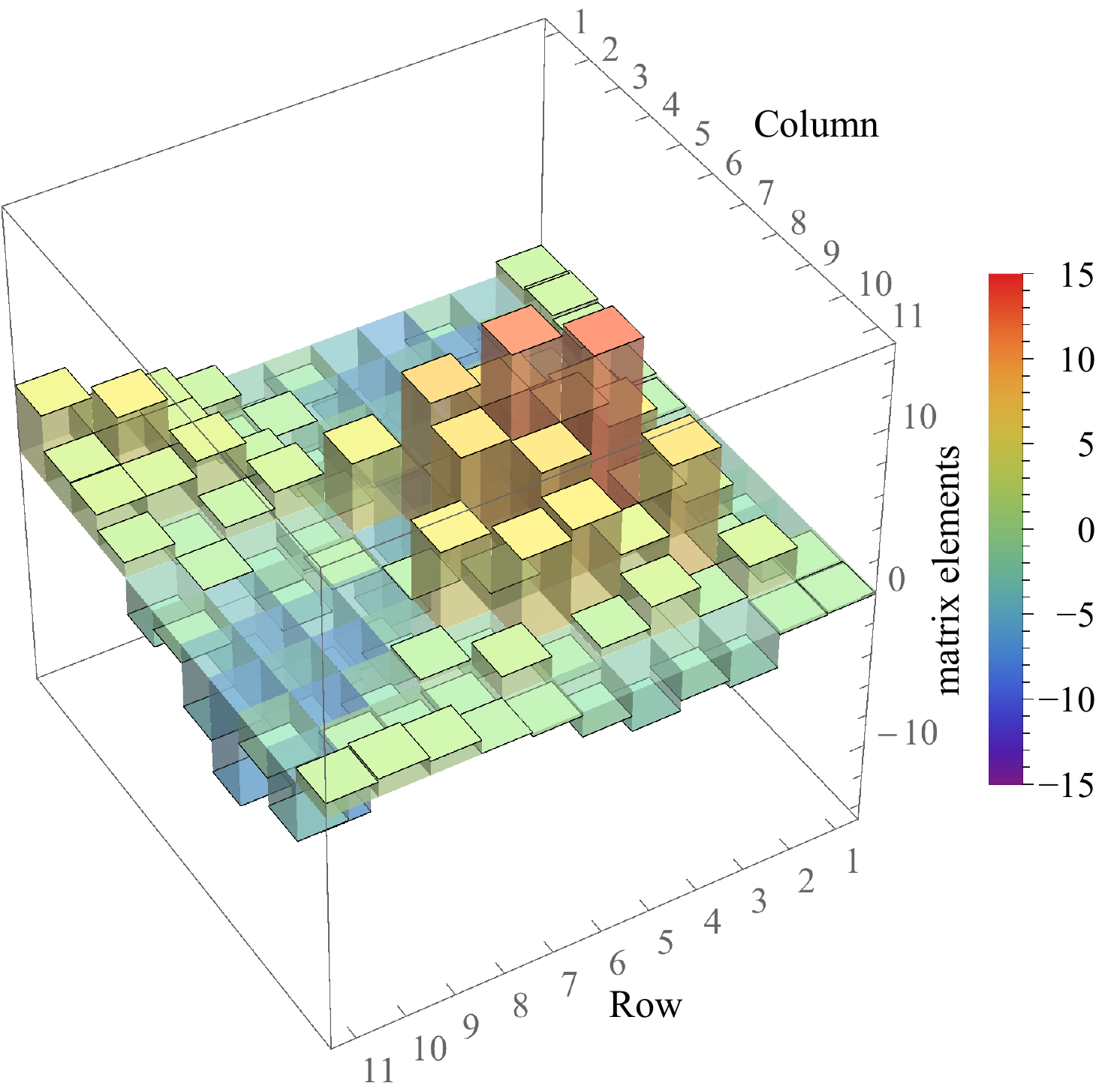}
\centering (e)
\end{minipage}
\hfill
\begin{minipage}[t]{0.3\textwidth}
\vspace{0mm}
\includegraphics[width=\textwidth]{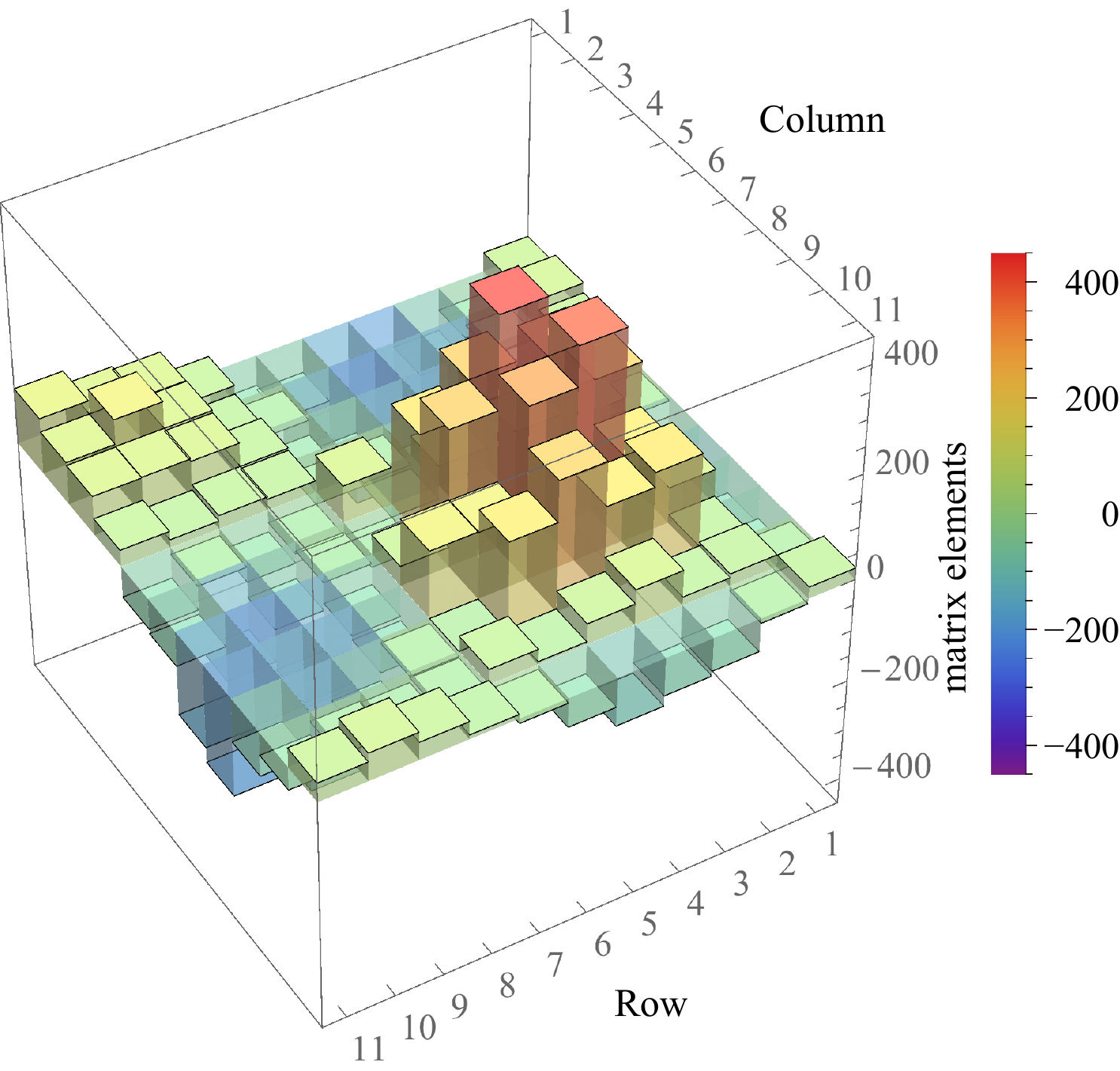}
\centering (f)
\end{minipage}
\vspace{\baselineskip}
\\
\begin{minipage}[t]{0.3\textwidth}
\vspace{0mm}
\includegraphics[width=\textwidth]{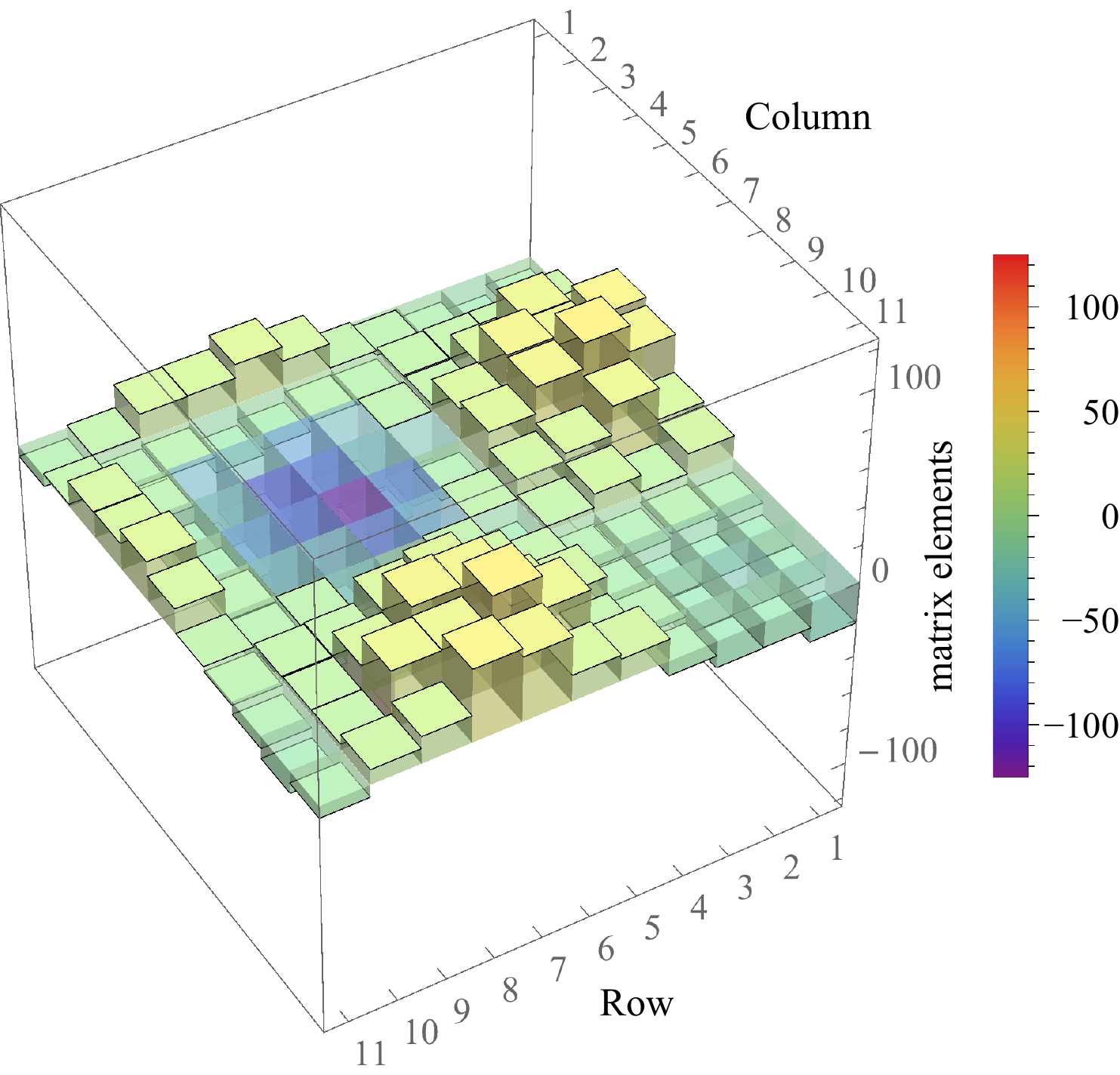}
\centering (g)
\end{minipage}
\hfill
\begin{minipage}[t]{0.3\textwidth}
\vspace{0mm}
\includegraphics[width=\textwidth]{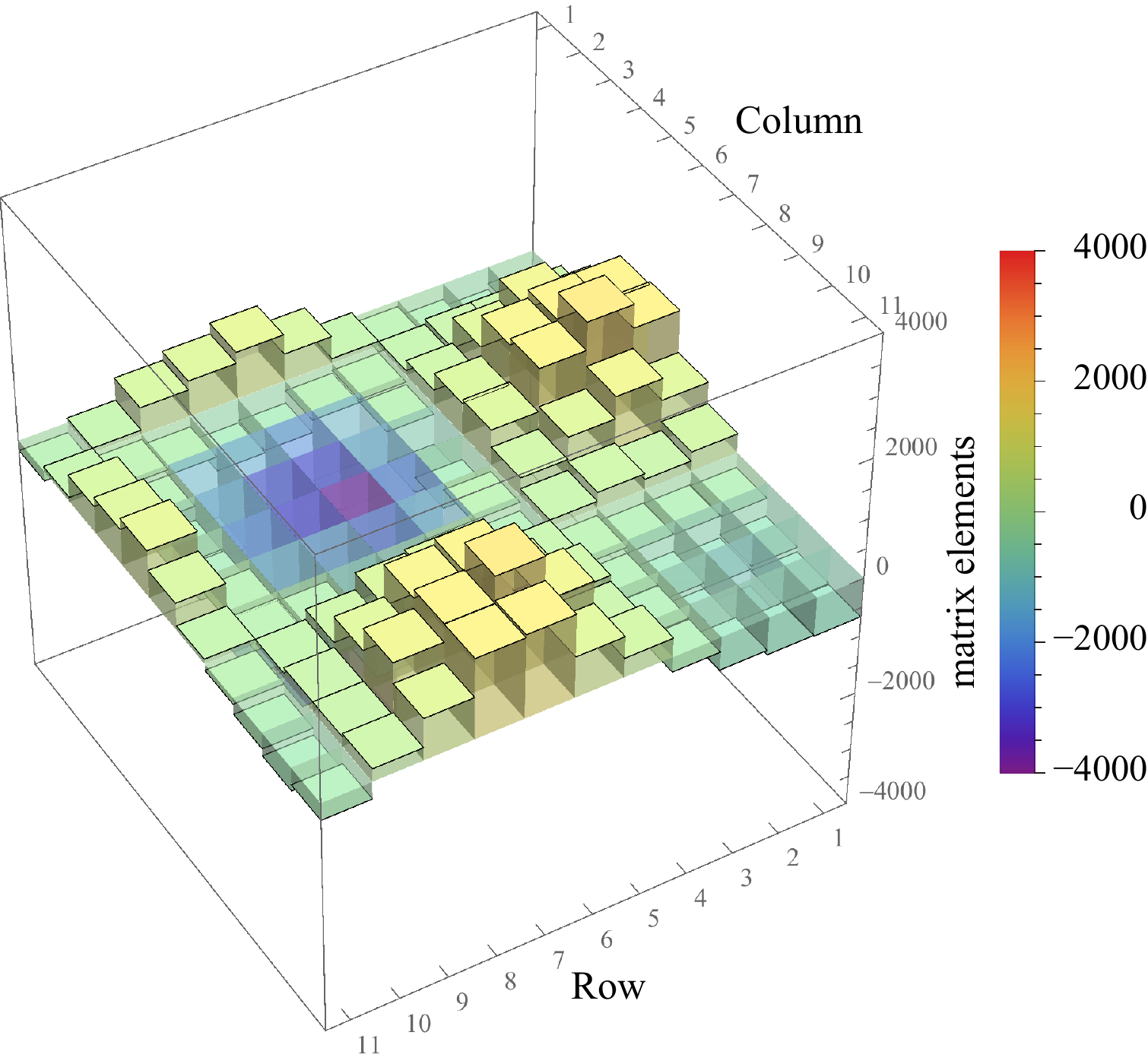}
\centering (h)
\end{minipage}
\hfill
\begin{minipage}[t]{0.3\textwidth}
\vspace{0mm}
\includegraphics[width=\textwidth]{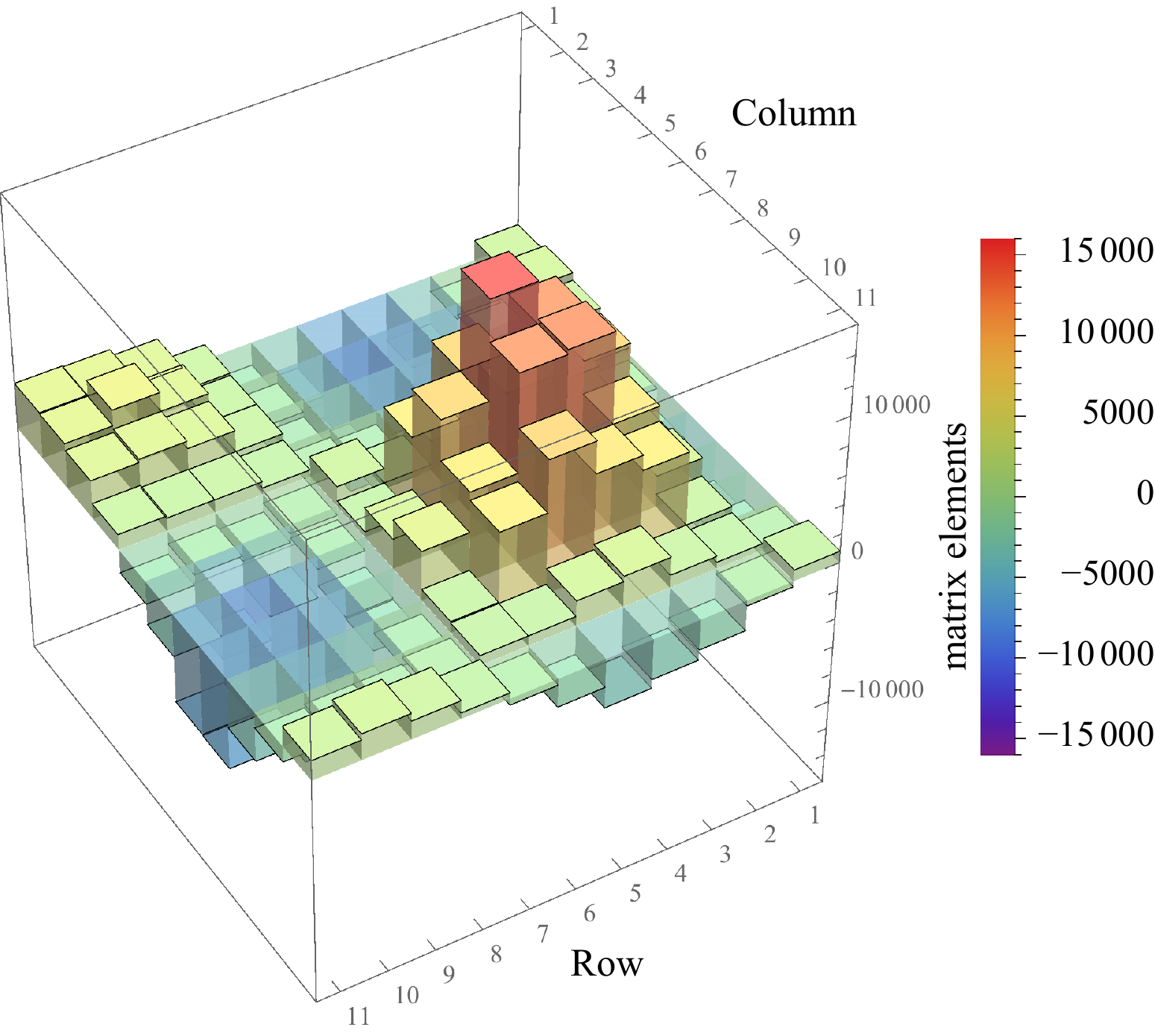}
\centering (i)
\end{minipage}
\caption{The elements of the Hamiltonians developed from the Hamiltonian~\eqref{357}.
We set $L=11$, $\hop=1$, $g=0.1$ with the impurity potential set to $v_6=1$ at $x_0=6$.
The elements of the original and subsequent isospectral Hamiltonians are all real: (a) $H$, 
(b) $\qty(H^\dag)^{\flat_0}$, (c) $\qty(H^\dag)^{\sharp_0}$,
(d) $\qty(H^\dag)^{\flat_1}$, (e) $\qty(H^\dag)^{\sharp_1}$,
(f) $\qty(H^\dag)^{\flat_{2,\alpha}}$, (g) $\qty(H^\dag)^{\sharp_{2,\alpha}}$,
(h) $\qty(H^\dag)^{\flat_{2,\alpha}}$, (i) $\qty(H^\dag)^{\sharp_{2,\beta}}$.
Note that the vertical scales may be different from each other.
}
\label{fig2}
\end{figure}
We can observe that once the translational invariance is broken, far-off-diagonal elements quickly develop non-zero values;
some of the elements grow rapidly as we advance the iteration.
We do not see any self-adjointness here, including the pseudo-Hermiticity;
even $\qty(H^\dag)^{\flat_0}$ in Fig.~\ref{fig2}(b) has nonzero elements at $(2,9)$, $(3,8)$, and so on. 

We next introduce diagonal random potentials over the entire system:
\begin{align}\label{357-1}
H_{HN}&:=-\hop\sum_{x=1}^{L}
\qty(e^g \dyad{x+1}{x}+e^{-g}\dyad{x}{x+1})
+\sum_{x=1}^L v_x\dyad{x},
\end{align}
where $\hop, g, v_{x}\in \mathbb{R}$ for all $x$.
We choose each of $v_x$ randomly from the box distribution $[-V,V]$.

This is the setting of the original HN model.
It was discovered~\cite{Hatano96,Hatano97} for this model that the hopping asymmetry and the randomness compete with each other.
More precisely, an eigenvector with a complex eigenvalue extends over the system, signifying a current circulating around the system as the plane wave solution~\eqref{345-1} does in the clean limit.
On the other hand, an eigenvector with a real eigenvalue (except for ones being real owing to the symmetry, such as the one on the right end of the spectrum in Fig.~\ref{fig1}(b)) is localized due to the random potential.
In short, whether an eigenvector is localized or extended corresponds to whether its eigenvalue is real or complex.

In the clean limit \eqref{343}, the eigenvalues, as is exemplified in Fig.~\ref{fig1}(b), are complex except the one being real because of the mirror symmetry with respect to the real axis. 
As we turn on and hike up the amplitude $V$ of the random potential, eigenvalues of each complex-conjugate pair approach to each other, collide with each other one pair by one pair, and respectively turn into two real eigenvalues.
In the dirty limit, all eigenvalues become real even though the Hamiltonian is still non-Hermitian because of the asymmetric hopping. 

Here we numerically check that the Hamiltonian is not pseudo-Hermitian in a moderately dirty case with complex eigenvalues still remaining, but it is so in a quite dirty case with real eigenvalues only.
In the first case, we used the parameter set $L=11$, $\hop=1$, $g=0.1$, and $V=1$ as in Fig.~\ref{fig1}(d); we find that only five out of eleven eigenvalues are real in this particular random realization.
The elements of the generated Hamiltonians are shown in Fig.~\ref{fig3}.
\begin{figure}
\begin{minipage}[t]{0.3\textwidth}
\vspace{0mm}
\includegraphics[width=\textwidth]{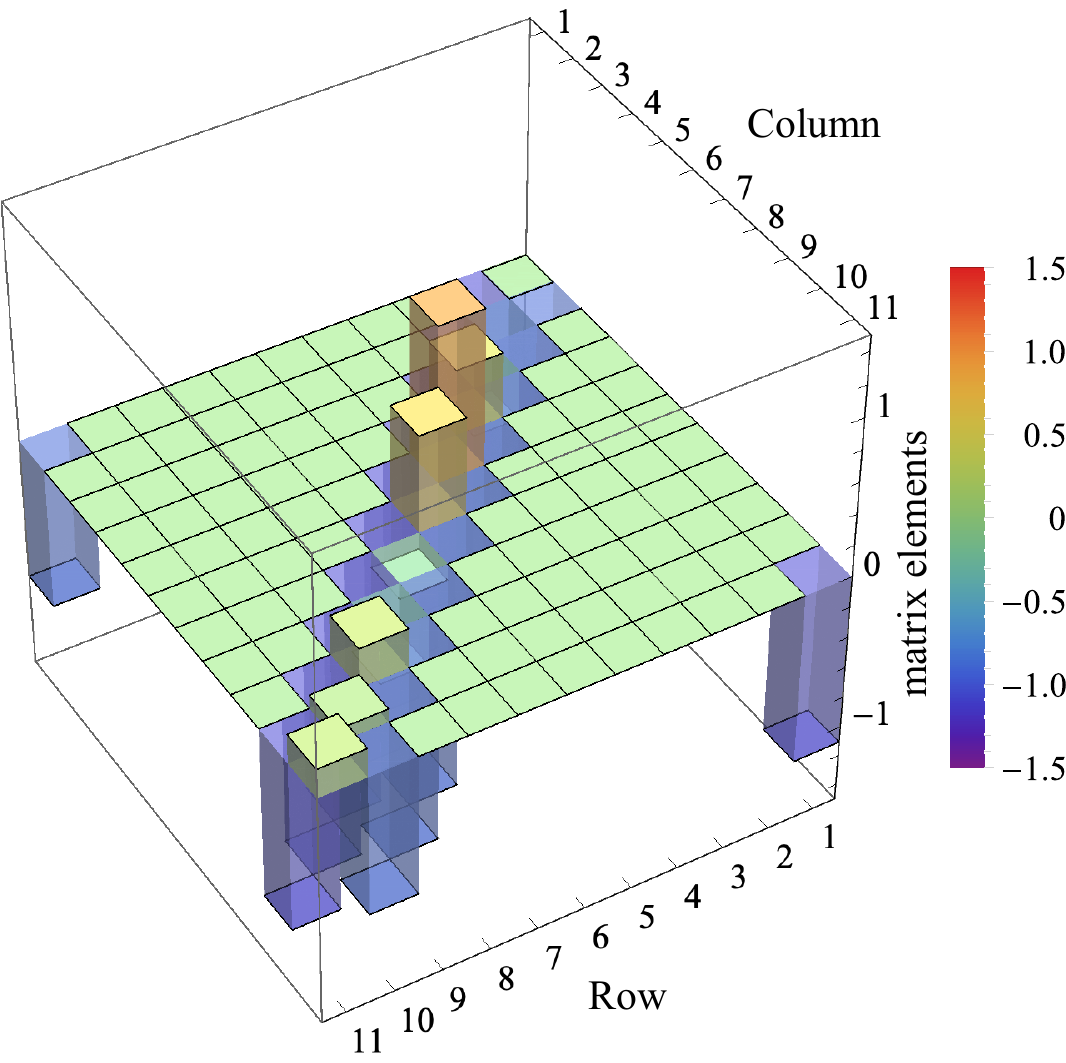}
\centering (a)
\end{minipage}
\hfill
\begin{minipage}[t]{0.3\textwidth}
\vspace{0mm}
\includegraphics[width=\textwidth]{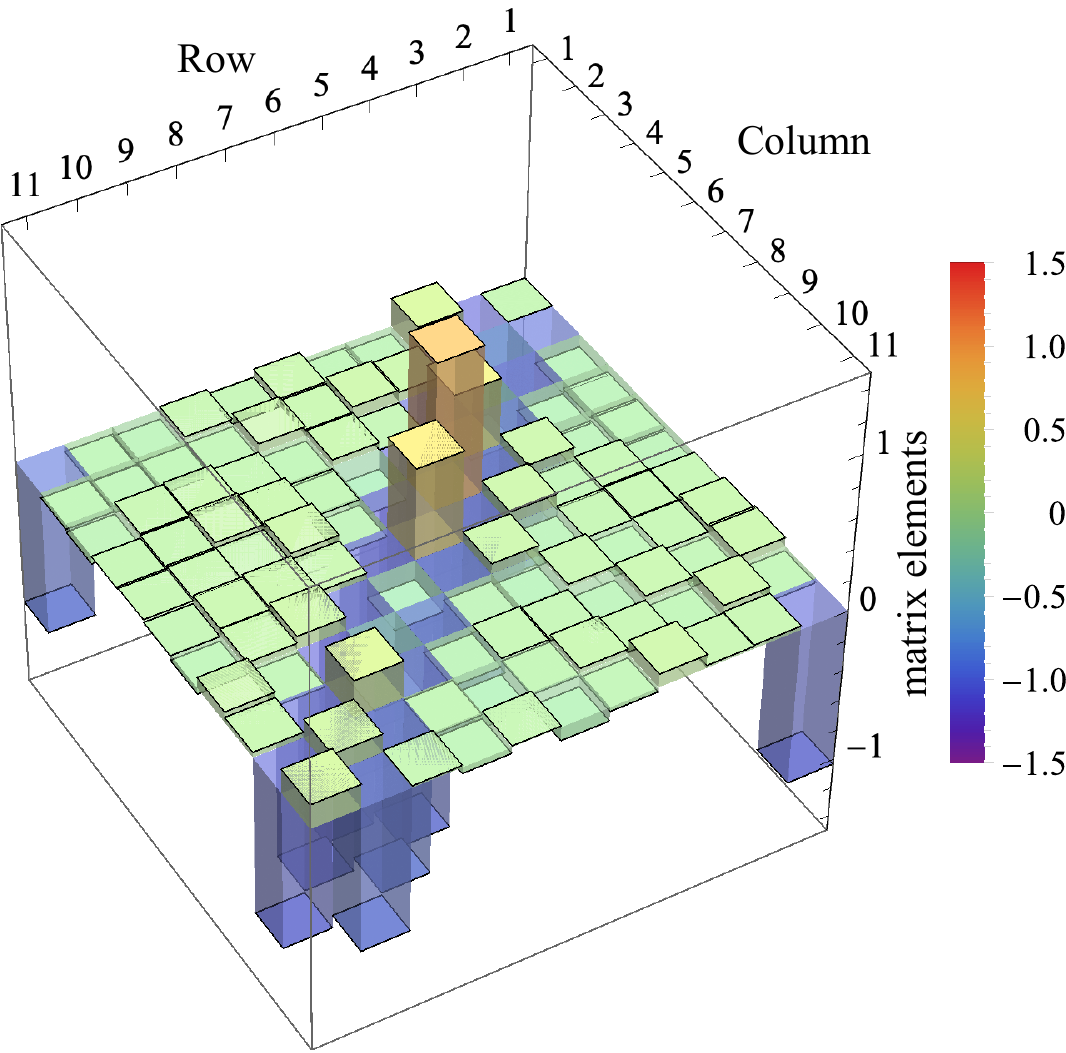}
\centering (b)
\end{minipage}
\hfill
\begin{minipage}[t]{0.3\textwidth}
\vspace{0mm}
\includegraphics[width=\textwidth]{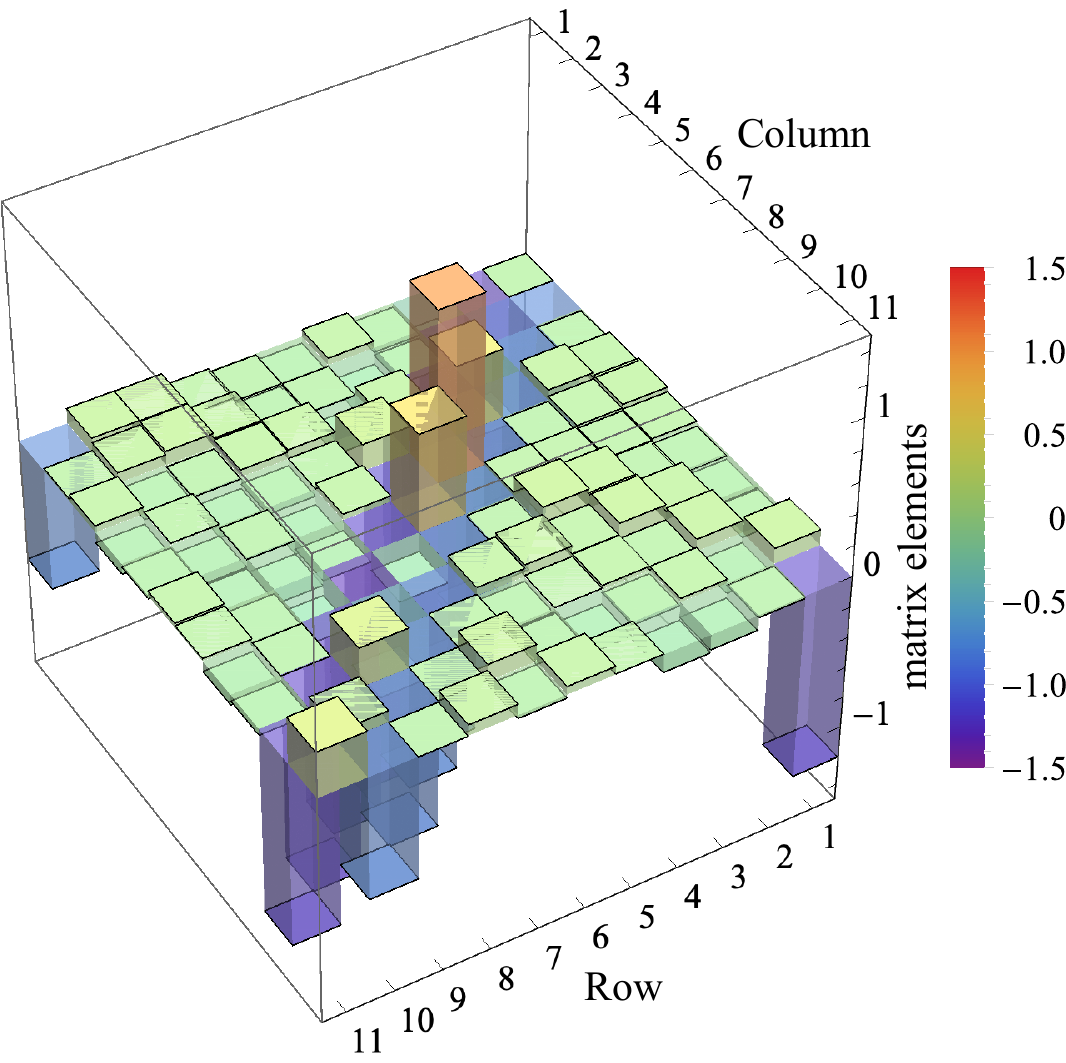}
\centering (c)
\end{minipage}
\vspace{\baselineskip}
\\
\begin{minipage}[t]{0.3\textwidth}
\vspace{0mm}
\includegraphics[width=\textwidth]{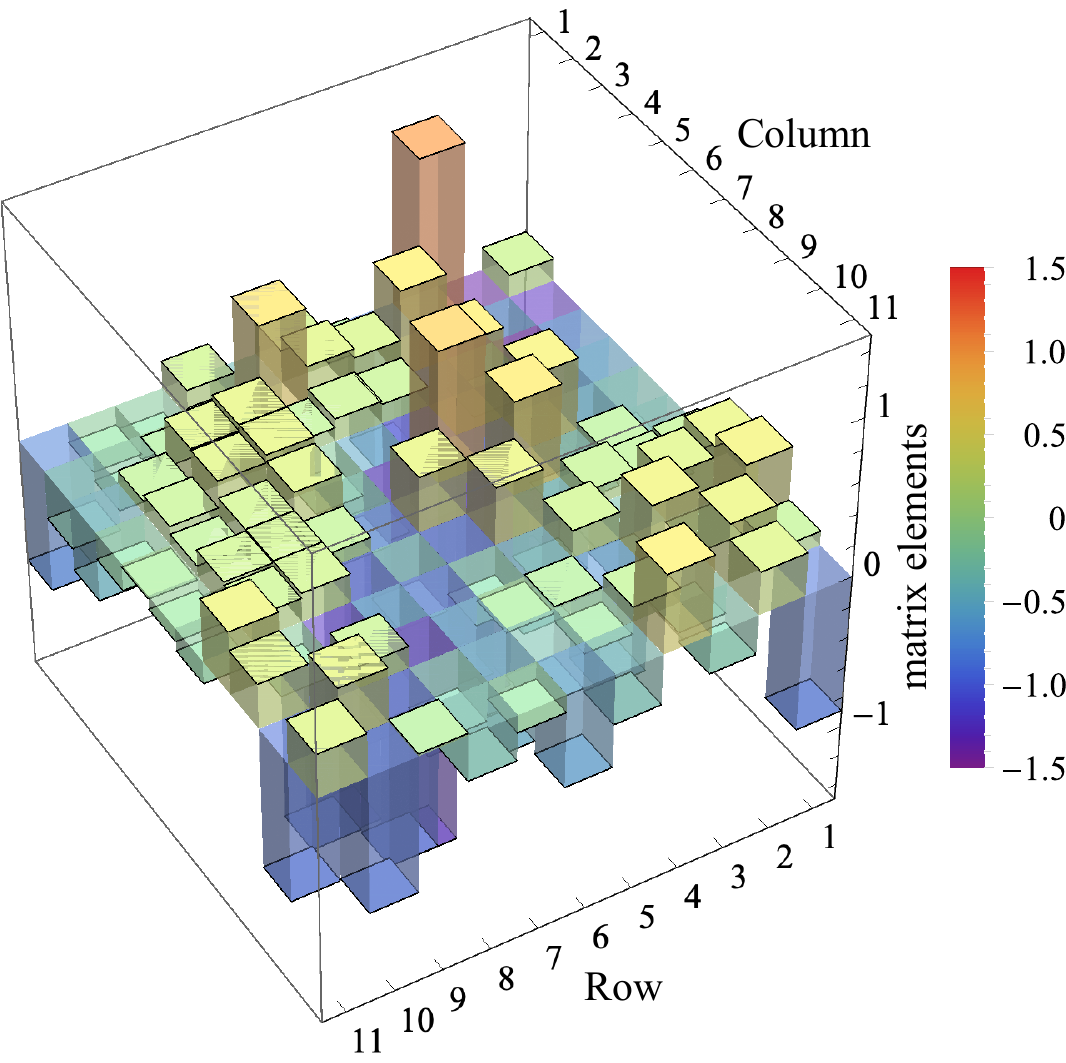}
\centering (d)
\end{minipage}
\hfill
\begin{minipage}[t]{0.3\textwidth}
\vspace{0mm}
\includegraphics[width=\textwidth]{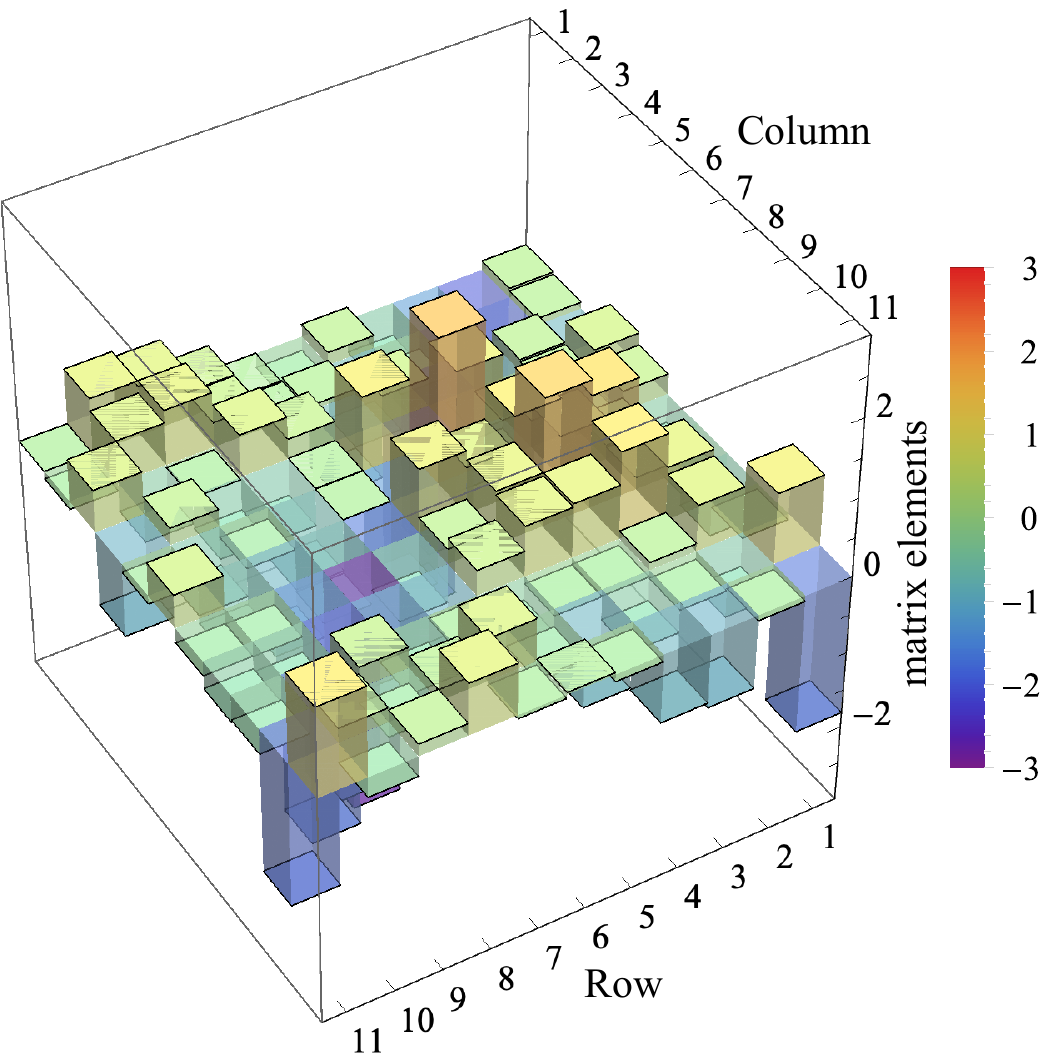}
\centering (e)
\end{minipage}
\hfill
\begin{minipage}[t]{0.3\textwidth}
\vspace{0mm}
\includegraphics[width=\textwidth]{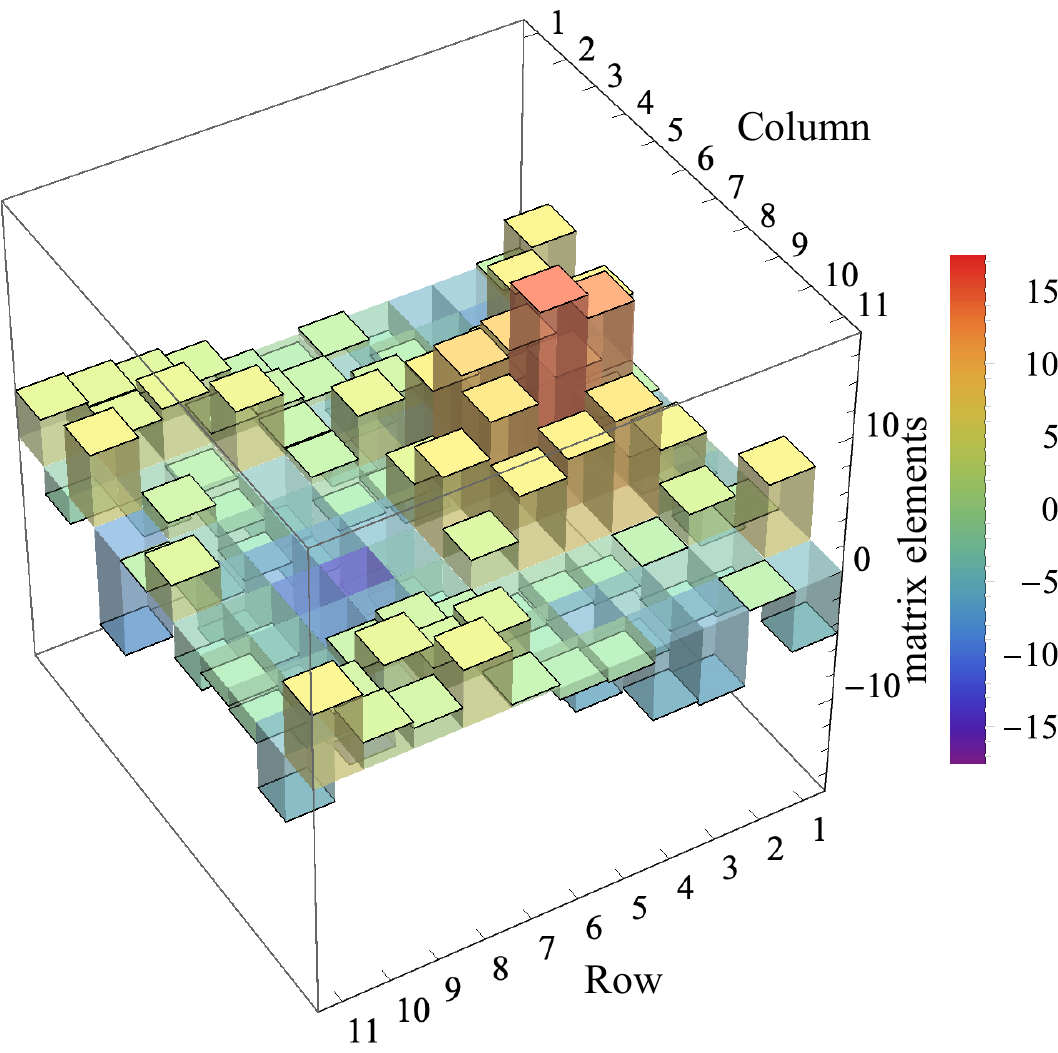}
\centering (f)
\end{minipage}
\vspace{\baselineskip}
\\
\begin{minipage}[t]{0.3\textwidth}
\vspace{0mm}
\includegraphics[width=\textwidth]{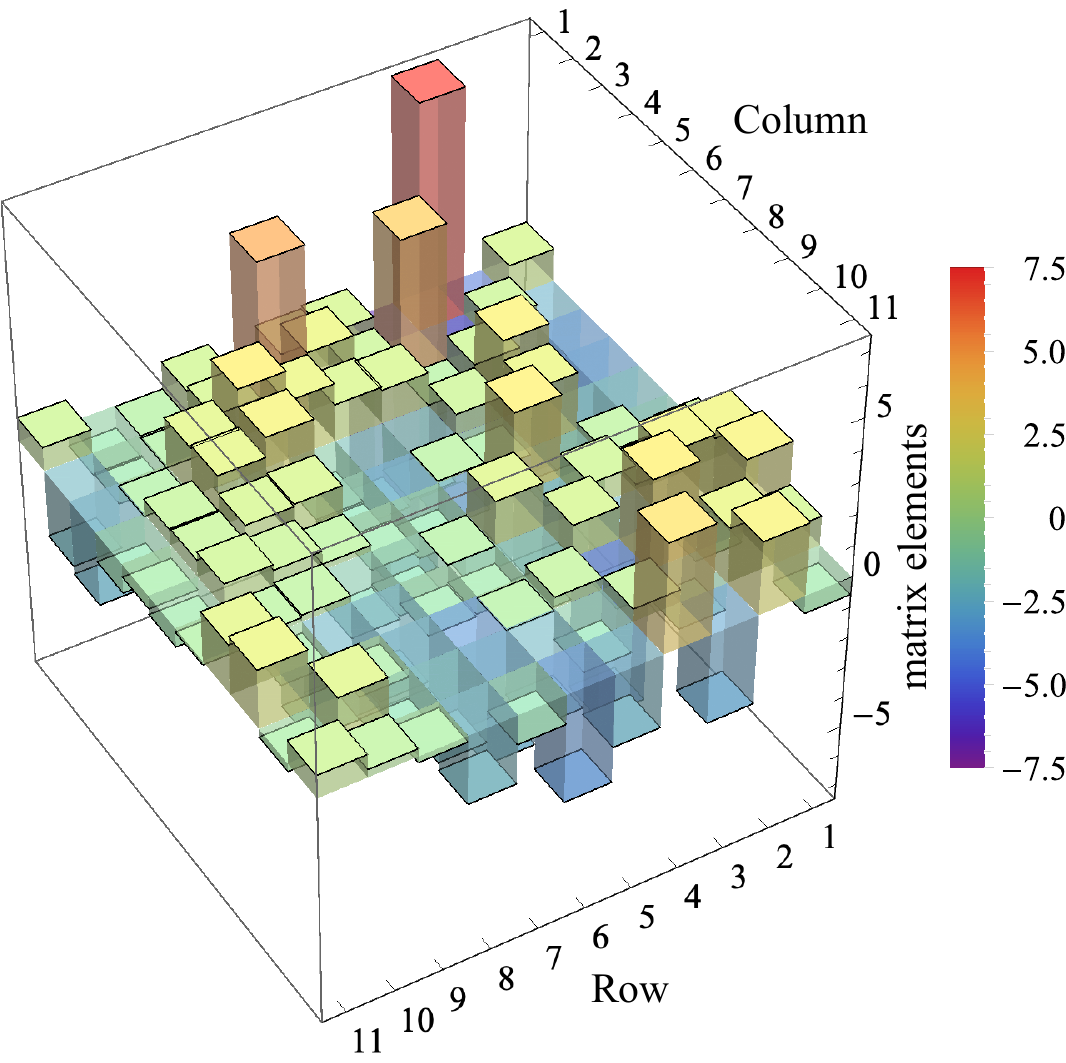}
\centering (g)
\end{minipage}
\hfill
\begin{minipage}[t]{0.3\textwidth}
\vspace{0mm}
\includegraphics[width=\textwidth]{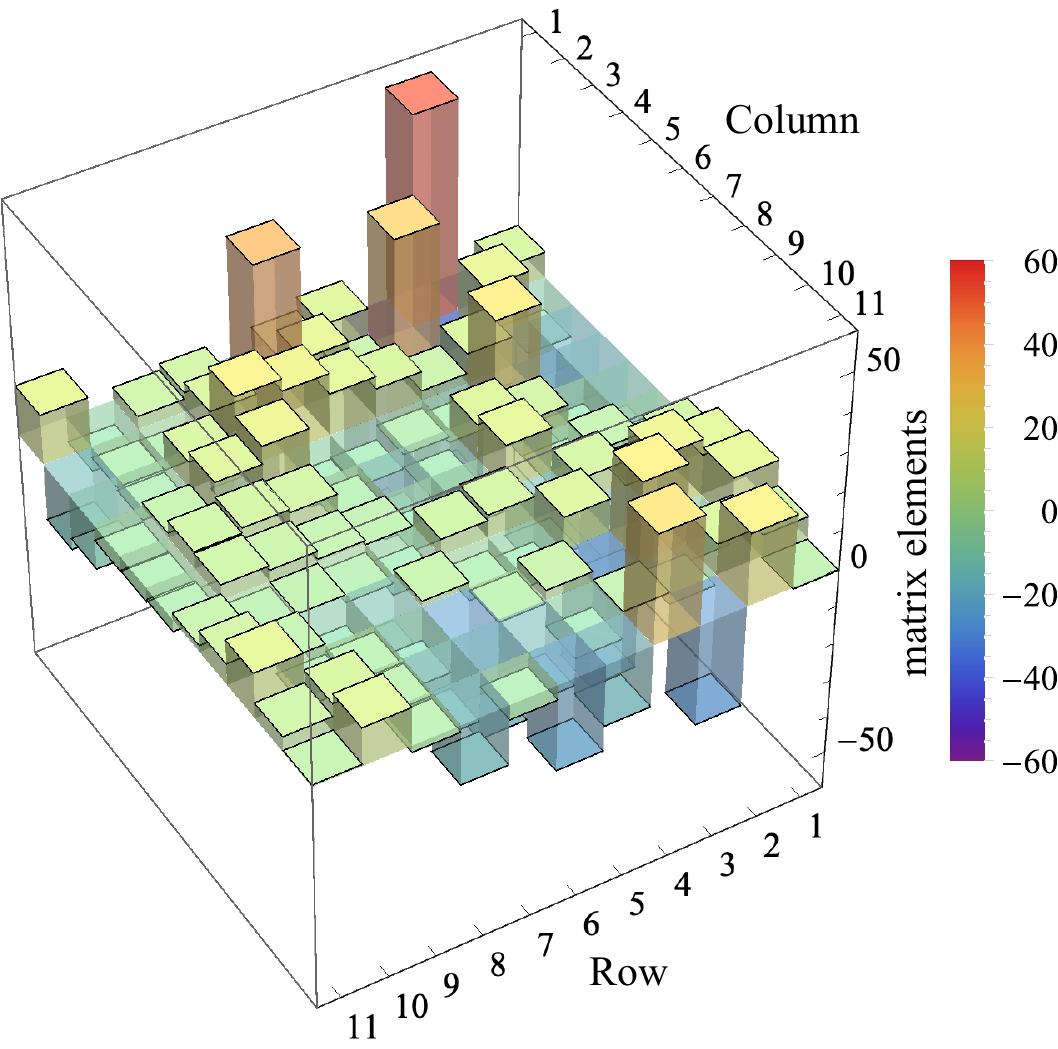}
\centering (h)
\end{minipage}
\hfill
\begin{minipage}[t]{0.3\textwidth}
\vspace{0mm}
\includegraphics[width=\textwidth]{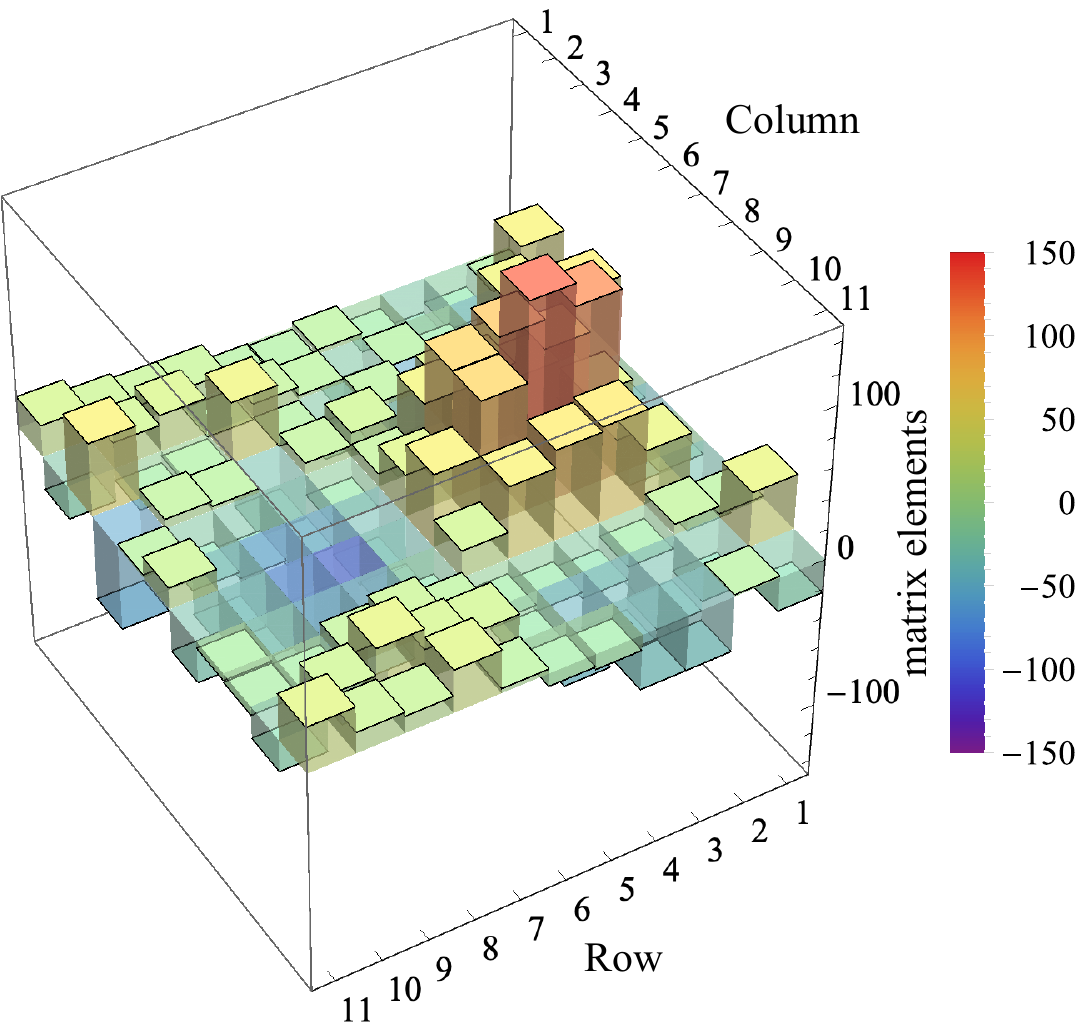}
\centering (i)
\end{minipage}
\caption{The elements of the Hamiltonians developed from the Hamiltonian~\eqref{357-1}.
We set $L=11$, $\hop=1$, $g=0.1$ with the potential at each site randomly chosen from the box distribution $[-1,1]$.
The elements of the original and subsequent isospectral Hamiltonians are all real: (a) $H$, 
(b) $\qty(H^\dag)^{\flat_0}$, (c) $\qty(H^\dag)^{\sharp_0}$,
(d) $\qty(H^\dag)^{\flat_1}$, (e) $\qty(H^\dag)^{\sharp_1}$,
(f) $\qty(H^\dag)^{\flat_{2,\alpha}}$, (g) $\qty(H^\dag)^{\sharp_{2,\alpha}}$,
(h) $\qty(H^\dag)^{\flat_{2,\alpha}}$, (i) $\qty(H^\dag)^{\sharp_{2,\beta}}$.
Note that the vertical scales may be different from each other.
}
\label{fig3}
\end{figure}
Since some of the eigenvalues are complex, we do not see any self-adjointness here either, including the pseudo-Hermiticity.
The Hamiltonian $\qty(H^\dag)^{\flat_0}$ has non-zero elements at which the original Hamiltonian $H$ does not.

In the second case, we used the parameter set $L=11$, $\hop=1$, $g=0.1$, and $V=2.5$ as in Fig.~\ref{fig1}(e); 
because of the stronger random amplitude than the case of Fig.~\ref{fig1}(d), all eigenvalues are now real;
the randomness has won over the asymmetric hopping and made all eigenvectors localized.
Thanks to the reality of all the eigenvalues, we finally find the pseudo-Hermiticity $H^{\sharp_1}=H$ in this case.
The magnitude of the matrix elements of the subsequent adjoints grow much more rapidly 
than in Figs.~\ref{fig2} and~\ref{fig3}, being too large to be equal to $H$.
The results throughout the present subsection may suggest that the magnitude of the matrix elements grow more rapidly when the eigenvalues are all real than when some of them are complex.

\subsection{An example of diagonal non-self-adjointness: the $PT$-symmetric model}
\label{subsect3.3}

In this subsection, we consider models of non-self-adjointness that originates in the diagonal elements. 
Specifically, we consider two types of $PT$-symmetric models.

The $PT$-symmetry was originally introduced in order to replace the Hermiticity as the condition of the reality of eigenvalues~\cite{BB98}; 
it was argued that the Hermiticity may be a too mathematical condition, not a physical one, whereas the universe should be built up under physical conditions, not mathematical ones.
Indeed, the Hermiticity is only a sufficient condition for the reality of eigenvalues, not a necessary one.
On the other hand, the combination of an anti-linear symmetry with any linear symmetries, including the $PT$-symmetry, can have a parameter region of real eigenvalues.

When the Hamiltonian has a time-reversal symmetry $T$, which is an anti-linear symmetry, as in $[H,T]=0$, we can deduce the following~\cite{Hatano19}.
Suppose that we have a solution $H\psi=E\psi$ without degeneracy. 
This is followed by $TH\psi=TE\psi$.
The left-hand side is equal to $HT\psi$ because of the commutability $[H,T]=0$, while the right-hand side is equal to $E^\ast T\psi$ because of the anti-linearity of $T$, which yields $HT\psi=E^\ast T\psi$.
We can thereby conclude that either (i) $E$ is real and $T\psi\propto \psi$ or (ii) $E$ is complex and there is a complex-conjugate eigenvalue $E^\ast$ with the eigenfunction $T\psi$, namely the time-reversal of the eigenfunction $\psi$ of the eigenvalue $E$.
This argument is intact when we instead consider the $PT$-symmetry~\cite{BB98} or any combinations of $T$ and linear symmetries.

In the two examples of $PT$-symmetric models below, we will indeed see regions of real and complex eigenvalues, which are often called the $PT$-unbroken and $PT$-broken phases.
In both cases, we will see a skewed version of our argument when all eigenvalues are pure imaginary.

\subsubsection{The simplest $2\times 2$ matrix}
\label{subusubsec3.2.1}

Let us now consider a simple version of the $PT$-symmetric system on a two lattice points.
The Hamiltonian and its $\dag$-adjoint are given by
\begin{align}\label{363}
H_{PT}:=\begin{pmatrix}
i\gamma & -\hop \\
-\hop & -i\gamma
\end{pmatrix},
\qquad
{H_{PT}}^\dag=\begin{pmatrix}
-i\gamma & -\hop \\
-\hop & i\gamma
\end{pmatrix},
\end{align}
where $\hop,\gamma\in\mathbb{R}$.
The $PT$-symmetry in this case is given as follows.
First, the $P$ (parity) symmetry is the mirror symmetry with respect to the exchange of the first and second rows and columns.
This linear operation is expressed by 
\begin{align}
P=\begin{pmatrix}
0 & 1 \\
1 & 0
\end{pmatrix}.
\end{align}
On the other hand, the anti-linear $T$ symmetry is given by complex conjugation $\ast$.
Combining them, we indeed confirm that $H_{PT}$ is $PT$-symmetric  in the sense of $[H,PT]=0$:
\begin{align}
&(PT)H_{PT}\vec{f}
=\begin{pmatrix}
0 & 1 \\
1 & 0
\end{pmatrix}
\begin{pmatrix}
-i\gamma & -\hop \\
-\hop & i\gamma
\end{pmatrix}
\begin{pmatrix}
f_1^\ast \\
f_2^\ast
\end{pmatrix}
=\begin{pmatrix}
 -\hop & i\gamma  \\
 -i\gamma &-\hop 
\end{pmatrix}
\begin{pmatrix}
f_1^\ast \\
f_2^\ast
\end{pmatrix}
\\
&=H_{PT}(PT)f=\begin{pmatrix}
i\gamma & -\hop \\
-\hop & -i\gamma
\end{pmatrix}
\begin{pmatrix}
0 & 1 \\
1 & 0
\end{pmatrix}
\begin{pmatrix}
f_1^\ast \\
f_2^\ast
\end{pmatrix}
\end{align}
for all $\vec{f}^T=\qty(f_1,f_2)$ with $f_1,f_2\in\mathbb{C}$. 
We find that the eigenvalues $\pm\sqrt{{\hop}^2-\gamma^2}$ are both real numbers when $\abs{\hop}>\abs{\gamma}$ but $\pm i\sqrt{\gamma^2-{\hop}^2}$ are pure imaginary ones when $\abs{\hop}<\abs{\gamma}$.
We can understand this physically in the following way.
For explanatory purposes, let us set $\hop=0$ and $\gamma>0$ for the moment.
In the time-dependent frame of the Schr\"{o}dinger equation $i\hbar \partial\Psi(t)/\partial t=H_{PT}\Psi(t)$, the amplitude of the two-component vector $\Psi(t)$ exponentially increases at the first site as $\qty(\Psi(t))_1\propto \exp(\gamma t)$, while it exponentially decreases at the second site as $\qty(\Psi(t))_2\propto \exp(-\gamma t)$.
We can attribute these exponential changes to injection of the probability onto the first site, say the source, and the removal of the probability from the second site, say the drain.
The imaginary eigenvalues $\pm i\gamma$ in the case of $\hop=0$ signify the instability of exponential growth and decay at the respective sites.
As we turn on the coupling $\hop$ between the two sites, the probability injected on the source starts to flow to the drain to be removed.
The instability is thus gradually remedied as the imaginary part of the eigenvalues $\pm i\sqrt{\gamma^2-{\hop}^2}$ diminishes, and finally is resolved when $\hop$ exceeds $\gamma$, as is signified by the real eigenvalues $\pm\sqrt{{\hop}^2-\gamma^2}$ evolving out of the imaginary eigenvalues.
In short, the condition $\abs{\hop}>\abs{\gamma}$ for the reality of the eigenvalues accomplishes a steady state in which the probability injected on the source constantly flows to the drain and removed.

We begin our analysis with the case of real eigenvalues, namely in the $PT$-unbroken phase.
For brevity, let us set $\hop=1$ and $0<\gamma<1$ hereafter, and choose the ordering of the eigenvalues such that
\begin{align}
E_1=-\beta\qquad \mbox{and}\qquad E_2=+\beta,
\end{align}
where $\beta=\sqrt{1-\gamma^2}\in\mathbb{R}$.
We can choose the eigenvectors in the following forms:
\begin{align}
\varphi_1\supzero&:=\frac{1}{\sqrt{2\beta}}
\begin{pmatrix}
1\\
i\gamma+\beta\\
\end{pmatrix},
\qquad
\varphi_2\supzero:=\frac{-1}{\sqrt{2\beta}}
\begin{pmatrix}
i\gamma+\beta\\
-1\\
\end{pmatrix},
\\
\psi_1\supzero&:=\frac{1}{\sqrt{2\beta}}
\begin{pmatrix}
i\gamma+\beta\\
1\\
\end{pmatrix},
\qquad
\psi_2\supzero:=\frac{1}{\sqrt{2\beta}}
\begin{pmatrix}
-1\\
i\gamma+\beta\\
\end{pmatrix}.
\end{align}
We can easily confirm that they satisfy the biorthonormality $\scp<\psi_m\supzero,\varphi_n\supzero>=\delta_{m,n}$ for $m,n=1,2$ as well as the resolution of the identity $\sum_{n=1}^2 \dyad{\varphi_n\supzero}{\psi_n\supzero}=\mathbb{I}$.

We then find the intertwining operators in \eqref{24} as
\begin{align}
S_\varphi\supzero=\frac{1}{\beta}
\begin{pmatrix}
1 & -i\gamma \\
i\gamma & 1 \\
\end{pmatrix},
\qquad
S_\psi\supzero=\frac{1}{\beta}
\begin{pmatrix}
1 & i\gamma \\
-i\gamma & 1 \\
\end{pmatrix}.
\end{align}
They are obviously inverse matrices to each other.
They define the two adjoints in \eqref{27} with \eqref{28} yielding
\begin{align}\label{368}
{H_{PT}}^{\flat_0}=\frac{-1}{\beta^2}
\begin{pmatrix}
i\gamma(3+\gamma^2) & 1+3\gamma^2 \\
1+3\gamma^2 & -i\gamma(3+\gamma^2) \\
\end{pmatrix},
\qquad
{H_{PT}}^{\sharp_0}=
\begin{pmatrix}
i\gamma & -1 \\
-1 & -i\gamma \\
\end{pmatrix},
\end{align}
which indicates the pseudo-Hermiticity ${H_{PT}}^{\sharp_0}=H_{PT}$.
We can also confirm the relations \eqref{29}.

The vectors \eqref{210}, which start the second iteration, are given by
\begin{align}
\varphi_1\supone&=\frac{1}{\sqrt{2\beta^3}}
\begin{pmatrix}
1-i\gamma\beta+\gamma^2\\
2i\gamma +\beta\\
\end{pmatrix},
\qquad
\varphi_2\supone=\frac{1}{\sqrt{2\beta^3}}
\begin{pmatrix}
-2i\gamma - \beta\\
1-i\gamma\beta+\gamma^2\\
\end{pmatrix},
\\
\psi_1\supone&=\frac{1}{\sqrt{2\beta^3}}
\begin{pmatrix}
2i\gamma +\beta\\
1-i\gamma\beta+\gamma^2\\
\end{pmatrix},
\qquad
\psi_2\supone=\frac{-1}{\sqrt{2\beta^3}}
\begin{pmatrix}
1-i\gamma\beta+\gamma^2\\
-2i\gamma - \beta\\
\end{pmatrix}.
\end{align}
They produce the further two intertwining operators \eqref{213} as in
\begin{align}
S_\varphi\supone=\frac{1}{\beta^3}
\begin{pmatrix}
1+3\gamma^2 & -i\gamma(3+\gamma^2) \\
i\gamma(3+\gamma^2) & 1+3\gamma^2 \\
\end{pmatrix},
\qquad
S_\psi\supone=\frac{1}{\beta^3}
\begin{pmatrix}
1+3\gamma^2 & i\gamma(3+\gamma^2) \\
-i\gamma(3+\gamma^2) & 1+3\gamma^2 \\
\end{pmatrix},
\end{align}
which in turn produce the further two adjoints
\begin{align}\label{372}
{H_{PT}}^{\flat_1}&=\frac{-1}{\beta^6}
\begin{pmatrix}
i\gamma(7+35\gamma^2+21\gamma^4+\gamma^6) & 1+21\gamma^2+35\gamma^4+7\gamma^6 \\
1+21\gamma^2+35\gamma^4+7\gamma^6 & -i\gamma(7+35\gamma^2+21\gamma^4+\gamma^6) \\
\end{pmatrix},
\\\label{373}
{H_{PT}}^{\sharp_1}&=\frac{-1}{\beta^4}
\begin{pmatrix}
-i\gamma(5+10\gamma^2+\gamma^4) & 1+5\gamma^2+10\gamma^2 \\
1+5\gamma^2+10\gamma^2 & i\gamma(5+10\gamma^2+\gamma^4) \\
\end{pmatrix}.
\end{align}
We can again confirm the relations \eqref{237-1}. 
We stop here for the expressions become more complicated and we do not find self-adjointness other than the pseudo-Hermiticity, but we note that the $PT$-symmetry defined at the beginning of this subsection is conserved throughout the process, that is, every Hamiltonian generated here commute with the $PT$ operation. 

Let us now move to the case of pure imaginary eigenvalues, namely the $PT$-broken phase.
We set $\hop=1$ but $\gamma>1$ hereafter, and choose the ordering of the eigenvalues such that
\begin{align}
E_1=i\kappa\qquad \mbox{and}\qquad E_2=-i\kappa,
\end{align}
where $\kappa=\sqrt{\gamma^2-1}\in\mathbb{R}$. This is the case discussed in the Remark 1 in Section \ref{subsec22}.
We note here that
\begin{align}
\tilde{H}_{PT}:=iH_{PT}=\begin{pmatrix}
-\gamma & -i\hop \\
-i\hop & \gamma 
\end{pmatrix}
\end{align}
has two real eigenvalues $\mp\kappa$.


Before exploring the alley of $\tilde{H}_{PT}$, we first give the generation of the isospectral Hamiltonians of $H_{PT}$.
We can choose the eigenvectors in the following forms:
\begin{align}
\varphi_1\supzero&:=\frac{1}{\sqrt{2i\kappa}}
\begin{pmatrix}
1\\
i\gamma-i\kappa\\
\end{pmatrix},
\qquad
\varphi_2\supzero:=\frac{1}{\sqrt{2i\kappa}}
\begin{pmatrix}
i\gamma-i\kappa\\
-1\\
\end{pmatrix},
\\
\psi_1\supzero&:=\frac{-1}{\sqrt{-2i\kappa}}
\begin{pmatrix}
i\gamma+i\kappa\\
1\\
\end{pmatrix},
\qquad
\psi_2\supzero:=\frac{1}{\sqrt{-2i\kappa}}
\begin{pmatrix}
-1\\
i\gamma+i\kappa\\
\end{pmatrix}.
\end{align}
We can again confirm the biorthonormality $\scp<\psi_m\supzero,\varphi_n\supzero>=\delta_{m,n}$ for $m,n=1,2$ as well as the resolution of the identity $\sum_{n=1}^2 \dyad{\varphi_n\supzero}{\psi_n\supzero}=\mathbb{I}$.

The intertwining operators in \eqref{24} are given by
\begin{align}
S_\varphi\supzero=\frac{1}{\kappa(\gamma+\kappa)}
\begin{pmatrix}
\gamma & -i \\
i & \gamma \\
\end{pmatrix},
\qquad
S_\psi\supzero=\frac{\gamma+\kappa}{\kappa}
\begin{pmatrix}
\gamma & i \\
-i & \gamma \\
\end{pmatrix}.
\end{align}
They are indeed inverse matrices to each other.
They then define the two adjoints in \eqref{27} with \eqref{28} as follows:
\begin{align}\label{379}
{H_{PT}}^{\flat_0}=\frac{-1}{\kappa^2}
\begin{pmatrix}
i\gamma(3+\gamma^2) & 1+3\gamma^2 \\
1+3\gamma^2 & -i\gamma(3+\gamma^2) \\
\end{pmatrix},
\qquad
{H_{PT}}^{\sharp_0}=
\begin{pmatrix}
-i\gamma & 1 \\
1 & i\gamma \\
\end{pmatrix}.
\end{align}
The Hamiltonian is therefore \textit{not} pseudo-Hermitian because ${H_{PT}}^{\sharp_0}=-H_{PT}$, but meanwhile let us note that the Hamiltonians~\eqref{379} are both negatives of the corresponding Hamiltonians in the $PT$-unbroken case~\eqref{368} because $\kappa^2=-\beta^2$.
We can confirm that the relations~\eqref{29} are satisfied.

The vectors \eqref{210}, which start the second iteration, are given by
\begin{align}
\varphi_1\supone&=\frac{1}{\sqrt{2\kappa^3}(\gamma+\kappa)}
\begin{pmatrix}
e^{-i\pi/4}\qty(2\gamma-\kappa)\\
e^{i\pi/4} \qty(1-\gamma\kappa+\gamma^2)\\
\end{pmatrix},
\qquad
\varphi_2\supone=\frac{1}{\sqrt{2\kappa^3}(\gamma+\kappa)}
\begin{pmatrix}
e^{i\pi/4} \qty(1-\gamma\kappa+\gamma^2)\\
-e^{-i\pi/4}\qty(2\gamma-\kappa)\\
\end{pmatrix},
\\
\psi_1\supone&=\frac{1}{\sqrt{2\kappa^3}}
\begin{pmatrix}
e^{-i\pi/4} (\gamma+\kappa)(1+\gamma\kappa+\gamma^2)\\
e^{i\pi/4} (1-3\gamma\kappa-3\gamma^2)\\
\end{pmatrix},
\qquad
\psi_2\supone=\frac{1}{\sqrt{2\kappa^3}}
\begin{pmatrix}
e^{i\pi/4} (1-3\gamma\kappa-3\gamma^2)\\
-e^{-i\pi/4} (\gamma+\kappa)(1+\gamma\kappa+\gamma^2)\\
\end{pmatrix}.
\end{align}
They produce the further two intertwining operators \eqref{213} as in
\begin{align}
S_\varphi\supone&=\frac{\gamma-\kappa}{\kappa^3(\gamma+\kappa)^2}
\begin{pmatrix}
\gamma(3+\gamma^2)  & -i(1+3\gamma^2) \\
i(1+3\gamma^2) & \gamma(3+\gamma^2) \\
\end{pmatrix},
\\
S_\psi\supone&=\frac{-3\gamma-\kappa+4\gamma^2(\gamma+\kappa)}{\kappa^3}
\begin{pmatrix}
\gamma(3+\gamma^2)  & i(1+3\gamma^2) \\
-i(1+3\gamma^2) & \gamma(3+\gamma^2) \\
\end{pmatrix},
\end{align}
which in turn produce the further two adjoints
\begin{align}\label{384}
{H_{PT}}^{\flat_1}&=\frac{-1}{\kappa^6}
\begin{pmatrix}
i\gamma(7+35\gamma^2+21\gamma^4+\gamma^6) & 1+21\gamma^2+35\gamma^4+7\gamma^6 \\
1+21\gamma^2+35\gamma^4+7\gamma^6 & -i\gamma(7+35\gamma^2+21\gamma^4+\gamma^6) \\
\end{pmatrix},
\\\label{385}
{H_{PT}}^{\sharp_1}&=\frac{1}{\kappa^4}
\begin{pmatrix}
-i\gamma(5+10\gamma^2+\gamma^4) & 1+5\gamma^2+10\gamma^2 \\
1+5\gamma^2+10\gamma^2 & i\gamma(5+10\gamma^2+\gamma^4) \\
\end{pmatrix}.
\end{align}
We again note that the Hamiltonians~\eqref{384}-\eqref{385} are both sign inversions of the corresponding Hamiltonians in the $PT$-unbroken case~\eqref{372}-\eqref{373} because $\beta^6=-\kappa^6=(1-\gamma)^3$ while $\beta^4=\kappa^4=(1-\gamma)^2$. 

Let us now take advantage of the fact that $\tilde{H}_{PT}:=iH_{PT}$ has real eigenvalues.
Since $\tilde{H}_{PT}$ and $H_{PT}$ share the eigenvectors, all $\varphi_n^{(\nu)}$, $\psi_n^{(\nu)}$, $S_\varphi^{(\nu)}$ and $S_\psi^{(\nu)}$ remain the same.
The difference appears in the generated Hamiltonians;
they are all multiplied by $-i$ because they are generated out of ${\tilde{H}_{PT}}^\dag=-i{H_{PT}}^\dag$:
\begin{align}
{\tilde{H}_{PT}}^{\flat_0}&=-i{H_{PT}}^{\flat_0},
\qquad
{\tilde{H}_{PT}}^{\sharp_0}=-i{H_{PT}}^{\sharp_0},
\\
{\tilde{H}_{PT}}^{\flat_1}&=-i{H_{PT}}^{\flat_1},
\qquad
{\tilde{H}_{PT}}^{\sharp_1}=-i{H_{PT}}^{\sharp_1}
\end{align}
The pseudo-Hermiticity ${\tilde{H}_{PT}}^{\sharp_0}=\tilde{H}_{PT}$ is now recovered,
which may open up a possible path to the formulation of the present theory for the $PT$-broken phase with complex eigenvalues.

\subsubsection{$PT$ chain with alternate hopping}\label{sect3.3.2}

We finally consider a model with alternating hoppings (called the Su-Schrieffer-Heeger model~\cite{SSH}) with $PT$-alternating potentials~\cite{Rudner09}:
\begin{align}
H_{RL}:=-\hop\sum_{x=1}^{2L} \qty[1-\delta(-1)^x] \qty(\dyad{x+1}{x}+\dyad{x+1}{x})-\sum_{x=1}^L i\gamma (-1)^x\dyad{x},
\end{align}
where $\hop,\delta,\gamma\in\mathbb{R}$. 
We set periodic boundary conditions: $\ket{x+2L}=\ket{x}$.
The hopping element alternates between $1+\delta$ and $1-\delta$, while the potential alternates between $i\gamma$ and $-i\gamma$.
The matrix representation is given for \textit{e.g.}\ $2L=8$ by
\begin{align}\label{374}
H_{RL}=\qty(\begin{array}{|cc|cc|cc|cc|}
\cline{1-2}
+i\gamma & -u & \multicolumn{5}{c}{} & \multicolumn{1}{c}{-w} \\
-u & -i\gamma & -w & \multicolumn{5}{c}{} \\
\cline{1-4} 
\multicolumn{1}{c}{} & -w & +i\gamma & -u & \multicolumn{4}{c}{} \\
\multicolumn{1}{c}{} & & -u & -i\gamma & -w & \multicolumn{3}{c}{} \\
\cline{3-6}
\multicolumn{3}{c}{} & -w & +i\gamma & -u & & \multicolumn{1}{c}{} \\
\multicolumn{4}{c|}{} & -u & -i\gamma & -w & \multicolumn{1}{c}{} \\
\cline{5-8}
\multicolumn{5}{c}{} & -w & +i\gamma & -u \\
\multicolumn{1}{c}{-w} & \multicolumn{5}{c|}{} & -u & -i\gamma \\
\cline{7-8}
\end{array}),
\end{align}
where
\begin{align}
u=\hop (1+\delta) \qquad \mbox{and} \qquad w=\hop(1-\delta).
\end{align}
In other words, a pair of two neighboring sites forms a unit cell, as is indicated by the squares in \eqref{374};
each unit-cell Hamiltonian has the same form as the $2\times 2$ example $H_{PT}$ in Eq.~\eqref{363}, only the off-diagonal elements $-\hop$ being replaced by $-u$.
Each unit cell is then coupled to its neighboring unit cells by the hopping $-w$.

This inspires us to rewrite Eq.~\eqref{374} in the form
\begin{align}\label{377}
H_{RL}&=\sum_{\xi=1}^L \qty[i\gamma\qty(\dyad{\xi,A}-\dyad{\xi,B})-u\qty(\dyad{\xi,B}{\xi,A}+\dyad{\xi,A}{\xi,B})]
\nonumber\\
&+\sum_{\xi=1}^L\qty[-w\qty(\dyad{\xi+1,A}{\xi,B}+\dyad{\xi,B}{\xi+1,A})],
\end{align}
where
\begin{align}
\ket{\xi,A}=\ket{x=2\xi-1}\qquad\mbox{and}\qquad \ket{\xi,B}=\ket{x=2\xi},
\end{align}
for $\xi=1,2,3,\cdots,L$ denote the unit cells.
The periodic boundary conditions now read $\ket{\xi+L,A/B}=\ket{\xi,A/B}$.
The first term of the expression~\eqref{377} represents the Hamiltonian of each unit cell, whereas the second term represents the coupling between two neighboring cells.

Since this Hamiltonian has translational symmetry based on the shift operation $\ket{x}\to\ket{x\pm2}$, or $\ket{\xi,A/B}\to\ket{\xi\pm1,A/B}$, we can apply the Fourier transformation to block-diagonalize it.
Specifically, by using the basis set defined by
\begin{align}
\ket{k_n,C}&:=\frac{1}{\sqrt{L}}\sum_{\xi=1}^L e^{ik_n \xi}\ket{\xi,C},
\qquad
\ket{\xi,C}:=\frac{1}{\sqrt{L}}\sum_{n=1}^L e^{-ik_n \xi}\ket{k_n,C},
\end{align}
where $k_n=2\pi/L$ for $n=1,2,3,\cdots,L$ and $C=A,B$, the Hamiltonian~\eqref{377} is expressed in the following form of the direct product:
\begin{align}
H_{RL}&=\sum_{n=1}^L \qty[i\gamma\qty(\dyad{k_n,A}-\dyad{k_n,B})-u\qty(\dyad{k_n,B}{k_n,A}+\dyad{k_n,A}{k_n,B})]
\nonumber\\
&+\sum_{n=1}^L\qty[-w\qty(e^{-ik_n}\dyad{k_n,A}{k_n,B}+e^{ik_n}\dyad{k_n,B}{k_n,A})]
\\
&=\bigotimes_{n=1}^L
\begin{pmatrix}
i\gamma & -u-we^{-ik_n} \\
-u-we^{ik_n} & -i\gamma
\end{pmatrix},
\end{align}
where the last expression represents each block of the block-diagonalized form of $H_{RL}$.

We can now proceed the analysis for each $2\times2$ block independently, slightly generalizing the analysis for the $2\times2$ Hamiltonian $H_{PT}$ in~\eqref{363}.
Let us denote a block by
\begin{align}\label{396}
H_{RL}^{(n)}:=\begin{pmatrix}
i\gamma & -\mu_n+i\nu_n \\
-\mu_n-i\nu_n & -i\gamma \\
\end{pmatrix}
=-\mu_n\sigma_x-\nu_n\sigma_y+i\gamma\sigma_z,
\end{align}
where
\begin{align}
\mu_n:=u+w\cos k_n,
\qquad
\nu_n:=w\sin k_n
\end{align}
are both real, and $\sigma_x$, $\sigma_y$, $\sigma_z$ denote the Pauli matrices.
We denote the eigenvalues by 
\begin{align}
E_1^{(n)}=-\varepsilon_n\qquad\mbox{and}\qquad
E_2^{(n)}=+\varepsilon_n, \qquad
\mbox{where}\qquad
\varepsilon_n=\sqrt{{\mu_n}^2+{\nu_n}^2-\gamma^2},
\end{align}
which can be real or pure imaginary, depending on the relative magnitudes of $\mu_n$, $\nu_n$ and $\gamma$.
Since $(\abs{u}-\abs{w})^2\leq {\mu_n}^2+{\nu_n}^2=u^2+2uw\cos k_n+w^2\leq (\abs{u}+\abs{w})^2$, the eigenvalues of all blocks are real when $\abs{\gamma}<\abs{\abs{u}-\abs{w}}=2\min(1,\abs{\delta})$, whereas all of them are pure imaginary when $\abs{\gamma}>\abs{\abs{u}+\abs{w}}=2\max(1,\abs{\delta})$.
In between them, namely $2\min(1,\abs{\delta})<\abs{\gamma}<2\max(1,\abs{\delta})$, the eigenvalues are real for some blocks and pure imaginary for other blocks with possibilities of zero eigenvalues at exceptional points.

Let us start with the case of all real eigenvalues: ${\mu_n}^2+{\nu_n}^2>\gamma^2$ for all $n=1,2,3,\cdots,L$.
For brevity, we drop the block label $n$ from the superscripts and the subscripts.
We can set the eigenvectors for each block as
\begin{align}
\varphi_1\supzero&:=\frac{1}{\sqrt{2(\mu-i\nu)\varepsilon}}
\begin{pmatrix}
\mu-i\nu\\
i\gamma+\varepsilon\\
\end{pmatrix},
\qquad
\varphi_2\supzero:=\frac{-1}{\sqrt{2(\mu+i\nu)\varepsilon}}
\begin{pmatrix}
i\gamma+\varepsilon\\
-\mu-i\nu\\
\end{pmatrix},
\\
\psi_1\supzero&:=\frac{1}{\sqrt{2(\mu+i\nu)\varepsilon}}
\begin{pmatrix}
i\gamma+\varepsilon\\
\mu+i\nu\\
\end{pmatrix},
\qquad
\psi_2\supzero:=\frac{1}{\sqrt{2(\mu-i\nu)\varepsilon}}
\begin{pmatrix}
-\mu+i\nu\\
i\gamma+\epsilon\\
\end{pmatrix}.
\end{align}
%
The intertwining operators in \eqref{24} are given for each block by
\begin{align}
S_\varphi\supzero=\frac{1}{\varepsilon}
\begin{pmatrix}
\sqrt{\mu^2+\nu^2} & -i\gamma\sqrt{\frac{\mu-i\nu}{\mu+i\nu}} \\
i\gamma\sqrt{\frac{\mu+i\nu}{\mu-i\nu}} & \sqrt{\mu^2+\nu^2} \\
\end{pmatrix},
\qquad
S_\psi\supzero=\frac{1}{\varepsilon}
\begin{pmatrix}
\sqrt{\mu^2+\nu^2} & i\gamma\sqrt{\frac{\mu-i\nu}{\mu+i\nu}} \\
-i\gamma\sqrt{\frac{\mu+i\nu}{\mu-i\nu}} & \sqrt{\mu^2+\nu^2} \\
\end{pmatrix}.
\end{align}
The two adjoints in \eqref{27} with \eqref{28} are then given by
\begin{align}
{H_{RL}}^{\flat_0}&=\frac{-1}{\varepsilon^2}
\begin{pmatrix}
i\gamma(3\mu^2+3\nu^2+\gamma^2) & (\mu-i\nu)(\mu^2+\nu^2+3\gamma^2) \\
(\mu+i\nu)(\mu^2+\nu^2+3\gamma^2) & -i\gamma(3\mu^2+3\nu^2+\gamma^2) \\
\end{pmatrix},
\\
{H_{RL}}^{\sharp_0}&=
\begin{pmatrix}
i\gamma & -\mu+i\nu \\
-\mu-i\nu & -i\gamma \\
\end{pmatrix},
\end{align}
the latter of which satisfies the pseudo-Hermiticity ${H_{RL}}^{\sharp_0}=H_{RL}$.

As we move to the case of all pure imaginary eigenvalues (${\mu_n}^2+{\nu_n}^2<\gamma^2$ for all $n=1,2,3,\cdots,L$), we analyze
\begin{align}\label{3104}
\tilde{H}_{RL}^{(n)}:=iH_{RL}^{(n)}
\begin{pmatrix}
-\gamma & -i\mu_n-\nu_n \\
-i\mu_n+\nu_n & \gamma \\
\end{pmatrix}
\end{align}
as we did at the end of Subsubsect.~\ref{subusubsec3.2.1} instead of \eqref{396}, because then the eigenvalues $\mp\sqrt{\gamma^2-{\mu_n}^2-{\nu_n}^2}$ are real for all $n=1,2,3,\cdots,L$.

We define the ordering of the eigenvalues as $E_1^{(n)}=-\omega$ and $E_2^{(n)}=+\omega_n$ with $\omega_n =\sqrt{\gamma^2-{\mu_n}^2-{\nu_n}^2}>0$.
As above, we drop the scripts $n$ hereafter.
The eigenvectors for each block then reads
\begin{align}
\varphi_1\supzero&:=\frac{1}{\sqrt{2i\omega(\mu-i\nu)}}
\begin{pmatrix}
\mu-i\nu\\
i(\gamma-\omega)\\
\end{pmatrix},
\qquad
\varphi_2\supzero:=\frac{1}{\sqrt{2i\omega(\mu+i\nu)}}
\begin{pmatrix}
i(\gamma-\omega)\\
-\mu-i\nu\\
\end{pmatrix},
\\
\psi_1\supzero&:=\frac{-1}{\sqrt{-2i(\omega\mu+i\nu)}}
\begin{pmatrix}
i(\gamma+\omega)\\
\mu+i\nu\\
\end{pmatrix},
\qquad
\psi_2\supzero:=\frac{1}{\sqrt{-2i\omega(\mu-i\nu)}}
\begin{pmatrix}
-\mu+i\nu\\
i(\gamma+\omega)\\
\end{pmatrix}.
\end{align}
%
The intertwining operators in \eqref{24} for each block are 
\begin{align}
S_\varphi\supzero=\frac{\gamma-\omega}{\sqrt{\mu^2+\nu^2}\omega}
\begin{pmatrix}
\gamma & -i(\mu-i\nu) \\
i(\mu+i\nu) & \gamma \\
\end{pmatrix},
\qquad
S_\psi\supzero=\frac{\gamma+\omega}{\sqrt{\mu^2+\nu^2}\omega}
\begin{pmatrix}
\gamma & i(\mu-i\nu) \\
-i(\mu+i\nu) & \gamma \\
\end{pmatrix}.
\end{align}
The two adjoints in \eqref{27} with \eqref{28} are
\begin{align}
{\tilde{H}_{RL}}^{\flat_0}&=\frac{-1}{\omega^2}
\begin{pmatrix}
\gamma(3\mu^2+3\nu^2+\gamma^2) & (\mu-i\nu)(\mu^2+\nu^2+3\gamma^2) \\
(\mu+i\nu)(\mu^2+\nu^2+3\gamma^2) & -i\gamma(3\mu^2+3\nu^2+\gamma^2) \\
\end{pmatrix},
\\
{\tilde{H}_{RL}}^{\sharp_0}&=
\begin{pmatrix}
-\gamma & -i\mu-\nu \\
-i\mu+\nu & \gamma \\
\end{pmatrix},
\end{align}
the latter of which again satisfies the pseudo-Hermiticity ${\tilde{H}_{RL}}^{\sharp_0}=\tilde{H}_{RL}$ even in the $PT$-broken region, at least when all the eigenvalues are purely imaginary. We refer to  Remark 1 in Section \ref{subsec22}  for some general results on this particular situation. 

\section{Conclusion}\label{sectconcl}

We have proposed a simple strategy to construct more and more Hamiltonians, not necessarily (Dirac) self-adjoint, whose eigenstates and eigenvalues can be automatically deduced out of a single operator $H$. The key idea is that the role of the metric operators defined in terms of the eigenstates of $H$ and $H^\dagger$ can be explored further than usually done in the literature, to produce many other (actually, infinite more) biorthogonal sets of vectors which, together, produce new scalar products and, as a consequences, new adjoints maps and new Hamiltonians which are isospectral to $H$ and $H^\dagger$, and whose eigenvectors can be deduced out of those of $H$.

We have applied our strategy to several physical systems, showing that the procedure is not trivial, meaning that the new operators and vectors we get are really different from those we started with. We have also briefly discussed the physical meaning of this approach, in our applications.

There are many aspects which, in ur opinion, deserve further analysis. First of all, the role of symmetries in this context. How a symmetry should be defined? And is a symmetry preserved by our construction? But, does it need to be preserved, first? Secondly, it would be interesting to extend the approach to infinite-dimensional Hilbert spaces. This will allow us to cover many physically relevant situations, which do not fit the assumptions used here. But other aspects are also relevant in the present finite-dimensional settings. For instance, the tridiagonal matrices appearing in Section \ref{subsec3.2} could be analyzed as in \cite{bgr}, in the large $L$ limit, and relations with biorthogonal polynomials could be deduced. Also, by factorizing them, the SUSY partners of the Hamiltonians considered in this paper could be studied, and their physical meaning could be considered. The following question looks also interesting to us: does the chain of Hamiltonians converge to some limiting operators? When and why? This is interesting, since it would imply that one such limiting Hamiltonian is a fixed point for our strategy, and open the door to the physical interpretation of such a result.

\section*{Acknowledgements}

F.B. acknowledges financial support from the University of Tokyo and from the Istituto Nazionale di Fisica Nucleare. F.B. also acknowledges  support from Palermo University and from the Gruppo Nazionale di Fisica Matematica of the I.N.d.A.M.

\renewcommand{\theequation}{A.\arabic{equation}}


\begin{thebibliography}{99}
	
	\bibitem{intop1} Kuru S., Tegmen A., Vercin A., {\em Intertwined isospectral potentials in an arbitrary dimension},
	J. Math. Phys, {\bf 42}, No. 8, 3344-3360, (2001) 
	
	\bibitem{intop2} Kuru S.,
	Demircioglu B., Onder M., Vercin A., {\em Two families of
		superintegrable and isospectral potentials in two dimensions}, J.
	Math. Phys, {\bf 43}, No. 5, 2133-2150, (2002)
	
	\bibitem{intop3}  Samani K. A., Zarei
	M., {\em Intertwined hamiltonians in two-dimensional curved spaces},
	Ann. of Phys., {\bf 316}, 466-482, (2005).
	
	\bibitem{bagintop} F. Bagarello {\em Mathematical aspects of intertwining
		operators: the role of Riesz bases},   J. Phys. A,  {\bf 43},  175203 (2010) 
	
	
	\bibitem{fern} F. M. Fernandez, {\em Symmetric quadratic Hamiltonians with
		pseudo-Hermitian matrix representation}, Ann. of Phys., {\bf 369}, 168-176 (2016)
	
	
	\bibitem{jun} G. Junker, {\em Supersymmetric methods in quantum and statistical physics},
	Springer-Verlag, Berlin (1992)
	
	\bibitem{coop} F. Cooper, A. Khare, U. Sukhatme, {\em Supersymmetry and quantum mechanics}, Word Scientific, Singapore, (2001)
	
	
	\bibitem{suku} C. V. Sukumar, {\em Supersymmetric quantum mechanics and its applications}, AIP Conference Proceedings {\bf 744}, 166 (2004)
	
	\bibitem{bagI} F. Bagarello {\em Extended SUSY quantum mechanics, intertwining operators and coherent states},
	Phys. Lett. A,  {\bf 372}, 6226-6231 (2008)
	
		\bibitem{dong} S.-H. Dong, {\em Factorization method in quantum mechanics},  Springer, Dordrecht (2007)
	
	
	\bibitem{Gamow28}
	G. Gamow, 
	{\em Zur Quantentheorie des Atomkernes (On quantum theory of atomic nuclei)},
	Z. Phys. {\bf 51}, 204-212 (1928)
	
	\bibitem{Feshbach58}
H. Feshbach,
{\em A unified theory of nuclear reactions},
Ann. Phys. (New York) {\bf 5}, 357-390 (1958)

	\bibitem{Peierls59}
	R. E. Peierls,
{\em Complex eigenvalues in scattering theory},
Proc. Roy. Soc. London A,
{\bf 253}, 16--36 (1959)


	
	\bibitem{Hatano96}
	N. Hatano and D. R. Nelson, {\em Localization Transitions in Non-Hermitian Quantum Mechanics}, Phys. Rev. Lett. {\bf 77}, 570-573 (1996)
	\bibitem{Hatano97}
	N. Hatano and D. R. Nelson, {\em Vortex pinning and non-Hermitian quantum mechanics}, Phys. Rev. B {\bf 56}, 8651-8673 (1997)
	
	\bibitem{Hatano98}
	N. Hatano,
	{\em Localization in non-Hermitian quantum mechanics and flux-line pinning in superconductors},
	Physica A {\bf 254}, 317--331 (1998).
	
		\bibitem{BB98}
		C.M. Bender and S. Boettcher,
		{\em Real Spectra in Non-Hermitian Hamiltonians Having PT Symmetry}, Phys. Rev. Lett. {\bf 80}, 5243-5247 (1998)
		
		\bibitem{benbook} C.M. Bender, {\em PT Symmetry in Quantum and Classical Physics}, World Scientific, Singapore,  2019
		
		
		\bibitem{ben}  C.M. Bender, {\em Making Sense of Non-Hermitian Hamiltonians}, Rep. Progr.  Phys., {\bf 70},  947-1018 (2007)

	\bibitem{Gong18}
Zongping Gong, Yuto Ashida, Kohei Kawabata, Kazuaki Takasan, Sho Higashikawa, and Masahito Ueda,
	{\em Topological Phases of Non-Hermitian Systems},
Phys. Rev. X {\bf 8}, 031079 (2018)
	
	\bibitem{Kawabata19}
Kohei Kawabata, Ken Shiozaki, Masahito Ueda, and Masatoshi Sato,
	{\em Symmetry and Topology in Non-Hermitian Physics},
Phys. Rev. X {\bf 9}, 041015 (2019)

\bibitem{Turker18}
Z.Turker, S.Tombuloglu, C.Yuce,
{\em PT symmetric Floquet topological phase in SSH model}
Phys. Lett. A, {\bf 382} 2013-2016 (2018)

\bibitem{Harter20}
A.K. Harter and N. Hatano,
{\em Real Edge Modes in a Floquet-modulated $PT$-symmetric SSH model},
arXiv:2006.16890
	
		\bibitem{chri} Christensen O., {\em An Introduction to Frames and Riesz Bases}, Birkh\"auser, Boston, (2003)
		
			\bibitem{mosta} A. Mostafazadeh, {\em Pseudo-Hermitian representation of Quantum Mechanics}, Int. J. Geom. Methods Mod. Phys. {\bf 7}, 1191-1306 (2010)
		
		
			\bibitem{rs} M. Reed and B. Simon, {\em Methods of Modern Mathematical Physics I: Functional analysis}, Academic Press, New York, (1980)
		
			\bibitem{baginbagbook} F. Bagarello, {\em Deformed canonical (anti-)commutation relations and non hermitian hamiltonians}, in {Non-selfadjoint operators in quantum physics: Mathematical aspects}, F. Bagarello, J. P. Gazeau, F. H. Szafraniek and M. Znojil Eds., John Wiley and Sons Eds.,  (2015)
		
		
		
			\bibitem{santos} R. B. B. Santos, V. R. Da Silva, {\em Non-hermitian model for asymmetrical tunneling}, Mod. Phys. Lett. B,  {\bf 28}, 1450223 (2014)
		
		
		
		\bibitem{bagaop2015a} F. Bagarello, {\em Some results on the dynamics and  transition probabilities for non self-adjoint hamiltonians},  Ann. of Phys., {\bf 356}, 171-184 (2015)
		
		
		
		
		
		
		
		
		
		
		
		
		
		
		
		
		
		
	
	
	
	
%
%
%
%
%
%
%
%
%
%
%
%
	
	
	
	
	
	
	
	
	
	
	
	
	%
	%
	%
	%
	
	\bibitem{CMB02}
	C. M. Bender, D. C. Brody and H. F. Jones, {\em Complex extension of quantum mechanics}, Phys. Rev. Lett. {\bf 89}, 270401 (2002)
	
	
	
	
	
	\bibitem{SSH}
	W. P. Su, J. R. Schrieffer, and A. J. Heeger,
	{\em Solitons in Polyacetylene}, Phys. Rev. Lett. {\bf 42}, 1698-1701 (1979)s
	
	\bibitem{Rudner09}
	M. S. Rudner and L. S. Levitov,
{\em	Topological Transition in a Non-Hermitian Quantum Walk}, Phys. Rev. Lett. {\bf 102}, 065703 (2009)
	
	\bibitem{Hatano19}
	N. Hatano and G. Ordonez, {\em Time-Reversal Symmetry and Arrow of Time in Quantum Mechanics of Open Systems}, Entropy {\bf 21}, 380 (2019)
	
	\bibitem{bgr} F. Bagarello, F. Gargano, F. Roccati, {\em Tridiagonality, supersymmetry and non self-adjoint Hamiltonians},  J. Phys. A,  {\bf 52}, 355203 (2019)
	
	
\end{thebibliography}
\end{document}